\newtheorem{theorem}{Theorem}
\newtheorem{proposition}{Proposition}
\newtheorem{corollary}{Corollary}
\newtheorem{lemma}{Lemma}
\newtheorem{conjecture}{Conjecture}
\crefname{example}{Ex.}{Exs.}
\crefname{theorem}{Thm.}{Thms.}
\crefname{proposition}{Prop.}{Props.}
\crefname{corollary}{Cor.}{Cors.}
\crefname{lemma}{Lm.}{Lms.}
\crefname{observation}{Obs.}{Obss.}
\crefname{definition}{Def.}{Defs.}
\crefname{conjecture}{Conj.}{Conjs.}
\newcommand*{\scost}{\text{cost}}
\newcommand*{\cost}[1]{\text{cost}_{#1}}
\newcommand*{\maxedge}{\text{maxedge}}
\newcommand*{\co}{\lambda}
\def\hpoa {{\sf PoA}}
\def\hpos {{\sf PoS}}
\def\deg {{\text{deg}}}
\title{Schelling Games with Continuous Types}
\author{
Davide Bilò$^1$
\and
Vittorio Bilò$^2$\and
Michelle Döring$^{3}$\and\\
Pascal Lenzner$^{3}$\and
Louise Molitor$^3$\and
Jonas Schmidt$^3$
\affiliations
$^1$University of L'Aquila, L'Aquila, Italy\\
$^2$Universtiy of Salento, Lecce, Italy\\
$^3$Hasso Plattner Institute, Potsdam, Germany\\
\emails
davide.bilo@univaq.it,
vittorio.bilo@unisalento.it,
\{michelle.doering, pascal.lenzner, louise.molitor\}@hpi.de,
jonas.schmidt@student.hpi.uni-potsdam.de
}
\begin{document}

\maketitle

\begin{abstract}
   In most major cities and urban areas, residents form homogeneous neighborhoods
  along ethnic or socioeconomic lines. This phenomenon is widely known as
  residential segregation and has been studied extensively.
  Fifty years ago,
  Schelling proposed a landmark model that explains residential segregation in an elegant
  agent-based way.
  A recent stream of papers analyzed Schelling's model using game-theoretic approaches. However, all these works considered models with a given number of discrete types modeling different ethnic groups.

  We focus on segregation caused by non-categorical attributes, such
  as household income or position in a political left-right spectrum. For this, we consider agent types that can be represented as real numbers.
  This opens up a great variety of reasonable models and, as a proof of concept, we focus on several natural candidates. In particular, we consider agents that evaluate their location by the average
  type-difference or the maximum type-difference to their neighbors, or by having a certain tolerance range for type-values of neighboring agents.
  We study the existence and computation of equilibria and provide bounds on the Price of Anarchy and Stability. Also, we present simulation results that compare our models and shed light on the obtained equilibria for our variants.
\end{abstract}

\section{Introduction}
"Birds of a feather flock together" is an often used proverb to describe \emph{homophily}~\citep{mcpherson2001birds}, i.e., the phenomenon that homogeneous groups are prevalent in society.
The group members might be similar in terms of, for example, their ethnic group, their socioeconomic status, or their political orientation.
Within a city, such groups typically cluster together, which then leads to segregated neighborhoods, called \emph{residential segregation}~\citep{massey1988dimensions}.

Segregated neighborhoods have a strong impact on the socioeconomic prospects~\citep{massey2019american} and on the health of its inhabitants~\citep{acevedo2003residential,williams2016racial}. This explains why residential segregation is widely studied.
Typical models are agent-based and they assume that the agents are partitioned into a given fixed set of \emph{types}, which can be understood as an ethnic group, a trait, or an affiliation. The landmark model of this kind was proposed by~\citet{schelling69,schelling71} roughly fifty years ago. There, agents of two types are placed on the line or a grid and it is assumed that an agent is content with her current location, if at least a $\tau$-fraction of all neighbors are of her type, for some $\tau\in[0,1]$. Discontent agents try to relocate. As a result, large homogeneous neighborhoods eventually form, even if all the agents are tolerant, i.e., if $\tau\leq \frac12$.

However, real-world agents show more complex behavior than predicted by Schelling's two-type model. For example, people might not care about the ethnic group, but they might compare themselves with their neighbors along non-categorical aspects like age, household income, or position in a political left-right spectrum. Given these more complex preferences, the agents cannot be assumed to simply classify their neighbors into friends and enemies.
\begin{figure*}[ht]
    \centering
    \begin{subfigure}{0.24\textwidth}
        \centering
        \includegraphics[width=0.9\textwidth]{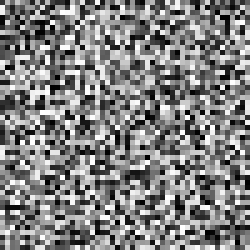}
        \caption{random initial strategy profile\label{fig:initial}}
    \end{subfigure}
    \hfil
    \begin{subfigure}{0.24\textwidth}
        \centering
        \includegraphics[width=0.9\textwidth]{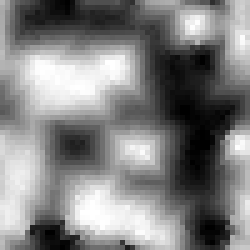}
        \caption{SE for the MDG\label{fig:equimax}}
    \end{subfigure}
    \hfil
    \begin{subfigure}{0.24\textwidth}
        \centering
        \includegraphics[width=0.9\textwidth]{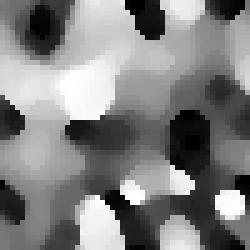}
        \caption{SE for the ADG\label{fig:equiavg}}
    \end{subfigure}
    \hfil
    \begin{subfigure}{0.24\textwidth}
        \centering
        \includegraphics[width=0.9\textwidth]{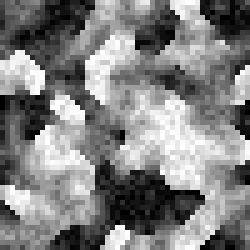}
        \caption{SE for the CG with $\lambda = \frac15$\label{fig:equicut}}
    \end{subfigure}
    \caption{Sample swap equilibria (SE) in different Schelling Games with Continuous Types obtained from the same initial random state \textbf{(a)}, in an 8-regular toroidal grid
    graph of size $50\times 50$\label{fig:equi}. \textbf{(b)}, \textbf{(c)}, and \textbf{(d)} show
    SE for our models reached from \textbf{(a)} via
    improving swaps. Each pixel represents a node with an agent. Colors represent type-values ranging from white (type-value 0) to black (type-value 1).}
\label{fig:swap_comparison}
\end{figure*}
In contrast, the utility of an agent should depend on the respective non-categorical \emph{type-values} of her neighbors. This is in line with recent economics research which reveals that individuals' happiness is relative to a particular peer group. E.g., the reference income hypothesis~\citep{clark1996satisfaction,clark2008relative} states that people compare their income with a reference value, e.g., the mean or median income of their neighborhood~\citep{luttmer2005neighbors,clark2009economic}.

With this paper, we initiate the study of agent-based models for residential segregation that use non-categorical type-values for the agents. This allows for modeling more realistic agent preferences.
Using arbitrary type-values in $[0,1]$ unlocks an entirely new class of game-theoretic models, that we call \emph{Schelling Games with Continuous Types}. As the first steps, we consider three natural behavioral models: agents compare their type-value with the most different or the average type-value in their neighborhood, or they have a tolerance range for the accepted type-value difference. All three variants of the cost function are motivated by plausible real-world behavior. Comparing with the maximum-difference type-value is motivated by considering types as positions in a political left-to-right spectrum, comparing with the average type-value is suggested by the setting where types are household incomes, and the model with a tolerance range is inspired by types being the age of the agents, where agents consider other agents
as similar if they are roughly their age.

As~\autoref{fig:swap_comparison} shows, our models yield very different residential patterns in equilibrium.

\subsection{Model}\label{model}
Given an undirected graph $G = (V, E)$ and a node $v \in V$, we denote the \emph{degree} of $v$ in $G$ as $\deg (v)=|\{u \in V: \{v,u\} \in E\}|$. If every node in $G$ has the same degree $\Delta$, we say that $G$ is $\Delta$-\emph{regular}.
For $x\in \mathbb{N}^+$, let $[x] = \{1,\ldots,x\}$.

A \emph{Schelling Game with Continuous Types} is defined by an undirected connected graph~$G$, $n$ strategic agents and a type function $t: [n] \to [0,1]$ mapping agent $i \in [n]$ to her \emph{type} $t(i) \in [0,1]$. Unless stated otherwise, we assume wlog that $t(i)\leq t(j)$ for $i,j \in [n]$ and $i \leq j$. The \emph{type-distance} $d: [n]^2 \to [0,1]$ between two agents $i$ and $j$ is defined as $d(i,j) = |t(i) - t(j)|$.

An agent's strategy is her location on the graph, i.e., a node of $G$. A \emph{strategy profile} $\sigma$ is an $n$-dimensional vector whose $i$-th entry corresponds to the strategy of the $i$-th agent and where all strategies are pairwise disjoint. Let~$\sigma^{-1}$ be its inverse function, mapping a node $v \in V$ to the agent~$i$ choosing~$v$ as her strategy, with the assumption that $\sigma^{-1}$ is equal to \o{} if~$v$ is empty, i.e., no agent chooses $v$ as her strategy. We denote the set of empty nodes in $\sigma$ as $\mathcal{E}(\sigma) = \{v \in V : \sigma^{-1}(v) = \text{\o{}} \}$. Let $|\mathcal{E}(\sigma)| = e$.~For an agent $i \in [n]$, let the \emph{neighborhood} of $i$ be the set $N_i(\sigma)= \{j \in [n]: \{\sigma(i),\sigma(j)\} \in E\}$ of agents living in the neighborhood of~$\sigma(i)$ in $G$. If $N_i(\sigma) = \emptyset$, we say that $i$ is \emph{isolated}.

In \emph{Swap Schelling Games with Continuous Types}, every node is occupied by exactly one agent, so $n=|V|$.
Agents can change their strategies only by swapping their location with another agent. A \emph{swap} by agents $i$ and $j$ in $\sigma$ yields a new strategy profile $\sigma_{ij}$, which is identical to~$\sigma$ with exchanged $i$-th and $j$-th entries.
As agents are rational, we only consider \emph{profitable swaps} that strictly decrease the individual cost of both involved agents. A strategy profile is a \emph{swap equilibrium (SE)} if it does not admit any
profitable swaps.

In \emph{Jump Schelling Games with Continuous Types}, empty nodes exist, i.e., $n<|V|$, and for every strategy profile~$\sigma$ we have $e=|V|-n$. An agent can change her strategy by jumping to any empty node. Consider agent $i$ on node~$\sigma(i)$. A {\em jump} of agent $i$ to an empty node $v \in \mathcal{E}(\sigma)$ yields a new strategy profile $\sigma_i$, which is identical to~$\sigma$ with the $i$-th entry changed to $v$. Agents only perform \emph{profitable jumps} that strictly decrease their cost. A strategy profile is a \emph{jump equilibrium (JE)} if it does not admit any profitable jumps.

We consider the following three cost models. In \emph{Average Type-Distance Games (ADGs)}, the cost of agent~$i$ in~$\sigma$ is defined as the average distance towards her neighbors, i.e., $\cost{i}(\sigma) = \frac{\sum_{j \in N_i(\sigma)} d(i,j)}{|N_i(\sigma)|}$. In \emph{Maximum Type-Distance Games (MDGs)}, the cost of agent $i$ in $\sigma$ is defined as the maximum distance towards her neighbors, i.e., $\cost{i}(\sigma) = \max_{j \in N_i(\sigma)} d(i,j)$. In \emph{Cutoff Games (CGs)}, given a cutoff parameter $\co\in [0,1]$, let $N^+_i(\sigma)=\{j\in N_i(\sigma):d(i,j)\leq\co\}$ be the set of {\em friends} of agent $i$ in $\sigma$ and $N^-_i(\sigma)=N_i(\sigma)\setminus N^+_i(\sigma)$ be the set of {\em enemies}. The cost $i$ in $\sigma$ is the fraction of enemies in the neighborhood of $i$, i.e., $\cost{i}(\sigma)= \frac{|N_i^-(\sigma)|}{|N_i(\sigma)|}$. The model of CGs is closer to the original Schelling model, i.e., neighbors whose type difference is within the cutoff are considered as \emph{friends}; however, in contrast to previous models, friendship is not transitive.

For all cost models, we consider two possible variants depending on how we define the cost of an isolated agent. Under the {\em unhappy-in-isolation} (UIS) variant, this cost is set to~$1$; under the {\em happy-in-isolation} (HIS) variant, it is set to~$0$. Since we are considering connected graphs, an agent can never be isolated in swap games, so the two variants create a different model only for jump games.
In summary, we obtain nine different games that we denote as X-Y-Z, where X$\in$\{J,S\} stands for the deviation model, either jump (J) or swap (S), Z$\in$\{ADG,MDG,CG\} stands for cost model and, when X = J, Y$\in$\{UIS,HIS\} states which cost is paid in isolation.

We measure the quality of a strategy profile $\sigma$ by its \emph{social cost} $\scost(\sigma) = \sum_{i \in [n]} \cost{i}(\sigma)$ and denote by $\sigma^*$ a {\em social optimum}, i.e., a strategy profile minimizing the social cost\footnote{The social cost can be considered as a segregation measure, since a low social cost in our models means that many agent neighborhoods are very homogeneous, i.e., are strongly segregated.}. The quality of equilibria is measured by the price of anarchy ($\hpoa$) and the price of stability ($\hpos$). The $\hpoa$ of a game $\cal G$ is obtained by comparing the equilibrium with the largest social cost with the social optimum, while the $\hpos$ refers to the equilibrium with the lowest social cost. The $\hpoa$ (resp. $\hpos$) of a class of games $\cal C$ is obtained by taking the worst-case $\hpoa$ (resp. $\hpos$) over all games in the class. Formally, $\hpoa({\cal C})=\sup_{{\cal G}\in{\cal C}}\hpoa({\cal G})$ and $\hpos({\cal C})=\sup_{{\cal G}\in{\cal C}}\hpos({\cal G})$.

A game has the finite improvement property (FIP) if any sequence of improving moves must be finite. For showing this, we employ \emph{ordinal potential functions}, since the FIP is equivalent to the existence of an ordinal potential function~\citep{MS96}.

\subsection{Related Work}
Given the breadth of the research on residential segregation, we focus on recent work from the Artificial Intelligence and Algorithmic Game Theory communities.

Schelling's seminal residential segregation model was formulated as a strategic game by \citet{CLM18}. In their model, agents of two types have a threshold-based utility function and an agent gets maximum utility if for this agent the fraction of same-type neighbors is at least $\tau$. Later, \citet{E+19} extended this model to more than two types and showed that the convergence behavior of improving response dynamics strongly depends on $\tau$. \citet{A+19} focused on the case with $\tau=1$ and proved that equilibrium existence on trees is not guaranteed and that computing socially optimal strategy profiles or equilibria with high social welfare is NP-hard. Also, the authors introduced a new welfare measure that counts the number of agents that have an other-type neighbor, called the degree of integration. For $\tau=1$ also the influence of the underlying graph and of locality was studied~\citep{bilo20} and welfare guarantees have been investigated~\citep{bullinger21}.

A variant where the agent itself is counted in the fraction of same-type neighbors has been introduced in~\citep{KKV21}. Moreover, recently also agents with non-monotone utility functions, in particular, with single-peaked utilities, have been considered in~\citep{BBLM22}.

Closest to our work is another very recent variant, called Tolerance Schelling Games, introduced by \citet{KKV22}.
In this model agents have a discrete type and all $k$ types are ordered according to a given total ordering $\succ$, i.e., $T_1 \succ T_2 \succ \dots \succ T_k$. Agents have tolerance values to agents of other types depending on the number of types in between the two in the given ordering. Specifically, the model uses a tolerance vector $\mathbf{t} = (t_0, \dots, t_{k-1})$ and the tolerance between agents of type $T_i$ and type $T_j$ is equal to $t_{|i-j|}$. In their work, they specifically analyze balanced tolerance Schelling games in which every type has the same number of agents and only consider the jump variant of the game.
The authors show that for every tolerance vector with $t_1 < 1$ there are graphs that do not admit equilibria. Furthermore, they look at \emph{$\alpha$-binary} Tolerance Schelling Games where agents tolerate all other agents of types with at most $\alpha-1$ other types in between in the ordering. For specific values of $\alpha$ and $k$ this game admits at least one equilibrium on trees and grid graphs and they provide algorithms to find such states. Also, they prove high tight asymptotic bounds on the $\hpoa$ and the $\hpos$.

\subsection{Our Contribution} We introduce very general strategic residential segregation models with the decisive new feature that non-categorical types are possible. This allows for modeling more complex and arguably also more realistic agent behavior. Moreover, the power of our models can be seen by noting that they generalize several existing variants. For example, the $k$-type model by \citet{A+19} with $k=2$ can be captured by both the ADG and the CG, by setting one type to value 0 and the other to value 1 (and $\lambda<1$). Also, the $k$-type model by~\cite{E+19} for $\tau=1$ can be modeled via the CG with suitable type-values and low enough $\lambda$. Moreover, also the $\alpha$-binary Tolerance Schelling Game by~\citet{KKV22} is captured by the CG, with equally spaced type-values and a suitably chosen cutoff $\lambda$\footnote{In the other direction, the J-UIS-ADG can be represented as a Tolerance Schelling Game by using enough types, with the tolerance of two types $x,y\in [0,1]$ being $1-|r-s|$. However, irrational type-values cannot be translated.}.

Besides generalizing several known models, our results go beyond what was known for the special cases. In particular, we demonstrate with the MDG, that our model allows for drastically different games that behave very differently, compared to previously considered variants. For the S-MDG, not only do equilibria always exist, independently of the underlying graph, but we are also able to construct these states very efficiently. The same holds for the J-HIS-MDG, the ADG, and the CG on specific graph classes. Also, the HIS-assumption has not been studied before. See~\Cref{tbl:result_overview} for an overview over our equilibrium existence results.

Besides equilibrium existence, we also studied the computational complexity of computing the social optimum and of finding states that minimize the maximum type-difference of neighbors. Both problems are NP-hard, see~\Cref{apx:complexity}. Moreover, we provide extensive results on the Price of Anarchy and the Price of Stability. In essence, the PoA is extremely high for all variants but the PoS is very low on paths or regular graphs. See~\Cref{sec:quality} for a high-level overview and~\Cref{apx:quality} for all the details.
Finally, to shed light on the equilibrium structure and properties of our game variants, we present simulation experiments in~\Cref{sec:experiments}. We find that the MDG and the ADG yield strongly segregated equilibria, while the CG produces more integrated equilibria.

All omitted details can be found in the appendix.

\begin{table*}[h]
    \centering
    \footnotesize
    \resizebox{\textwidth}{!}{\begin{tabular}{@{}lrrrrrrrrr}
        \toprule
        \multirow{2}{4em}{Equilibrium existence}& \multicolumn{3}{c}{MDG} & \multicolumn{3}{c}{ADG} & \multicolumn{3}{c}{CG} \\
        \cmidrule(l){2-4} \cmidrule(l){5-7}\cmidrule(l){8-10}
        & path & regular & general & path & regular & general & path & regular & general \\
        \midrule
        Swap
        &  $\checkmark$$\checkmark$ (\cref{thm:Swap_MDG_Comp})& $\checkmark$$\checkmark$ (\cref{thm:Swap_MDG_Comp}) & $\checkmark$$\checkmark$ (\cref{thm:Swap_MDG_Comp})
        & $\checkmark$$\checkmark$ (\cref{PoS-AVG-path}) & $\checkmark$ (\cref{thm:Swap_ADG_Ex}) & $\times$ (\cref{from_Oxford})
        & $\checkmark$$\checkmark$ (\cref{thm:Swap_CG_Conv_almost_regular}) & $\checkmark$$\checkmark$ (\cref{thm:Swap_CG_Conv_almost_regular}) & $\times$ (\cref{from_Oxford})\\
        Jump
        &  & $(\times)$ (\cref{thm:Jump_MDG_Ex_reg}) & $(\times)$ (\cref{thm:Jump_MDG_Ex_reg})
        &  & $\times$ (\cref{thm:Jump_MDG_Ex_reg}) & $\times$ (\cref{from_Oxford})
        &  &  & $\times$ (\cref{from_Oxford}) \\
        Jump HIS
        &  $\checkmark$$\checkmark$(\cref{jump-path}) & ($\checkmark$)$\checkmark$ (\cref{thm:Jump_HISMDG_Ex})& ($\checkmark$)$\checkmark$(\cref{thm:Jump_HISMDG_Ex})
        & $\checkmark$$\checkmark$ (\cref{jump-path-ext}) &  &
        & $\checkmark$$\checkmark$ (\cref{jump-path-ext}) &  &  \\
        \bottomrule\\
    \end{tabular}}
\caption{Results overview. The symbol "$\checkmark$" means that equilibria are guaranteed to exist, the symbol "$\checkmark\checkmark$" means an equilibrium always exists and can be computed efficiently, the symbol "$(\checkmark)\checkmark$" means that an equilibrium always exists and it can be computed efficiently in some specific cases,  the symbol "$\times$" means that equilibria are not guaranteed to exist, and the symbol $(\times)$ means that, although equilibria are not guaranteed to exist in general, there are some families of instances for which they do exist. This is the case for the J-UIS-MDG on general graphs, which has the FIP when the number $e$ of empty nodes is strictly smaller than the minimum degree of the graph (\cref{thm:Jump_HISMDG_Ex}).
A JE for the J-HIS-MDG can be computed in polynomial time in any graph that contains $K_{2,e}$ as a subgraph (\cref{thm:Jump_HISMDG_Comp_path}). As a byproduct, for the J-HIS-MDG, we can compute a JE in polynomial time when $e\in\{1,2\}$ (\cref{cor:J_HIS_MDG_one_empty_node} and~\Cref{thm:Jump_HISMDG_Comp}) or when the graph is dense (\cref{cor:J_HIS_MDG_dense_graphs}).
}\label{tbl:result_overview}
\end{table*}

\section{Existence and Construction of Equilibria}\label{sec:existence}

First of all, we observe that whenever the type function $t$ is such that $t(0)=1$, $t(1)=1$ and $t(i)\in\{0,1\}$ for each $i\in [n]$, ADGs and CGs boil down to classical Swap or Jump Schelling games with two types considered in \cite{A+19} for which non-existence of equilibria is known in general graphs. Hence, we immediately get the following result.
\begin{proposition}\label{from_Oxford}
S-ADGs, S-CGs, J-UIS-ADGs, and J-UIS-CGs may not have equilibria when played on general graphs.
\end{proposition}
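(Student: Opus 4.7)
The plan is to argue by a direct reduction to the classical two-type Schelling games analyzed by~\citet{A+19}, for which non-existence of swap and jump equilibria on general graphs is already known. I would fix any type function $t:[n]\to\{0,1\}$, so that the agent set is partitioned into two "colors". Then for every pair of agents $i,j$ the type-distance satisfies $d(i,j)=|t(i)-t(j)|\in\{0,1\}$, with $d(i,j)=1$ iff $i$ and $j$ are differently colored. Consequently, for every non-isolated agent $i$ in an ADG, $\cost{i}(\sigma)$ is exactly the fraction of differently-colored neighbors of $i$, which is precisely the standard cost used in the two-type model. Under the UIS convention the cost of an isolated agent is $1$, matching the usual convention in~\cite{A+19}, so the J-UIS-ADG reduces to the jump two-type Schelling game and the S-ADG reduces to the swap version.

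For CGs I would further fix any cutoff $\co\in[0,1)$. With binary types, for every neighbor $j$ of $i$ we have $d(i,j)=0\le\co$ if $t(i)=t(j)$ and $d(i,j)=1>\co$ otherwise; thus $N_i^+(\sigma)$ is exactly the set of same-type neighbors and $N_i^-(\sigma)$ is exactly the set of opposite-type neighbors. Therefore the CG cost coincides with the two-type Schelling cost on the nose, in both the swap variant and the J-UIS variant. In either game the strategy sets, the set of admissible deviations (swaps or jumps), and the cost values all coincide with those of the corresponding two-type Schelling game on the same underlying graph, hence the sets of equilibria also coincide.

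Given this equivalence, I would simply import the non-existence constructions from~\citet{A+19}: a graph known to admit no two-type swap equilibrium yields, via the binary type assignment above, an S-ADG and an S-CG (for any $\co\in[0,1)$) without a swap equilibrium; a graph known to admit no two-type jump equilibrium under UIS analogously yields a J-UIS-ADG and a J-UIS-CG without a jump equilibrium. The only real step of the argument, beyond writing down the binary $t$, is verifying the pointwise identity of the cost functions, which is immediate from $d(i,j)\in\{0,1\}$. The main (and essentially the only) obstacle is making sure to invoke the correct flavor of the two-type non-existence result, namely the swap version and the UIS jump version, since the HIS variant is genuinely different and is handled separately in later sections of the paper.
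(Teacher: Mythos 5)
Your proposal is correct and follows essentially the same route as the paper: assign binary type-values in $\{0,1\}$, observe that the ADG cost (and the CG cost for any $\co<1$) then coincides with the fraction of opposite-type neighbors as in the two-type model of \citet{A+19}, and import the known non-existence constructions for the swap and UIS-jump variants. The paper states this as a one-line observation; your write-up just makes the pointwise cost identification explicit.
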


\subsection{Swap Games}
Despite the negative result from \autoref{from_Oxford}, we show that a SE always exists when considering S-MDGs and it can even be efficiently computed.

\subsubsection{Maximum Type-Distance Game}
 The following lemma states that an improving swap never hurts those agents who end up paying a sufficiently large cost after the swap.
\begin{lemma}\label{lma:maximum}
Let agents $i, j \in [n]$ perform a profitable swap in a S-MDG and let $k$ be an agent with $\cost{k}(\sigma_{ij}) \ge\max\{\cost{i}(\sigma),\cost{j}(\sigma)\}$. Then, $\cost{k}(\sigma) \ge\cost{k}(\sigma_{ij}).$
\end{lemma}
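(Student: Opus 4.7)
The plan is to carry out a short case analysis on the neighbor of $k$ in $\sigma_{ij}$ that realizes $k$'s maximum type-distance. First I would observe that $k\notin\{i,j\}$: if $k=i$ (the case $k=j$ is identical), then the hypothesis $\cost{k}(\sigma_{ij})\ge \cost{i}(\sigma)$ would directly contradict strict profitability of the swap for $i$. Consequently $k$ does not move, its graph neighborhood is the same in $\sigma$ and $\sigma_{ij}$, and only the identities of the agents sitting at $\sigma(i)$ and $\sigma(j)$ (if either is adjacent to $\sigma(k)$) can have changed.

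Next I would pick any agent $\ell\in N_k(\sigma_{ij})$ with $d(k,\ell)=\cost{k}(\sigma_{ij})$ and split on $\ell\notin\{i,j\}$ versus $\ell\in\{i,j\}$. In the first case, $\ell$ occupies the same node in both profiles, so $\ell\in N_k(\sigma)$ as well, and $\cost{k}(\sigma)\ge d(k,\ell)=\cost{k}(\sigma_{ij})$ follows immediately from the definition of the MDG cost.

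For the other case, by symmetry it is enough to treat $\ell=i$. The key (small) obstacle is a translation of adjacencies: $i\in N_k(\sigma_{ij})$ means $\sigma_{ij}(i)=\sigma(j)$ is a graph neighbor of $\sigma(k)$, which is equivalent to $k\in N_i(\sigma_{ij})$. Hence $\cost{i}(\sigma_{ij})\ge d(i,k)=\cost{k}(\sigma_{ij})\ge \cost{i}(\sigma)$, where the last inequality is part of the hypothesis. This contradicts strict profitability of the swap for $i$, so this subcase cannot occur and the lemma reduces to the first case.
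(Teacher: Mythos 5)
Your proof is correct and follows essentially the same route as the paper's: identify the neighbor realizing $k$'s cost in $\sigma_{ij}$, and if it is $i$ (or $j$), use symmetry of adjacency to get $\cost{i}(\sigma_{ij})\ge d(k,i)=\cost{k}(\sigma_{ij})\ge\cost{i}(\sigma)$, contradicting profitability. Your version is slightly more explicit about the edge cases (ruling out $k\in\{i,j\}$ and handling a maximizer outside $\{i,j\}$ directly), which the paper leaves implicit.
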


\begin{proof}
Assume the cost of agent $k$ increases. Since the cost of $k$ changes, agent $i$ or~$j$ must belong to $N_k(\sigma_{ij})$ and must be crucial for
the new cost of $k$. Assume wlog that agent~$i$ is responsible for the increased cost, so $\cost{k}(\sigma_{ij}) = d(k,i)$. Since $i\in N_k(\sigma_{ij})$, we get $\cost{i}(\sigma_{ij}) \ge d(k,i)$. Thus, we obtain $\cost{i}(\sigma_{ij}) \ge \cost{k}(\sigma_{ij}) \ge\cost{i}(\sigma)$ which contradicts the assumption that agent $i$ performs a profitable swap.
\end{proof}

We show that SE exist for S-MDGs since the FIP holds.

\begin{proposition} \label{thm:Swap_MDG_Ex}
The S-MDG has the FIP.
\end{proposition}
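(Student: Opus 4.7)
My plan is to exhibit an ordinal potential function for the S-MDG; by the equivalence recalled in \Cref{model}, this immediately yields the FIP. Motivated by \Cref{lma:maximum}, the candidate I would use is $\Phi(\sigma) := (c_{(1)},c_{(2)},\dots,c_{(n)})$, the vector of individual costs $\cost{k}(\sigma)$ sorted in non-increasing order, compared between profiles by the lexicographic order. Every cost in an S-MDG has the form $d(k,\ell)$ for some pair of agents, so on a fixed instance the image of $\Phi$ lies in a finite set and the lex order on this image is well-founded; a strict lex decrease of $\Phi$ along each profitable swap is therefore enough.

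To verify this decrease, I would fix a profitable swap of $i$ and $j$ in $\sigma$ and set $M := \max\{\cost{i}(\sigma),\cost{j}(\sigma)\}$. Profitability gives $\cost{i}(\sigma_{ij}), \cost{j}(\sigma_{ij}) < M$. \Cref{lma:maximum} then says that every agent $k$ with $\cost{k}(\sigma_{ij}) \ge M$ satisfies $\cost{k}(\sigma) \ge \cost{k}(\sigma_{ij})$, and in particular has pre-swap cost $\ge M$ and lies outside $\{i,j\}$. Writing $N_v(\tau) := |\{k : \cost{k}(\tau) \ge v\}|$, this yields $N_v(\sigma_{ij}) \le N_v(\sigma)$ for every $v \ge M$; moreover the inequality is strict at $v=M$, since the agent in $\{i,j\}$ attaining $M$ leaves the set $\{k : \cost{k}(\cdot) \ge M\}$ while no agent enters it.

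The only delicate step is converting these tail-count inequalities into a strict lex decrease of the sorted vector, and this is where I expect the bulk of the writing effort to go. I would argue by contradiction: assume $p$ is the first position at which the sorted vectors $\Phi(\sigma)=(c_q)$ and $\Phi(\sigma_{ij})=(c'_q)$ disagree and $c'_p > c_p$, and set $v := c'_p$. Since $c_p < v$, we have $N_v(\sigma) < p \le N_v(\sigma_{ij})$, which contradicts $N_v(\sigma_{ij}) \le N_v(\sigma)$ when $v > M$ and contradicts the strict inequality $N_M(\sigma_{ij}) < N_M(\sigma)$ when $v = M$. In the remaining case $v < M$, the equalities $c_q = c'_q$ for $q < p$ together with $c_p, c'_p < M$ force $N_M(\sigma) = N_M(\sigma_{ij})$, again contradicting the strict inequality at $M$. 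The same strictness rules out $\Phi(\sigma_{ij}) = \Phi(\sigma)$ entirely, which finishes the argument.
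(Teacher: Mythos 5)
Your proposal is correct and takes essentially the same approach as the paper: the paper's proof is the one-line observation that, by \Cref{lma:maximum}, the non-increasingly sorted cost vector decreases lexicographically after every improving swap. Your write-up simply fills in the tail-count argument that the paper leaves implicit, and does so correctly.
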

\begin{proof}
By~\autoref{lma:maximum}, the lexicographical order of the $n$-dimensional vector which sorts the agents' costs in a non-increasing way decreases after each improving swap.
\end{proof}

  \autoref{thm:Swap_MDG_Ex} implies that any sequence of improving swaps converges to a SE. However, as there is no guarantee of polynomial-time convergence, this result does not yield a polynomial-time algorithm for computing a SE. The following result shows how to efficiently construct a SE. To this end, we need some additional notation.
  Given a strategy profile $\sigma$ and an agent $i$, we define the {\em leftmost neighbor} of $i$ the agent $l_{\sigma}(i)=\min_{j\in N_i(\sigma)}j$ of smallest index among the neighbors of $i$ in $\sigma$ and the {\em rightmost neighbor} of~$i$ the agent $r_{\sigma}(i)=\max_{j\in N_i(\sigma)}j$ of largest index among the neighbors of $i$ in $\sigma$. As $G$ is connected, $\sigma$ is injective, and there are $|V|$ agents, the leftmost and rightmost neighbors always exist. Moreover, by definition, we have $\cost{i}(\sigma)=\max\{d(i,l_{\sigma}(i)),d(r_{\sigma}(i),i)\}$.

\begin{theorem} \label{thm:Swap_MDG_Comp}
A SE for the S-MDG can be computed in $O(|E|)$.
\end{theorem}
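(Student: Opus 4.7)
The plan is to describe a one-pass construction whose output is a SE and then verify correctness. Pick any vertex $v_1 \in V$ and run BFS from $v_1$, yielding a visit order $v_1, v_2, \ldots, v_n$; connectivity ensures every vertex is reached. Define $\sigma$ by placing agent $i$ at $v_i$. Since agents are indexed so that $t(1) \le t(2) \le \cdots \le t(n)$, consecutive indices carry close type-values, and BFS ensures that for every $k \ge 2$ the vertex $v_k$ has a graph-neighbor $v_{k'}$ with $k' < k$ (its BFS parent), so that $l_\sigma(k) < k$. The running time is $O(|V| + |E|) = O(|E|)$ because $G$ is connected.

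To prove that $\sigma$ admits no profitable swap, I would argue by contradiction. Assume agents $i < j$ execute one, so $\cost{i}(\sigma_{ij}) < \cost{i}(\sigma)$ and $\cost{j}(\sigma_{ij}) < \cost{j}(\sigma)$. Writing each cost as $\cost{k}(\sigma) = \max\{t(k) - t(l_\sigma(k)), t(r_\sigma(k)) - t(k)\}$, I would track how the index sets of the neighborhoods of $v_i$ and $v_j$ interchange under the swap, splitting on whether $\{v_i, v_j\} \in E$ (in which case agents $i$ and $j$ become each other's neighbors after the swap, altering the index sets by a single element) or not (in which case the two index sets are essentially transposed). Using the BFS invariant $l_\sigma(k) < k$ to bound the leftmost neighbor of each agent, and applying Lemma~\ref{lma:maximum} to the agent whose post-swap cost is the larger of the two, I expect to force a contradiction with the strict improvement of both agents.

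The main obstacle is closing the case analysis. For each of the two adjacency cases there are sub-cases depending on whether each agent's cost is realized by her leftmost or her rightmost neighbor, and the BFS parent invariant only controls leftmost indices directly. If the plain BFS order proves insufficient, a natural refinement is to follow each expansion step with a tiebreak that always picks an empty vertex currently incident to the smallest maximum type-difference against the already-placed boundary; this keeps the runtime in $O(|E|)$ via standard priority-based BFS implementations while restoring enough structural control on rightmost neighbors to complete the argument.
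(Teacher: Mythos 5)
Your construction is exactly the paper's (place agent $i$ on the $i$-th vertex in BFS order), and the runtime analysis is fine, but the correctness argument has a genuine gap that you yourself flag: you never close the case analysis, and the invariant you extract from BFS is too weak to do so. You only use that every non-root vertex has a BFS-earlier neighbor, i.e., $l_\sigma(k)<k$. That property holds for essentially any search order (e.g., DFS) and does not control swaps. The ingredient that actually makes the paper's case analysis go through is a much stronger structural fact about BFS: since every non-tree edge joins vertices of the same level or of consecutive levels, and no non-tree edge can reach a vertex to the left of a vertex's parent, the \emph{minimum-index} neighbor of every non-root vertex is precisely its BFS parent. This yields monotonicity of leftmost neighbors, $l_\sigma(i)\le l_\sigma(j)$ for $1<i<j$, which is what forces the contradiction: for $j$ to gain, her cost must be realized on the right and exceed $d(j,l_\sigma(i))$; this forces $i$'s cost to also be realized on the right with $r_\sigma(i)>r_\sigma(j)$; but then $j$ inherits $r_\sigma(i)$ after the swap and pays more than before. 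Without identifying the parent-is-leftmost-neighbor property, the sub-cases you enumerate (cost realized on the left vs.\ right, $\{v_i,v_j\}\in E$ or not) do not resolve.

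Two further points. First, \Cref{lma:maximum} cannot play the role you assign it: its hypothesis for $k\in\{i,j\}$ is $\cost{k}(\sigma_{ij})\ge\max\{\cost{i}(\sigma),\cost{j}(\sigma)\}\ge\cost{k}(\sigma)$, which already contradicts profitability of the swap for $k$, so applying it to one of the two swapping agents yields nothing beyond what you assumed; that lemma is the engine of the potential-function argument (\Cref{thm:Swap_MDG_Ex}), not of the equilibrium verification here. Second, your fallback ``priority-based BFS'' tiebreak is left entirely unverified and is not needed: plain BFS order suffices once the parent property is in hand. You should also handle agent $1$ at the root separately, since she has no leftmost-neighbor-equals-parent statement and the paper devotes a dedicated case to her.
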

\begin{proof}
For a given graph $G$ defining an S-MDG, let $\tau$ be an arbitrary node of $G$ and let $T$ be a breadth-first search (BFS) tree of $G$ rooted at $\tau$. This tree exists as $G$ is connected. Number the nodes of $T$ from $1$ to $n$ in a natural way, that is, in increasing order of levels and proceedings from left to right within the same level. Let $\sigma$ be the strategy profile such that, for every $i\in [n]$, the $i$-th agent is assigned to the $i$-th node.

A BFS tree of a connected graph can be computed in $O(|E|)$. So, $\sigma$ can be computed in $O(|E|+n)=O(|E|)$. For a node $u$, let~$p(u)$ denote $u$'s parent in $T$. It is well known that a BFS tree satisfies the following properties: (1) every non-tree edge of $G$ only connects nodes of the same level or nodes of two consecutive levels; (2) there cannot be a non-tree edge connecting a node $u$ to a node at the left of~$p(u)$.
Exploiting these claims, we immediately obtain that~$\sigma$ satisfies the following property (p1):
 for every agent $i$ not assigned to $\tau$, $l_{\sigma}(i)=\sigma^{-1}(p(\sigma(i)))$, i.e., the leftmost agent of~$i$ is the agent assigned to the parent of the node to which $i$ is assigned; so, for every $1<i<j$, we have $l_{\sigma}(i)\leq l_{\sigma}(j)$.

We now show that $\sigma$ is a SE. To this end, we prove that for any $i\in [n]$, agent $i$ has no incentive in swapping her position with agent $j$, for every $j>i$.

Let us first consider agent $1$ which is assigned to~$\tau$. In this case, we have $\cost1(\sigma)=d(r_{\sigma}(1),1)=d(\deg(\tau)+1,1)$. Agent $1$ will be interested in swapping with $j>1$ only if~$j$ is surrounded by agents of index smaller than $\deg(\tau)+1$, which holds only if $\sigma(j)$ is adjacent to nodes belonging to either level $0$ or level $1$ of $T$, i.e., only if $\sigma(j)$ is either a child or a nephew of $\tau$ in $T$, with no edges (either tree and non-tree ones) towards nodes of level $\ell\geq 2$. If $\sigma(j)$ is a child of $\tau$, the leftmost neighbor of $j$ does not change after the swap (it is $1$), while, as $\sigma(j)$ is not adjacent to nodes of a level larger than~$1$, the rightmost neighbor of $j$ does not decrease, which prevents~$j$ from swapping. So, $\sigma(j)$ can only be a nephew of~$\tau$. In this case, as $\sigma(j)$ can only be adjacent to nodes of level~$1$, the leftmost neighbor of $j$ does not increase after the swap, while the rightmost one does not decrease, again preventing~$j$ from swapping. So, agent $1$ is not interested in swapping with any other agent.

Now, consider a swap between agent $i>1$ and agent $j>i$. As, by property (p1), $l_{\sigma}(j)\geq l_{\sigma}(i)$, in order for $j$ to be willing to swap, $r_{\sigma}(j)$ must be such that $\cost j(\sigma)=d(r_{\sigma}(j),j)$ and $d(r_{\sigma}(j),j)>d(j,l_{\sigma}(i))$. Now, as $d(r_{\sigma}(j),j)>d(j,l_{\sigma}(i))$ implies that $d(r_{\sigma}(j),i)>d(i,l_{\sigma}(i))$, in order for~$i$ to be willing to swap, $r_{\sigma}(i)$ must be such that $\cost i(\sigma)=d(r_{\sigma}(i),i)$ and $r_{\sigma}(i)>r_{\sigma}(j)$. But this implies that, after the swap, $j$ pays at least $d(r_{\sigma}(i),j)>d(r_{\sigma}(j),j)=\cost j(\sigma)$ preventing $j$ from swapping. So, agent $i$ is not interested in swapping with any other subsequent agent.
\end{proof}

  \subsubsection{Average Type-Distance Games and Cutoff Games}
As stated by~\autoref{from_Oxford}, equilibria are not guaranteed to exist for these games in general topologies. For games played on $\Delta$-regular graphs, however, convergence to  equilibria is recovered as shown in the following theorems.

 \begin{restatable}{proposition}{propthree}
 \label{thm:Swap_ADG_Ex}
			The S-ADG on $\Delta$-regular graphs has the FIP.
        \end{restatable}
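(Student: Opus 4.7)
My plan is to exhibit an exact potential function for the S-ADG on $\Delta$-regular graphs and then invoke finiteness of the strategy space. The natural candidate, made possible by $\Delta$-regularity, is the total edge-wise type-distance
\[
\Phi(\sigma) \;=\; \sum_{\{u,v\}\in E} d\bigl(\sigma^{-1}(u),\sigma^{-1}(v)\bigr).
\]
The reason to hope this works is that, when every node has the same degree, each individual cost simplifies to $\cost{i}(\sigma)=\frac{1}{\Delta}\sum_{j\in N_i(\sigma)}d(i,j)$, so the denominator is a fixed constant independent of $\sigma$, and sums over edges relate cleanly to individual costs (indeed $\scost(\sigma)=\frac{2}{\Delta}\Phi(\sigma)$).

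The key technical step I would carry out is the identity
\[
\Phi(\sigma_{ij})-\Phi(\sigma) \;=\; \Delta\Bigl[(\cost{i}(\sigma_{ij})+\cost{j}(\sigma_{ij}))-(\cost{i}(\sigma)+\cost{j}(\sigma))\Bigr],
\]
valid for every swap between agents $i$ and $j$. Writing $u=\sigma(i)$ and $v=\sigma(j)$, the only edges whose endpoint-agents change between $\sigma$ and $\sigma_{ij}$ are those incident to $u$ or to $v$; the edge $\{u,v\}$, if present, contributes $d(i,j)$ both before and after. For the remaining edges at $u$ (respectively $v$), the contribution switches from $d(i,\sigma^{-1}(w))$ to $d(j,\sigma^{-1}(w))$ (respectively the other way around). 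Regrouping these terms from the edge side matches, term for term, what one obtains by expanding the right-hand side from the agent side, using the fact that each individual-cost change is a telescoping sum over $N(u)\setminus\{v\}$ and $N(v)\setminus\{u\}$.

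Once the identity is established, a profitable swap strictly decreases both $\cost{i}$ and $\cost{j}$, hence the right-hand side is negative and $\Phi$ strictly decreases at every profitable swap. Since there are only finitely many strategy profiles, any improving sequence must terminate, giving FIP. The one piece of bookkeeping I expect to be the main obstacle is handling the case in which $u$ and $v$ are already adjacent in $G$: then $i$ and $j$ are each other's neighbors throughout, and one must check that the shared $d(i,j)$ contribution to each of $\cost{i}(\sigma)$, $\cost{i}(\sigma_{ij})$, $\cost{j}(\sigma)$, $\cost{j}(\sigma_{ij})$ cancels consistently on both sides of the identity. Beyond this case analysis, the argument is a pure regrouping of edge contributions.
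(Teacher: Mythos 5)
Your proposal is correct and follows essentially the same route as the paper: the same edge-sum potential $\Phi(\sigma)=\sum_{\{u,v\}\in E} d(\sigma^{-1}(u),\sigma^{-1}(v))$, the same use of $\Delta$-regularity to turn average costs into sums so that a profitable swap strictly decreases $\Phi$, and the same appeal to finiteness of the strategy space. Your explicit check that the $d(i,j)$ contribution cancels when the swapping agents are adjacent is a detail the paper's proof leaves implicit, but it does not change the argument.
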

		\begin{proof}[Proof Sketch]
			We show this proposition using a more general version of the ordinal potential function
			introduced by \cite{CLM18}. Let $G$ be a $\Delta$-regular graph. We define our ordinal potential function as the sum of distances over all edges in the graph according to a given strategy profile
			$\sigma$: \[\Phi(\sigma) =
			\sum_{\{u, v\} \in E} d(\sigma^{-1}(u), \sigma^{-1}(v)) = 2 \cdot
			\scost(\sigma).\] This is an ordinal potential function.
		\end{proof}

A graph $G$ is \emph{almost $\Delta$-regular} if every of its nodes has degree in $\{\Delta,\Delta+1\}$. We show that polynomial-time convergence for S-CG is also guaranteed on almost $\Delta$-regular graphs. Note that paths are almost $2$-regular graphs.

\begin{restatable}{proposition}{propfour}
\label{thm:Swap_CG_Ex_almostreg}
The S-CG on almost $\Delta$-regular graphs has the FIP.
\end{restatable}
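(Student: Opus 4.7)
The plan is to exhibit an ordinal potential function by counting enemy edges. Define
\[
\Phi(\sigma) \;=\; \bigl|\{\{u,v\}\in E : d(\sigma^{-1}(u),\sigma^{-1}(v))>\lambda\}\bigr|,
\]
the number of enemy edges under $\sigma$. Since $\Phi$ takes integer values in $\{0,1,\dots,|E|\}$, the FIP will follow once I show that every profitable swap strictly decreases $\Phi$.

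First I would observe that when agents $i$ and $j$ swap, only edges incident to $\sigma(i)$ or $\sigma(j)$ can change status, and the edge $\{\sigma(i),\sigma(j)\}$ itself (if present) is unchanged since its endpoints still host the pair $\{i,j\}$. A direct accounting then yields the clean identity
\[
\Phi(\sigma_{ij})-\Phi(\sigma) \;=\; \bigl(|N^-_i(\sigma_{ij})|+|N^-_j(\sigma_{ij})|\bigr)-\bigl(|N^-_i(\sigma)|+|N^-_j(\sigma)|\bigr),
\]
because the number of enemy edges incident to $\sigma(i)$ equals $|N^-_i(\sigma)|$ in the old profile and $|N^-_j(\sigma_{ij})|$ in the new one, and symmetrically at $\sigma(j)$.

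What remains is a short case analysis on the pair of degrees $\deg(\sigma(i)),\deg(\sigma(j))\in\{\Delta,\Delta+1\}$. When the two degrees coincide, both profitability inequalities reduce to $|N^-_i(\sigma_{ij})|<|N^-_i(\sigma)|$ and $|N^-_j(\sigma_{ij})|<|N^-_j(\sigma)|$, so $\Phi$ drops by at least $2$. The main obstacle is the asymmetric case, say $\deg(\sigma(i))=\Delta+1$ and $\deg(\sigma(j))=\Delta$: writing $x_i,x_j,y_i,y_j$ for the old and new enemy counts, profitability for $i$ rewrites as $y_i(\Delta+1)<x_i\Delta$ and forces $y_i\le x_i-1$, whereas profitability for $j$ only gives $y_j\Delta<x_j(\Delta+1)$, i.e.\ $y_j<x_j+x_j/\Delta$. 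The subtle step is to turn this strict inequality into the integer bound $y_j\le x_j$; this works because $x_j\le\Delta$ forces $x_j/\Delta\le 1$, so integrality of $y_j$ yields $y_j\le x_j$ (the borderline $x_j=\Delta$ also gives $y_j<\Delta+1$). Combining, $\Phi(\sigma_{ij})-\Phi(\sigma)\le -1$, as required.
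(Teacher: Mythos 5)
Your proof is correct and takes essentially the same approach as the paper: since every node is occupied in a swap game, your enemy-edge count is exactly $|E|$ minus the paper's potential (the number of monochromatic, i.e.\ friend, edges), and your case analysis on the degree pair $\{\Delta,\Delta+1\}$ together with the integrality argument in the asymmetric case mirrors the paper's argument step for step. If anything, you are slightly more careful than the paper in explicitly noting that the edge $\{\sigma(i),\sigma(j)\}$, if present, keeps its status and cancels in the accounting.
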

\begin{proof}[Proof Sketch]
Let $G$ be an almost $\Delta$-regular graph.
Consider the function $\Phi(\sigma) = \lvert \{u,v\}\in E \colon \sigma^{-1}(u)\in N_{\sigma^{-1}(v)}^+(\sigma)\}\lvert$ which counts the number of \textit{monochromatic} edges, i.e., all edges whose endpoints are occupied by agents who are friends.
We prove the claim by showing that $\Phi$ is an ordinal potential function. See \Cref{apx:existence} for details.
\end{proof}

  \begin{theorem} \label{thm:Swap_CG_Conv_almost_regular}
			The convergence time for the S-CG on almost $\Delta$-regular graphs is $O(\Delta n)$.
		\end{theorem}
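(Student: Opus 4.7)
The plan is to leverage the ordinal potential function $\Phi(\sigma) = |\{\{u,v\}\in E : \sigma^{-1}(u)\in N^+_{\sigma^{-1}(v)}(\sigma)\}|$ from \Cref{thm:Swap_CG_Ex_almostreg}, which counts the number of monochromatic (friend-friend) edges of $G$ under the profile $\sigma$. Since $\Phi$ takes values in the integers and is bounded above by $|E|$, it suffices to show that each profitable swap increases $\Phi$ by a strictly positive integer, hence by at least $1$. A swap sequence then has length at most $|E|$, and since $G$ is almost $\Delta$-regular we have $|E|\le \frac{(\Delta+1)n}{2}=O(\Delta n)$.

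First I would restate the relevant properties of $\Phi$ established in the proof of \Cref{thm:Swap_CG_Ex_almostreg}: namely, that whenever agents $i$ and $j$ perform a profitable swap transforming $\sigma$ into $\sigma_{ij}$, one has $\Phi(\sigma_{ij})>\Phi(\sigma)$. This is the ordinal-potential property, whose verification is the content of that earlier proposition and which I may assume here. The only new ingredient needed for the convergence-time bound is the observation that $\Phi$ takes integer values, so any strict increase is by at least $1$.

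Next I would bound the maximum value of $\Phi$. By definition $\Phi(\sigma)\le|E|$, and since every node of $G$ has degree in $\{\Delta,\Delta+1\}$ we obtain $|E|\le \frac{(\Delta+1)n}{2}$. Combining this with $\Phi(\sigma)\ge 0$ for every $\sigma$, the length of any sequence of profitable swaps is at most $|E|=O(\Delta n)$, which yields the claimed convergence time.

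There is no serious obstacle here beyond making sure that $\Phi$ really is integer-valued on the sequence of profiles arising from swaps (which is immediate from its definition as a count of edges) and that the ordinal property transfers into a bound in integer units. The core work has already been done in \Cref{thm:Swap_CG_Ex_almostreg}; the present theorem is the quantitative corollary obtained by counting how many times an integer-valued potential, bounded above by $\frac{(\Delta+1)n}{2}$, can strictly increase.
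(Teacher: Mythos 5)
Your proposal is correct and follows essentially the same route as the paper: both use the monochromatic-edge potential $\Phi$ from \Cref{thm:Swap_CG_Ex_almostreg}, observe that it is integer-valued and bounded by $O(\Delta n)$, and conclude that at most $O(\Delta n)$ improving swaps can occur. Your bound of $|E|\le\frac{(\Delta+1)n}{2}$ is marginally tighter than the paper's stated bound of $(\Delta+1)n$, but the asymptotics and the argument are identical.
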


		\begin{proof}
		  In the proof of~\autoref{thm:Swap_CG_Ex_almostreg} we showed that the function $\Phi(\sigma) = \lvert \{u,v\}\in E \colon \sigma^{-1}(u)\in N_{\sigma^{-1}(v)}^+(\sigma)\}\lvert$ is an ordinal potential function that
			maps strategy profiles to integer values from $0$ to at most $(\Delta + 1)n$. As the potential function increases with
			every improving swap, there can be at most $O(\Delta n)$
			improving swaps to any equilibrium.
		\end{proof}

\subsection{Jump Games}
For jump games, stability is harder to achieve and we can prove positive results mostly only under the HIS variant.

  \subsubsection{Maximum Type-Distance Games}

  For these games, the happy-in-isolation assumption makes a huge difference. Using it enables the existence of JE.

  \begin{proposition} \label{thm:Jump_HISMDG_Ex}
			The J-HIS-MDG and the J-UIS-MDG when $e$ is smaller than the minimum degree of $G$ have the FIP.
		\end{proposition}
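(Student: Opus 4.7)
The plan is to extend the argument of \autoref{thm:Swap_MDG_Ex} to the jump setting, by establishing a jump-analog of \autoref{lma:maximum} and invoking the same lexicographic potential argument. Concretely, I would prove the following claim: if agent $i$ performs a profitable jump from $\sigma$ to $\sigma_i$, then for every agent $k$ with $\cost{k}(\sigma_i) \geq \cost{i}(\sigma)$, we have $\cost{k}(\sigma) \geq \cost{k}(\sigma_i)$. Once this is in place, the sorted non-increasing vector of agents' costs strictly lex-decreases after every profitable jump: letting $T = \cost{i}(\sigma)$, the claim forces $\{k : \cost{k}(\sigma_i) \geq T\} \subseteq \{k : \cost{k}(\sigma) \geq T\}$, while $i$ belongs to the right-hand side but not to the left (since $\cost{i}(\sigma_i) < T$), so the inclusion is strict. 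Since there are only finitely many strategy profiles, the FIP then follows.

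For the claim, the case $k = i$ is immediate from profitability. For $k \neq i$, let $w$ denote the target empty node of $i$'s jump; only $k$'s adjacency in $G$ to $\sigma(i)$ or to $w$ can alter its agent-neighborhood. If $\sigma(k)$ is adjacent to both or to neither, then $N_k(\sigma) = N_k(\sigma_i)$ and $k$'s cost is unchanged. If $\sigma(k)$ is adjacent to $w$ but not to $\sigma(i)$, then $k$ gains $i$ as a new neighbor, so any strict increase of $\cost{k}$ is witnessed by $i$, giving $\cost{k}(\sigma_i) = d(k,i) \leq \cost{i}(\sigma_i) < \cost{i}(\sigma)$, which contradicts the hypothesis. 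The only remaining possibility is that $\sigma(k)$ is adjacent to $\sigma(i)$ but not to $w$, so $k$ loses $i$; provided $k$ retains at least one other neighbor, dropping an element from the maximum can only decrease it, yielding $\cost{k}(\sigma_i) \leq \cost{k}(\sigma)$.

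This isolates the delicate subcase in which $i$ was $k$'s sole neighbor and $k$ becomes isolated in $\sigma_i$. Under HIS this causes no issue, since then $\cost{k}(\sigma_i) = 0 \leq \cost{k}(\sigma)$. Under UIS, however, $\cost{k}(\sigma_i) = 1$ would break the lemma, and this is exactly where the assumption $e < \delta(G)$, with $\delta(G)$ the minimum degree of $G$, is used: every node has at least $\delta(G)$ graph-neighbors while at every stage only $e < \delta(G)$ nodes are empty, so every occupied node retains at least one occupied graph-neighbor and no agent can ever be isolated. The same observation also prevents $i$ itself from being isolated in $\sigma$ or $\sigma_i$, so the strict drop $\cost{i}(\sigma_i) < \cost{i}(\sigma)$ is between well-defined values. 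The main obstacle is precisely this isolation subcase; the fact that it is what breaks the argument without either HIS or the degree condition is consistent with the general non-existence phenomenon reported by the $(\times)$ entry for J-UIS-MDG in \Cref{tbl:result_overview}.
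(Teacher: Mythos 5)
Your proof is correct and takes essentially the same route as the paper: the identical sorted-cost lexicographic potential from \autoref{thm:Swap_MDG_Ex}, with new neighbors of the jumper handled by the \autoref{lma:maximum} argument and the only dangerous case---a former sole neighbor becoming isolated under UIS---ruled out precisely by $e$ being smaller than the minimum degree (or rendered harmless under HIS). The paper's proof is merely a terser statement of this same case analysis.
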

		\begin{proof}
			The same ordinal potential function as in~\autoref{thm:Swap_MDG_Ex} works. Note that if an
			agent $i \in [n]$ performs an improving jump from $\sigma$,
			every former neighbor of $i$ does not increase her cost unless we are in the UIS variant and the agent becomes isolated. But this can never happen under the claimed premises. By the arguments
			presented in the proof of~\autoref{lma:maximum}, no new neighbor of $i$ can increase her cost
			to be at least $\cost{i}(\sigma)$. Because agent $i$'s cost
			strictly decreases, the potential function lexicographically decreases.
		\end{proof}

  We improve the negative result from~\autoref{from_Oxford} and also complement the above theorem by showing that JE for J-UIS-MDGs may not exist even when considering $\Delta$-regular graphs and the number of empty nodes is equal to $\Delta$.

\begin{proposition} \label{thm:Jump_MDG_Ex_reg}
J-UIS-MDGs and J-UIS-ADGs on $\Delta$-regular graphs may not have JE when $e\geq\Delta$.
\end{proposition}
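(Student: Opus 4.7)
The plan is to establish both non-existence claims with a single compact construction: take $G = C_5$ (which is $2$-regular), three agents with types $t_1 = 0$, $t_2 = 0.4$, $t_3 = 1$, and $e = 2 = \Delta$ empty nodes. The asymmetric middle type $0.4$, chosen instead of $0.5$, keeps several key inequalities strict simultaneously for the MDG and the ADG, so the same instance will serve as a counterexample for both cost models.

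The main step is to show that every configuration admits a profitable jump. The dihedral symmetry of $C_5$ reduces the $\binom{5}{3}\cdot 3! = 60$ configurations to two structural types: type (I), three consecutive occupied vertices with two adjacent empty vertices, and type (II), an occupied adjacent pair together with a single isolated occupied vertex whose two neighbors are both empty. Within each type, the three possible type arrangements (indexed by which agent lies in the middle, respectively by which agent is isolated) yield at most six classes to inspect.

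For every type-(II) class I would verify that the isolated agent, paying cost $1$ under UIS, can always relocate to one of the two empty slots to become adjacent to some member of the pair whose type-distance to it is strictly below $1$; the asymmetric middle type $0.4$ guarantees that such a profitable move exists regardless of which two types form the pair (the shortest available distance is at most $0.6$ in every case). For the type-(I) class with $t_2$ in the middle, I would exhibit an explicit cycle of four improving jumps: $t_2$ first jumps next to $t_1$ alone (MDG cost $0.6\to 0.4$; ADG cost $0.5\to 0.4$), this isolates $t_3$, who jumps next to $t_2$, then $t_2$ jumps again, and finally $t_3$ jumps back, returning exactly to the starting configuration. The two remaining type-(I) classes (with $t_1$ or $t_3$ in the middle) each place the middle agent at cost $1$, and that agent has a profitable escape to an empty slot adjacent to $t_2$.

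The main obstacle is simply completeness of the case analysis, but once the symmetry reduction cuts the work to six classes the verification is a handful of elementary cost comparisons, which can be carried out identically for the MDG (using the max of neighbor type-distances) and the ADG (using their average). The same instance therefore witnesses non-existence simultaneously for J-UIS-MDG and J-UIS-ADG at the tight threshold $e = \Delta$.
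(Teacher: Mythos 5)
Your proposal is correct and uses essentially the same counterexample as the paper: a $5$-cycle with three agents of types $0$, a middle value, and $1$ (the paper takes $1/3$ where you take $0.4$), showing that every profile admits a profitable jump. The only difference is presentational: where you enumerate six symmetry classes, the paper compresses the verification into two observations (the type-$1$ agent cannot be isolated, and any agent adjacent to her can jump to become the sole neighbor of the third agent), but both verifications go through.
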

\begin{proof}
The following instance works for both J-UIS-MDGs and J-UIS-ADGs. Consider a $5$-node ring and three agents of types $t_1=0$, $t_2=1/3$ and $t_3=1$, respectively; so, we have $\Delta=e=2$. Consider a strategy profile $\sigma$. If agent $3$ is isolated in~$\sigma$, then she can jump on the empty spot adjacent to agent $2$ and improve her cost. So, agent $3$ cannot be isolated in $\sigma$. Let $i$ be an agent adjacent to agent $3$ in $\sigma$. This agent can jump to an empty spot where she is the only neighbor of agent $j\notin\{i,3\}$, thus decreasing her cost. So, no JE exists.
\end{proof}

For the rest of the section, we shall focus on J-\textsc{HIS}-MDGs. We first observe that a JE can be efficiently computed when the input graph satisfies a topological property.

\begin{theorem} \label{thm:Jump_HISMDG_Comp_path}
A JE for the J-HIS-MDG can be computed in $O(|V|^{k+1})$ time in any graph having $K_{2,e}$ as a subgraph, where $k=\min\{2,e\}$.
\end{theorem}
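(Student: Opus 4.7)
The plan is to exploit the $K_{2,e}$ subgraph to build a JE whose empty nodes coincide with the $e$-side of $K_{2,e}$. Let $a,b$ be the two nodes on the small side of $K_{2,e}$ and $C=\{c_1,\ldots,c_e\}$ the large side, so that every $c_j$ is adjacent to both $a$ and $b$ in $G$. I would place agent $1$ (of smallest type $t(1)$) at $a$, agent $n$ (of largest type $t(n)$) at $b$, leave every node of $C$ empty, and distribute the remaining $n-2$ agents arbitrarily among the other $|V|-e-2$ nodes. Degenerate cases where $t(1)=t(n)$ make every cost zero and are immediate.

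To show the resulting profile $\sigma$ is a JE, I would fix an agent $i$ and an empty node $c_j$ and argue that $i$'s cost cannot strictly decrease by jumping to $c_j$. Assume first $i\notin\{1,n\}$: after the jump, $c_j$ still has both agents $1$ and $n$ among its neighbors, so the new cost is at least $\max\{t(i)-t(1),\,t(n)-t(i)\}$. Because every agent's type lies in $[t(1),t(n)]$, the current cost $\cost{i}(\sigma)$ is upper-bounded by exactly the same quantity, so the jump is not profitable. The boundary cases $i\in\{1,n\}$ are handled symmetrically: the opposite extreme agent remains at $a$ or $b$ and pushes the cost at any $c_j$ up to $t(n)-t(1)$, which also upper-bounds the jumper's current cost. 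No agent can become isolated at any $c_j$ since $a$ and $b$ stay occupied, so the HIS convention plays no role beyond making the statement consistent with the model.

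For the running time, the dominant step is locating a $K_{2,e}$ in $G$. When $e=1$ (so $k=1$), any node of degree at least two together with two of its neighbors works, which can be done in $O(|V|^2)$ time by scanning adjacency lists. When $e\ge 2$ (so $k=2$), I would enumerate all $O(|V|^2)$ pairs $(a,b)$ and, for each, compute $|N(a)\cap N(b)|$ in $O(|V|)$ time, stopping as soon as a pair with at least $e$ common neighbors is found; this gives $O(|V|^3)$ in total. Building $\sigma$ on top of the located subgraph takes $O(|V|)$ extra time, for an overall $O(|V|^{k+1})$ bound. The only conceptual obstacle is identifying the right invariant: once one realizes that pinning the two extreme-type agents at the two shared neighbors of every empty node forces every $c_j$ to span the full type range $[t(1),t(n)]$, the verification collapses to a matching lower and upper bound of $\max\{t(i)-t(1),\,t(n)-t(i)\}$, so no potential-function argument or improving-dynamics analysis is needed.
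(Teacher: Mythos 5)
Your construction and verification coincide with the paper's: both pin the extreme-type agents $1$ and $n$ on the two common neighbors of the $e$ empty nodes, observe that any jump into the empty set forces adjacency to both extremes and hence a cost that already upper-bounds the jumper's current cost, and locate the $K_{2,e}$ by brute-force enumeration of $k$-tuples with an $O(|V|)$ neighborhood check. The argument is correct and essentially identical to the paper's proof.
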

\begin{proof}
Let $G$ be a graph having $K_{2,e}$ as a subgraph. So, there are two nodes $u$ and $v$ that are both adjacent to all nodes of a set of nodes $S$ such that $|S|=e$ and $S\cap\{u,v\}=\emptyset$. Let~$\sigma$ be any strategy profile assigning agent $1$ to $u$, agent~$n$ to~$v$ and leaving empty all nodes in $S$. Observe that $\sigma$ is a~JE. This is because by jumping to any node in $S$, an agent would be adjacent to both agents $1$ and $n$ paying her largest possible cost. Nodes $u$ and $v$ can be discovered in $O(|V|^{k+1})$ time by guessing all $k$-tuples of nodes in $V$ and checking in $O(|V|)$ whether their neighborhoods satisfy the required property.
\end{proof}

This yields an efficient algorithm for the following cases.
\begin{corollary}\label{cor:J_HIS_MDG_one_empty_node}
A JE for the J-HIS-MDG can be computed in $O(|V|^2)$ time when there is only one empty node.
\end{corollary}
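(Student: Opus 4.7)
The plan is to derive the corollary as a direct instantiation of \Cref{thm:Jump_HISMDG_Comp_path} with $e=1$, after handling a trivial boundary case. First I would observe that, when $e=1$, we have $k=\min\{2,1\}=1$, so the running time promised by \Cref{thm:Jump_HISMDG_Comp_path} is $O(|V|^{k+1})=O(|V|^2)$, which is exactly the bound we want. Hence it suffices to argue that, whenever $G$ is the connected input graph of a J-HIS-MDG with exactly one empty node, $G$ contains $K_{2,e}=K_{2,1}$ as a subgraph.

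Next I would unfold what it means for $G$ to contain $K_{2,1}$: we just need two distinct nodes $u,v$ that share a common neighbor $s\notin\{u,v\}$, i.e., we need some node $s$ of degree at least $2$. Since $e=1$, we have $n=|V|-1\geq 1$; the only case where the claim could fail is when $|V|\leq 2$. If $|V|=2$, then $n=1$ and the unique agent is isolated under any strategy profile, so in the HIS variant her cost is $0$ regardless, and any strategy profile is trivially a JE (computable in constant time). In every remaining case we have $|V|\geq 3$, and since $G$ is connected at least one of its nodes must have degree at least $2$; pick any such $s$ and any two of its neighbors as $u,v$, and the required $K_{2,1}$ subgraph is exhibited.

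With the topological condition verified, I would then just invoke \Cref{thm:Jump_HISMDG_Comp_path}: running that algorithm produces a JE in $O(|V|^{k+1}) = O(|V|^2)$ time. I do not expect any real obstacle here; the only subtlety is the trivial boundary check on $|V|$, which is dispatched in one line. No auxiliary lemma, no additional machinery, and no new potential-function argument is needed, because the structural work has already been done in \Cref{thm:Jump_HISMDG_Comp_path}.
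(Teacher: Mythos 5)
Your proof is correct and follows exactly the paper's own route: instantiate \Cref{thm:Jump_HISMDG_Comp_path} with $e=1$, noting that every connected graph on at least $3$ nodes has a node of degree $\geq 2$ and hence contains $K_{2,1}$. The explicit handling of the $|V|\leq 2$ boundary case is a minor addition the paper leaves implicit.
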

\begin{proof}
Since every connected graph with at least $3$ nodes admits a node of degree at least $2$, i.e., it contains $K_{2,1}$, the claim follows by~\autoref{thm:Jump_HISMDG_Comp_path}.
\end{proof}

\begin{corollary}\label{cor:J_HIS_MDG_dense_graphs}
A JE for the J-HIS-MDG on graphs with $|V|$ nodes and $\omega(|V|^{3/2})$ edges can be computed in $O(|V|^3)$ time.
\end{corollary}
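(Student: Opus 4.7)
The plan is to invoke \Cref{thm:Jump_HISMDG_Comp_path} with $k=\min\{2,e\}\le 2$, which already yields runtime $O(|V|^{k+1})\le O(|V|^3)$ \emph{as soon as} we exhibit a $K_{2,e}$ inside $G$. So the whole task reduces to proving that the density hypothesis $|E|=\omega(|V|^{3/2})$ forces such a subgraph to appear.

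For that, I would use a standard cherry-counting argument in the style of K\H{o}v\'ari--S\'os--Tur\'an. Write $n=|V|$ and double-count triples $(v,\{x,y\})$ with $x,y$ two distinct neighbors of $v$. From the apex side the count equals $\sum_{v\in V}\binom{\deg(v)}{2}$, which by the convexity of $\binom{\cdot}{2}$ (Jensen's inequality) is at least $n\binom{2|E|/n}{2}=\frac{|E|(2|E|-n)}{n}$. From the pair side, if $G$ avoided $K_{2,e}$, then any two vertices would share at most $e-1$ common neighbors, so the count is at most $(e-1)\binom{n}{2}$. Combining the two bounds gives $|E|=O(\sqrt{e-1}\cdot n^{3/2})$, an inequality that $|E|=\omega(n^{3/2})$ violates for all sufficiently large $n$, thereby forcing the desired $K_{2,e}$.

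Once $K_{2,e}\subseteq G$ is secured, \Cref{thm:Jump_HISMDG_Comp_path} produces a JE in $O(|V|^{k+1})$ time with $k=\min\{2,e\}$. For $e\ge 2$ this is exactly $O(|V|^3)$, while for $e=1$ we appeal to \Cref{cor:J_HIS_MDG_one_empty_node}, which already gives $O(|V|^2)\subseteq O(|V|^3)$. The only delicate point I foresee is that the cherry-count bound is tight only up to a $\sqrt{e}$ factor, so strictly speaking the $\omega(|V|^{3/2})$ hypothesis forces $K_{2,e}$ in the regime $e=o\bigl((|E|/|V|^{3/2})^2\bigr)$; this is automatic whenever $e$ is bounded or grows sufficiently slowly with $|V|$, which matches the dense-graph regime the corollary targets.
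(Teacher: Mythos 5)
Your proof takes essentially the same route as the paper: the paper simply cites the K\H{o}v\'ari--S\'os--Tur\'an result that $K_{2,e}$-free graphs have $O(|V|^{3/2})$ edges and then applies \Cref{thm:Jump_HISMDG_Comp_path}, whereas you reprove that extremal bound via the standard cherry-counting argument before doing the same. Your closing caveat about the hidden $\sqrt{e-1}$ factor is well taken --- the paper's one-line proof silently absorbs the dependence on $e$ into the $O(\cdot)$, so the statement is only clean when $e$ is bounded or grows slowly enough, exactly as you observe.
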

\begin{proof}
\citet{Kovari1954} proved that the densest graph on~$|V|$ nodes that does not contain $K_{s,t}$ as a subgraph has size $O(|V|^{2-1/s})$. So, every graph with $\omega(|V|^{3/2})$ edges contains $K_{2,e}$ as a subgraph. The claim follows by~\autoref{thm:Jump_HISMDG_Comp_path}.
\end{proof}

With additional work, we can obtain efficient computation of a JE also for the case of $e=2$.

\begin{theorem} \label{thm:Jump_HISMDG_Comp}
A JE for the J-HIS-MDG can be computed in $O(|V|^3)$ when there are only two empty nodes.
\end{theorem}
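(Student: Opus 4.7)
The plan splits on whether $G$ contains $K_{2,2}$ as a subgraph. A pair of distinct vertices with at least two common neighbors can be detected in $O(|V|^3)$ time by enumerating triples $(u,v,s)$ with $s \in N_G(u)\cap N_G(v)$; if such a pair is found then \Cref{thm:Jump_HISMDG_Comp_path} with $k = 2$ already returns a JE in $O(|V|^{k+1}) = O(|V|^3)$ time and we are done.

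Otherwise, $G$ is $K_{2,2}$-free, so any two distinct vertices share at most one common neighbor. My plan here is to enumerate all $O(|V|^2)$ ordered pairs $(w_1, w_2)$ of distinct vertices as candidate empty positions and, for each one, install a canonical placement of the agents in $O(|V|)$ time and then verify the JE property in $O(|V|)$ time per candidate (each agent has only two candidate jumps, and the cost at each target is determined by its $G$-neighborhood, which can be precomputed). The aggregated cost is $O(|V|^3)$ provided that at least one enumerated candidate yields a JE.

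The canonical placement I would try is to partition the agents into two type-consecutive clusters, the low-type cluster surrounding $w_1$ and the high-type cluster surrounding $w_2$, so that the pair $(w_1, w_2)$ acts as a cut between them. Exploiting the HIS rule, the type-extremes (agents $1$ and $n$) can be made isolated, incurring cost $0$, while any other agent jumping onto $w_1$ or $w_2$ would inherit a neighbor from the ``other'' cluster and hence a type-distance at least as large as her current cost. The main obstacle is proving that some enumerated pair $(w_1,w_2)$ admits such a canonical JE in the $K_{2,2}$-free regime. I would argue this by a secondary case split on whether $G$ has a vertex of degree at least $3$ or not: in the former, $(w_1, w_2)$ is chosen as a pair of neighbors of such a vertex so that the extreme agents can be isolated on the remaining branches; in the latter, $G$ must be a path or a cycle of length at least $5$ (since $C_4 = K_{2,2}$ is excluded), and an explicit interleaved construction works, of the sort one can verify on a small $P_6$ instance where two low-type agents are bundled at one end, two high-type agents at the other, and the two empties separate them. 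Throughout, the $K_{2,2}$-free hypothesis is what rules out the bad configurations in which a jumping agent could, via two independent short paths, land between two type-close neighbors and thereby undercut her current cost.
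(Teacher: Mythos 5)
Your first case (detect $K_{2,2}$ in $O(|V|^3)$ time and invoke \Cref{thm:Jump_HISMDG_Comp_path} with $k=2$) matches the paper. The gap lies entirely in the $K_{2,2}$-free case, where your ``canonical placement'' is not established and, as sketched, cannot work in general. First, isolating both type-extremes with only two empty nodes forces every neighbor of the nodes hosting agents $1$ and $n$ to lie in $\{w_1,w_2\}$; if both hosting nodes have degree two, their neighborhoods coincide with $\{w_1,w_2\}$ and together they form a $K_{2,2}$, contradicting the case hypothesis, so your plan already presupposes degree-one vertices that need not exist. Second, $\{w_1,w_2\}$ need not be a vertex cut (a $3$-connected $K_{2,2}$-free graph such as the Petersen graph has none), so the low-type and the high-type cluster must be adjacent somewhere; the agents on that boundary pay a large cost, and you give no argument that neither empty node offers them an improving jump. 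You flag the existence of a good pair $(w_1,w_2)$ as ``the main obstacle,'' but that obstacle is essentially the whole theorem, and the secondary case split you propose (a vertex of degree at least three versus path/cycle) does not close it.

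The paper takes a different route in this regime: it searches in $O(|V|^3)$ time for a five-node path $\langle u_1,\dots,u_5\rangle$, empties $u_2$ and $u_4$, and places agent $n$ on $u_1$, agent $1$ on $u_3$, and agent $n-1$ on $u_5$ (with $u_1$ chosen as the endpoint with fewer neighbors outside the path). The point is that $u_2$ is then adjacent to both global type-extremes, so jumping there yields the largest possible cost for every agent and can be ignored; only $u_4$ requires analysis, and this is controlled by filling all remaining neighbors of $u_1$ with agents from the larger of the two type-halves $A^+$ and $A^-$. Note that no agent is isolated in this construction. Graphs containing neither a $C_4$ nor a five-node path are then classified explicitly (stars, double stars, and a single triangle with restricted attachments), and only there is the all-isolated profile used. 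To salvage your enumeration framework you would need, at minimum, a proof that some pair of empty positions admits a verifiable equilibrium placement, which is exactly what the five-node-path construction supplies.
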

\begin{proof}
Let $G$ be the graph defining the game. By~\autoref{thm:Jump_HISMDG_Comp_path}, the claim follows when $G$ contains the cycle $C_4$. We additionally show that a JE can be computed in $O(|V|^3)$ time if~$G$ admits a path of five nodes.
One can easily find a path of $5$ nodes in $G$ or say that such a path does not exist in $O(|V|^3)$ time. Indeed, for every two distinct nodes $u$ and $v$ of $G$ each having at least two neighbors not in $\{u,v\}$, there is a path of~$5$ nodes using $u$ and $v$ as its second and fourth node, respectively, if and only if (i) $u$ and $v$ have a neighbor in common (the third node of the path), (ii) either one of the two nodes has at least 3 neighbors not in $\{u,v\}$ or the two nodes share only one common neighbor that is not in $\{u,v\}$.

Let $\pi=\langle u_1,\ldots,u_5\rangle$ be a five-node path in $G$. We assume wlog that nodes are numbered in such a way that the number of neighbors of $u_1$ not in $\pi$ is not larger than the number of neighbors of $u_5$ not in $\pi$. Define $t_M=(t(1)+t(n))/2$ as the middle point between the two extreme types. Let $A^-=\{i\in [n]:t_i\leq t_M\}$ and $A^+=\{i\in [n]:t_i\geq t_M\}$. We construct a JE, depending on which of these sets is larger.

Assume that $|A^+|\geq |A^-|$. Let $\sigma$ be the strategy profile defined as follows: agent $n$ is assigned to $u_1$, agent $1$ to $u_3$ and agent $n-1$ to $u_5$; $u_2$ and $u_4$ are left empty; all remaining neighbors of $u_1$ are filled with agents from $|A^+|$ (this is always possible as $|A^+|\geq |A^-|$ and $u_1$ has less neighbors than $u_5$ outside $\pi$); the remaining agents are randomly placed. We claim that $\sigma$ is a JE. Clearly, no agent is interested in jumping to $u_2$ as this would yield the largest possible cost. So, an agent $i$ may only be interested in jumping to $u_4$. In this case,~$i$ would be adjacent to agents $1$ and $n-1$. So, to have an improving jump, it must be $t_{n-1}<t_n$, $i$ must be adjacent to $u_1$ in $\sigma$, and $i\in A^-$, but this never happens in $\sigma$.

If $|A^+|<|A^-|$, it suffices swapping agents $1$ and $n$, assigning agent $2$ to $u_3$ and filling all remaining neighbors of~$u_1$ with agents from $|A^-|$.

If $G$ does not contain $C_4$, then either {\em (i)} it is a tree or {\em (ii)} it contains only cycles of length three.

If $G$ is a tree, as it cannot have a five-node path, either $G$ is a star or $G$ is a star with one of its leaves being, in turn, the center of a star. In both cases, the assignment in which all agents are isolated is a JE.

If $G$ contains two disjoint cycles of length three, as $G$ is connected, these cycles need to be connected, thus creating a five-node path. If the two cycles are not disjoint, then they may have one or two nodes in common. In the first case, a five-node path arises, in the second one, $C_4$ arises. So, $G$ has exactly one cycle of length three. If two nodes of this cycle have neighbors outside the cycle, then a five-node path arises. So, $G$ can only be a star in which two leaves are adjacent. Leaving empty the center of the star and one of these leaves results in a JE as all agents are isolated.
\end{proof}
We show how to efficiently compute equilibria for paths.
\begin{theorem}\label{jump-path}
A JE for the J-HIS-MDG on paths can be computed in $O(n\log n)$ time.
\end{theorem}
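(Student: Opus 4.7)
The plan is to construct a JE explicitly by partitioning the agents into contiguous ``islands'' along the path based on the largest type-gaps in the sorted type order. First, sort the agents by type in $O(n\log n)$ time and compute the consecutive type-gaps $g_i = t(i+1) - t(i)$ for $i \in [n-1]$. Then identify the index set $\mathcal{I}$ of the $e' = \min(e, n-1)$ largest gaps via a second sort. Construct $\sigma$ by placing agents $1, 2, \ldots, n$ along the path in type order, inserting one empty node between agents $i$ and $i+1$ precisely when $i \in \mathcal{I}$, and appending any remaining $e - e'$ empty nodes at the end. The resulting $|\mathcal{I}| + 1$ islands have the property that every within-island gap is at most $\min_{\ell \in \mathcal{I}} g_\ell$; consequently, the cost of any non-isolated agent is also at most $\min_{\ell \in \mathcal{I}} g_\ell$.

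To prove $\sigma$ is a JE, I consider an arbitrary jump of agent $j$ to an empty node $v$ and show that the new cost cannot strictly decrease. If $v$ lies in the appended tail---which occurs only when $e > n-1$, forcing $\mathcal{I}$ to contain all $n-1$ gaps and making every island a singleton with every agent isolated at cost~$0$---no strict improvement is possible. Otherwise $v$ sits at some selected gap $i_k \in \mathcal{I}$, currently flanked by agents $i_k$ and $i_k+1$. If $j \notin \{i_k, i_k+1\}$, then after the jump $j$ is still adjacent to both $i_k$ and $i_k+1$, so its new cost equals $\max(|t(j)-t(i_k)|, |t(j)-t(i_k+1)|)$; assuming wlog $j < i_k$, this equals $t(i_k+1) - t(j) \geq g_{i_k}$, which dominates $j$'s within-island cost. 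If $j \in \{i_k, i_k+1\}$, the old position of $j$ becomes empty, leaving $j$ with only one occupied neighbor across the gap, so the new cost equals $g_{i_k}$ exactly, again dominating the prior within-island cost by the greedy choice.

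The main obstacle I anticipate is precisely the subtle case where $j$ is adjacent to the destination empty node: because $j$'s former position becomes empty, the neighborhood changes non-trivially, and one must verify that $g_{i_k}$ still dominates $j$'s old cost. This is guaranteed because any index $i \notin \mathcal{I}$ satisfies $g_i \leq g_{i_k}$, so in particular the within-island gap adjacent to $j$'s old position is bounded by $g_{i_k}$. A minor side issue is that appended extras may create empty positions whose neighbors are all empty---jumping there would give cost $0$---but this only occurs when every agent is already isolated at cost $0$, hence harmless. Overall, the construction runs in $O(n\log n)$ time, dominated by the two sorts.
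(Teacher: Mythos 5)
Your proposal is correct and follows essentially the same approach as the paper: sort agents by type, place them in order along the path, insert the $e$ empty nodes at the $e$ largest consecutive type-gaps, and argue that any jump lands an agent next to a selected (hence maximal) gap, yielding a cost at least as large as any within-island gap. Your explicit treatment of the case where the jumping agent is itself adjacent to the target empty node matches the paper's parenthetical remark that $k=i$ may occur, so no gap remains.
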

\begin{proof}
Clearly, if the number $e$ of empty nodes satisfies $e\geq n-1$, a strategy profile of social cost equal to  zero can be obtained by making every agent isolated. Such a profile is trivially a JE and a social optimum. So, assume that $e<n-1$.

Let $S$ be a set of $e$ pairs of consecutive agents yielding the largest intervals occurring between the types of two consecutive agents, and let $\sigma$ be the strategy profile obtained by placing the agents in increasing order along the path and leaving an empty spot between any two agents in $S$. Clearly, $\sigma$ can be computed in $O(n\log n)$ time. We claim that $\sigma$ is a JE.

To see this, let us consider an agent $i$ who is willing to jump to an empty node~$u$. By construction, both nodes adjacent to $u$ are occupied by two consecutive agents. Let $k$ and $k+1$ be these agents. As $i$ improves by jumping, $i$ is not isolated in $\sigma$, and so we have $\cost i(\sigma)=|t(i)-t(j)|$, for some $j\in\{i-1,i+1\}$. Assume that $j=i-1$. Since $i$ improves by jumping to $u$, the types of agents $k$ and $k+1$ are closer than $t(j)=t(i-1)$ to $t(i)$. So, it must be $\max\{k,k+1\}\geq i+1$ (observe that it may be $k=i$). We derive $\cost i(\sigma)=t(i)-t(i-1)>t(k+1)-t(i)\geq t(k+1)-t(k)$, $(k,k+1)\in S$ and $(i-1,i)\notin S$, which implies $t(i)-t(i-1)\leq t(k+1)-t(k)$, a contradiction. The case $j=i+1$ is analogous.
\end{proof}

\subsubsection{Average Type-Distance Games and Cutoff Games}
We show that the JE for J-HIS-MDGs on paths returned by the algorithm defined in the proof of~\autoref{jump-path} remains stable also when considering J-HIS-ADGs and J-HIS-CGs.

\begin{restatable}{theorem}{theoremjumppathext}\label{jump-path-ext}
A JE for both J-HIS-ADGs and J-HIS-CGs on paths can be computed in $O(n\log n)$ time.
\end{restatable}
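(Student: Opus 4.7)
The plan is to show that the strategy profile $\sigma$ built in the proof of~\autoref{jump-path}---agents placed along the path in increasing order of type, with the $e$ empty nodes inserted at the $e$ largest type-gaps (call this set $S$)---is also a JE for the J-HIS-ADG and the J-HIS-CG. The trivial case $e\geq n-1$, handled separately in the proof of~\autoref{jump-path} by making all agents isolated, yields cost $0$ and is a JE for any cost model. Since $\sigma$ is computable in $O(n\log n)$ time, the task reduces to verifying the equilibrium property for the two new cost models on the non-trivial placement.

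Consider any agent $i$ that might jump to an empty node $u$. By construction $u$ is flanked by two agents $k,k+1$ consecutive in the type order with $(k,k+1)\in S$. Assume WLOG $i<k$; the case $i>k+1$ is symmetric. Three inequalities carry the argument: (I)~$t(k)-t(i)\geq t(i+1)-t(i)$, since $k\geq i+1$; (II)~$t(k+1)-t(i)\geq t(k)-t(i)$; and (III)~every gap in $S$ is at least every gap outside $S$. Recall also that $N_i(\sigma)\subseteq\{i-1,i+1\}$, with $i-1\in N_i(\sigma)$ iff $(i-1,i)\notin S$ and analogously for $i+1$.

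For the J-HIS-ADG I case-split on $N_i(\sigma)$. If $i$ is isolated the HIS cost is $0$ and no jump can improve. If $N_i(\sigma)=\{i+1\}$, the old cost $t(i+1)-t(i)$ is upper-bounded by both summands of the new cost via (I)--(II). If $N_i(\sigma)=\{i-1\}$, then $(i,i+1)\in S$ and $(i-1,i)\notin S$; (III) then gives $t(i+1)-t(i)\geq t(i)-t(i-1)$, so combined with (I)--(II) both summands $t(k)-t(i)$ and $t(k+1)-t(i)$ are at least $t(i)-t(i-1)$, making the new average at least the old cost. If $N_i(\sigma)=\{i-1,i+1\}$, the old cost is $\frac{(t(i+1)-t(i))+(t(i)-t(i-1))}{2}$; (I) gives $t(k)-t(i)\geq t(i+1)-t(i)$, while $t(k+1)-t(i)\geq t(k+1)-t(k)\geq t(i)-t(i-1)$ by (III), yielding exactly the required bound.

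For the J-HIS-CG the cost lies in $\{0,1/2,1\}$ (or $\{0,1\}$ when $|N_i(\sigma)|=1$). The key principle is that each would-be new friend forces an already-present friend of $i$: if $t(k)-t(i)\leq\co$, then (I) gives $t(i+1)-t(i)\leq\co$, so $i+1$ is a friend in $\sigma$ whenever $i+1\in N_i(\sigma)$; and if $t(k+1)-t(i)\leq\co$, then combining (II) with (III) gives $t(i)-t(i-1)\leq\co$ whenever $(i-1,i)\notin S$, so $i-1$ is then a friend. A short case analysis on $|N_i(\sigma)|$ and on the current number of enemies rules out each of the potential improvements $1\to 1/2$, $1\to 0$, and $1/2\to 0$: in every case, each new friend is matched by a pre-existing friend, so the enemy fraction cannot strictly decrease. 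The main obstacle is handling this case analysis cleanly in the boundary situations where $(i,i+1)\in S$ or $(i-1,i)\in S$ so that one of the two potential current neighbors is missing in $\sigma$---for instance, if $N_i(\sigma)=\{i-1\}$ with old cost $1$, then $t(i)-t(i-1)>\co$ and (III) together with $(i,i+1)\in S$ gives $t(i+1)-t(i)>\co$, so (I)--(II) make both new neighbors enemies and the new cost remains $1$.
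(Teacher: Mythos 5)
Your construction is exactly the paper's: reuse the profile $\sigma$ from \autoref{jump-path} (agents in type order, empty nodes at the $e$ largest gaps) and verify it is a JE for the other two cost models. Where you differ is in how the verification is organized. The paper treats \autoref{jump-path} as a black box: since $\sigma$ is a JE for the J-HIS-MDG, the \emph{maximum} distance of a jumper to her neighbors cannot decrease, so on a path any improvement in the ADG (and, after reducing the $1\to 1/2$ case, in the CG) forces the \emph{minimum} distance to decrease, which is then refuted by a single structural property of $\sigma$. You instead re-derive everything from the three inequalities (I)--(III) and do an explicit case split on $N_i(\sigma)$. Your route is more self-contained and makes the friend-matching argument for the CG fully explicit; the paper's route is shorter and, importantly, handles some boundary cases for free (see below). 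Both rest on the same underlying fact, namely that every gap in $S$ dominates every gap outside $S$.

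There is one concrete hole: your ``WLOG $i<k$; the case $i>k+1$ is symmetric'' silently discards the deviations $i=k$ and $i=k+1$, i.e., an agent jumping to the empty node adjacent to her own position. These are legal jumps and they change the neighborhood (the vacated node is a neighbor of $u$, so the jumper ends up with the \emph{single} neighbor $k+1$ resp.\ $k$, not with both flanking agents as your analysis assumes). The fix is one line --- the new cost is $d(k,k+1)$, a gap in $S$, hence at least the old single-neighbor distance by (III), so neither the ADG average nor the CG enemy fraction drops --- but as written the case is not covered. Note that the paper's reduction avoids this issue automatically, since ``the maximum distance does not decrease'' already implies ``the average does not decrease'' whenever the new neighborhood is a singleton. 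You should also state explicitly (as you implicitly assume) that every empty node in $\sigma$ is flanked by two occupied nodes, which holds because distinct pairs in $S$ produce non-adjacent empty spots. With these two additions your argument is complete and matches the claimed $O(n\log n)$ bound, which comes entirely from the construction in \autoref{jump-path}.
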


\section{Quality of Equilibria}\label{sec:quality}
In this section, we provide an overview of our results for the $\hpoa$ and $\hpos$ of the considered games. For the former, in particular, we were able to provide a full characterization, while, for the latter, we give results for games played on specific topologies. More specific details can be found in~\Cref{apx:quality}.

\paragraph{Price of Anarchy.}
Under a worst-case view, we can prove that there can be equilibria of positive social cost while a social optimum with social cost zero exists, yielding an unbounded $\hpoa$. The only exceptions are S-MDGs and S-ADGs having a $\hpoa$ in $\Theta(n)$ and $\Theta(n\Delta)$, respectively, where $\Delta$ is the maximum degree of the underlying graph.

\paragraph{Price of Stability.}
For characterizing the $\hpos$, usually, either the FIP or the existence of algorithms computing equilibria with provable approximation guarantees are required. As we have seen in~\Cref{sec:existence}, this may be either impossible or require quite an effort; nevertheless, these difficulties are common also in previous models of Schelling games.
For games played on paths, we derive a bound of $1$ in S-ADGs and an upper bound of $2$ in S-MDGs, S-CGs and J-HIS-MDGs. On regular graphs, a bound of $1$ holds for both S-ADGs and S-CGs. Finally, for games played on unrestricted topologies, we show that the $\hpos$ is in $\Theta(n)$ for both S-MDGs and J-HIS-MDGs and even unbounded for J-UIS-MDGs.

\section{Simulation Experiments}\label{sec:experiments}

We present simulation results to highlight some properties of the obtained equilibria for our model variants.

For our simulations, we consider $8$-regular toroidal grid graphs of size $50 \times 50$ with a
total of $2500$ nodes and $10000$ edges as a residential area. Agent types and starting locations are chosen uniformly at random and are the same if we compare different variants. For simulations of the jump versions, we use 2\% uniformly random chosen empty nodes.
From the starting location, random improving moves are chosen until an equilibrium is reached. The shown equilibria are representative for all runs. For the jump game, we used the UIS variant, thus, we had no convergence guarantee. However, the simulations always found a JE.

\paragraph{Visualizations of Equilibria.} \autoref{fig:swap_comparison} depicts representative sample equilibria for our variants of the swap game. The differences in the equilibria are remarkable. While the SE for the MDG looks very smooth, the SE of the other variants show hard color borders, i.e., they have neighboring agents with large type-difference. This indicates that the segregation strength in the MDG seems to be higher than in the other versions, with the CG having the lowest segregation. We further investigated the equilibria of the CG, in particular the influence of $\Delta$ and $\lambda$ on the obtained equilibria. See~\autoref{apx:experiments}.
Representative sample equilibria for all our variants of the Jump Game are depicted in~\autoref{fig:equijump}.
\begin{figure}[b]
    \centering
    \begin{subfigure}{0.31\linewidth}
        \includegraphics[width=\textwidth]{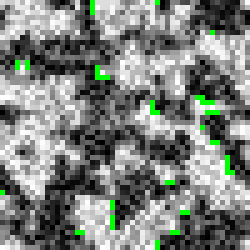}
        \caption{8-regular Maximum Distance Game}
        \label{fig:max8jump}
    \end{subfigure}
    \hfill
    \begin{subfigure}{0.31\linewidth}
        \includegraphics[width=\textwidth]{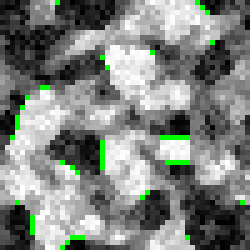}
        \caption{8-regular Average Distance Game} \label{fig:avg8jump}
    \end{subfigure}
    \hfill
    \begin{subfigure}{0.31\linewidth}
        \includegraphics[width=\textwidth]{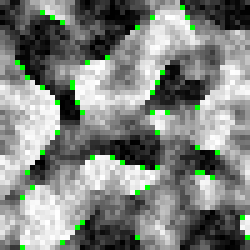}
        \caption{8-regular Cutoff Game}
        \label{fig:cutoff8jump}
    \end{subfigure}
    \caption{Equilibria in different Jump Games from an initially random
    starting strategy profile on 8-regular $50 \times 50$ toroidal grid
    graphs. 2\% of nodes are left empty and shown as green.
    }
    \label{fig:equijump}
\end{figure}
Here, a marked difference to the Swap Game becomes apparent: the equilibria seem to be less strongly segregated, than their respective counterpart in the Swap Game. This may be due to agents having fewer options to improve since at any time only a few empty cells are available. Moreover, note that in the equilibria the empty cells typically have neighboring agents with large type-difference, rendering these cells less attractive.

\paragraph{Quality Measures.}\label{sec:empquality}
We now focus on different quality measures to compare the obtained equilibria. For this, we use:
\begin{itemize}
\item \textbf{ADGSC}: the social cost of the ADG
\item \textbf{MDGSC}: the social cost of the MDG
\item $\#\leq\frac12$: the number of pairs of neighboring agents with type-difference at most $\frac12$;
\item \textbf{max $d$}: the maximum neighbor type-difference;
\item \textbf{Steps}: the number of steps until convergence.
\end{itemize}
The results, averaged
over $100$ runs from randomly chosen initial states, are shown in~\autoref{tab:quality}.
\begin{table}[t]
    \centering
    \resizebox{\linewidth}{!}{\begin{tabular}{lccccc}
        \toprule
        \textbf{Model} & \textbf{ADGSC} & \textbf{MDGSC} & \textbf{$\#\leq\frac12$} &
        \textbf{max $d$} & \textbf{Steps} \\
        \midrule
        S-MDG & 148 & 280 & 9979 & 0.75 & 16510\\
        S-ADG & 150 & 408 & 9779 & 0.97 & 9928\\
        S-CG, $0.1$ & 285 & 765 & 9340 & 0.98 & 2245\\
        S-CG, $0.2$ & 321 & 758 & 9446 & 0.99 & 1914\\
        \midrule
        J-UIS-MDG & 498 & 1010 & 9178 & 0.86 & 5059\\
        J-UIS-ADG & 281 & 647 & 9518 & 0.95 & 5498\\
        J-UIS-CG, $0.1$ & 227 & 592 & 9215 & 0.97 & 5451\\
        J-UIS-CG, $0.2$ & 284 & 638 & 9385 & 0.98 & 4470\\
        \bottomrule
    \end{tabular}}
    \caption{\label{tab:quality}
        Comparison of equilibria of our models. All
        values are averaged and rounded over 100 runs each.
    }
\end{table}
They confirm that the S-MDG produces the most segregated equilibria, followed by the S-ADG. Especially the S-MDG yields the lowest cost values in every cost function and on average has $99.79\%$ of edges between agents with type-distance at most~$\frac{1}{2}$. The MDG is also the only model where the maximum type difference between neighbors is significantly below 1. Also, our data indicate that equilibria in swap games are more segregated than the equilibria in jump games.
\paragraph{Discussion.} Our experiments shed light on the structural properties of the obtained equilibria. On the one hand, they reveal that the specific choice of cost
function can have a strong impact on the obtained states, e.g., that the CG yields very different outcomes compared to the MDG or the ADG, which is not obvious a priori. On the other hand, they also indicate that our model is rather robust with regard to certain kinds of cost functions, i.e., the similar structural results for the MDG and the ADG hint at the fact that varying the involved distance measures might not change the qualitative behavior much.
Also, the experiments
reveal that more structural properties might be analyzed in future work. E.g., the appearance of “hard borders”, i.e., the existence of many pairs of neighboring agents with
large type-value difference. Last but not least, such experiments also shed light on the preferences of real-world agents: while all our cost functions model homophily, our plots for the MDG and the ADG resemble segregation patterns that have been observed by sociologists, while the patterns of the
CG are different. This reveals what kind of homophilic behavior might be more realistic.

\section{Conclusion}
In this work, we study game-theoretic models for residential segregation with non-categorical agent types. This allows us to generalize existing models but also to derive novel results.

As a proof of concept, we have considered three very natural variants for the agents' behavior and focused on the most fundamental question: the existence of equilibria. For this, we present many positive results, in particular, we prove that SE in the S-MDG always exist and can be efficiently constructed on all graphs. We complete the picture by providing additional computational hardness results and many tight or almost tight bounds on the PoA and the PoS. Some interesting problems are left open, for example, settling the following:
\begin{conjecture}\label{simple_paths}
For any Swap Game played on a path, a SE which is also a social optimum can be efficiently computed.
\end{conjecture}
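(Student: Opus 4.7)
The plan is to treat each swap-game variant separately. Let $\sigma^\star$ denote the \emph{sorted arrangement} that places agent $i$ on the $i$-th node of the path (nodes numbered along the path), and set $e_k=|t(\sigma(k{+}1))-t(\sigma(k))|$.

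For the S-MDG, $\sigma^\star$ coincides with the output of the BFS-based construction of \Cref{thm:Swap_MDG_Comp} when the root is chosen at an endpoint of the path, so $\sigma^\star$ is a SE. To show $\sigma^\star$ is a social optimum I would use an adjacent-inversion exchange argument: swapping an adjacent pair that is out of sorted order weakly decreases $\scost$ once one carefully tracks how the max-based cost evolves at the four positions affected by the swap. For the S-ADG, rewriting $\scost(\sigma)=\tfrac{3}{2}(e_1+e_{n-1})+\sum_{k=2}^{n-2}e_k$ and using $\sum_k e_k\geq t(n)-t(1)$ (with equality iff the arrangement is monotone) shows that $\sigma^\star$ is a social optimum. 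For the SE property, for any swap of agents $i<j$ in $\sigma^\star$, a direct calculation shows that the sum of the cost changes incurred by agents $i$ and $j$ is nonnegative, so they cannot both strictly improve.

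The S-CG is more delicate because $\sigma^\star$ is in general \emph{not} a social optimum: for instance, for $n=5$ with types $0, 0.4, 0.5, 0.6, 1$ and $\lambda=0.15$, the sorted arrangement attains $\scost=3$ whereas the arrangement placing types $(0.4,0.5,0.6,1,0)$ at positions $1$ to $5$ attains $\scost=2.5$. Here $\scost(\sigma)=e_I+\tfrac{3}{2}e_A$, where $e_I$ and $e_A$ count the internal and endpoint enemy edges. Letting $G'$ be the friendship graph on the agents (edge $\{i,j\}$ iff $d(i,j)\leq\lambda$) and $c$ its number of connected components, one has $e_I+e_A\geq c-1$, with equality iff each component of $G'$ forms a contiguous block in $\sigma$. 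My plan is to construct $\sigma$ by sorting agents within each component of $G'$ and concatenating the blocks along the path, placing components of size at least $2$ at the two endpoints whenever possible (to minimize $e_A$); the resulting $\sigma$ is then argued to be a SE by combining the within-block S-ADG-style swap analysis with casework on cross-block swaps.

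The main obstacle is the S-CG SE verification: unlike the S-ADG case, social optimality does not automatically imply the SE property here, because a profitable swap can increase the potential $\Phi$ of \Cref{thm:Swap_CG_Ex_almostreg} without decreasing $\scost$; one must therefore rule out cross-block swaps by casework on the block structure. A secondary subtlety is the S-MDG social-optimality step: since the max-operator can shift the critical edge at positions near an adjacent-inversion swap, the case analysis must verify that the net change in $\scost$ is nonpositive in every subcase.
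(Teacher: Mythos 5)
You should first be aware that \Cref{simple_paths} is stated in the paper as an \emph{open conjecture}: there is no proof to compare against. What the paper does establish is the S-ADG case (\Cref{PoS-AVG-path}: the sorted profile is a SE and a social optimum on paths) together with factor-$2$ bounds for the S-MDG and the S-CG (\Cref{thm:Swap_MDG_PoS_path}, \Cref{thm:Swap_CP_PoS_path}). Your ADG part essentially re-derives \Cref{PoS-AVG-path}; the ``sum of the two cost changes is nonnegative'' computation is correct and is a clean, uniform replacement for the paper's case analysis, though for optimality you still need a lower bound on $\scost$ of \emph{arbitrary} profiles (as in \Cref{lm:S_ADG_path_lower_bound}), not only the identity $\sum_k e_k = t(n)-t(1)$ for the sorted one. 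Your observation that the sorted profile is not optimal for the S-CG is correct (your $5$-agent, $\lambda=0.15$ example checks out) and pinpoints exactly why the factor $2$ appears in \Cref{thm:Swap_CP_PoS_path}; the block construction is a sensible candidate, but both its optimality (in the degenerate cases with fewer than two non-singleton components of the friendship graph, and against profiles that split a component to save endpoint cost) and its stability against cross-block swaps are deferred to casework you have not carried out.

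The concrete gap is the S-MDG optimality step. The adjacent-inversion exchange is not monotone: transposing an out-of-order adjacent pair can strictly \emph{increase} the social cost. Take types $\{0,0,0,0.9,1,1\}$ arranged as $(1,0,1,0.9,0,0)$: the node costs are $1,1,1,0.9,0.9,0$, so $\scost=4.8$; swapping the inverted adjacent pair $(1,0.9)$ at positions $3,4$ gives $(1,0,0.9,1,0,0)$ with node costs $1,1,0.9,1,1,0$, so $\scost=4.9$. The mechanism is visible from the identity $\scost(\sigma)=2\sum_i d_i-\sum_{i=2}^{n-1}\min(d_{i-1},d_i)$ (with $d_i$ the adjacent type-gaps along the path): a transposition that shortens the walk can simultaneously destroy more of the $\min$-terms at the two outer neighbors. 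So bubble-sorting does not certify optimality of the sorted profile; you would need either a carefully chosen sequence of non-adjacent exchanges or, better, a direct lower bound on $\scost(\sigma)$ for arbitrary $\sigma$ that matches $e_1+e_{n-1}+\sum_{i=2}^{n-1}\max(e_{i-1},e_i)$ exactly rather than within a factor $2$ --- and this is precisely the step the paper could not supply. Until that step and the CG casework are completed, the proposal does not settle the conjecture.
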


We emphasize that we have just explored a few of the many possible models using continuous type values. Future work could focus on different agent behavior, e.g., employing different norms to compare with. Also, as indicated by our simulations, exploring the relationship between the agents' behavior and the obtained segregation strength seems promising.

\bibliographystyle{named}
\bibliography{references}

\appendix

\section{Computational Complexity}\label{apx:complexity}
	In this section, we consider the complexity of computing a social optimum for
	different social cost functions. First, we show that finding social optima
	is NP-hard for all of our game variants. To achieve this, we use the reduction
	from \textsc{3-Partition} employed by \cite{E+19} to prove that finding
	an optimal strategy profile is NP-hard for a special case of the standard Swap Schelling
	Game with multiple types.

	The same reduction is applicable for all of our Swap Games. It uses the fact
	that agents receive the lowest cost if and only if their only neighbors are
	agents of equal types. In the S-ADG and S-MDG this
	is clearly the case. In the S-CG this can also be accomplished with an
	appropriate choice of $\lambda$. Thus, the hardness in our case immediately
	follows.

	\begin{corollary}
		Given a graph $G$ and a type function $t: [n] \to [0,1]$, finding an optimal
		strategy profile on 2-regular graphs in the S-MDG, S-ADG, and the S-CG is NP-hard.
	\end{corollary}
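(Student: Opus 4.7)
The plan is to adapt, essentially verbatim, the reduction from \textsc{3-Partition} used by \citet{E+19} for the standard multi-type Swap Schelling Game, and to verify that it transfers to each of our three continuous-type cost functions. Recall that an instance of \textsc{3-Partition} consists of $3m$ positive integers $a_1,\ldots,a_{3m}$ with $\sum_p a_p=mB$ and $B/4<a_p<B/2$, and asks whether the multiset admits a partition into $m$ triples each summing to $B$. The reduction produces the graph $G$ consisting of $m$ vertex-disjoint cycles of length $B$ (which is $2$-regular) together with $n=mB$ agents partitioned into $3m$ groups of sizes $a_1,\ldots,a_{3m}$.

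The first step is to assign continuous type-values to the $3m$ groups so that cross-group neighborships become strictly costly. Pick distinct values $t_1,\ldots,t_{3m}\in[0,1]$ with minimum pairwise gap $\delta=\min_{p\neq q}|t_p-t_q|>0$, and assign every agent in the $p$-th group the type-value $t_p$; for the S-CG instance additionally fix any cutoff $\co\in(0,\delta)$. Under this choice, in all three cost models an agent $i$ attains individual cost $0$ in a profile $\sp$ if and only if every neighbor of $i$ has the same type-value as $i$: for S-MDG and S-ADG this is immediate from the definitions, while for S-CG it follows because any two agents from different groups are at type-distance $\geq\delta>\co$ and are therefore enemies of one another.

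Next I would establish the equivalence between social cost $0$ and YES-instances of \textsc{3-Partition}. A profile $\sp$ achieves $\scost(\sp)=0$ exactly when each cycle is monochromatic in the group-type sense; since the cycles have length $B$ and the constraint $B/4<a_p<B/2$ forces each cycle to contain agents from exactly three groups, a zero-cost profile exists if and only if the sizes $a_1,\ldots,a_{3m}$ can be partitioned into $m$ triples summing to $B$. The converse direction (a valid $3$-partition induces a zero-cost profile by simply filling each cycle with the three corresponding groups) is immediate.

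The only obstacle to worry about is that some near-optimal profile might impersonate an optimum: but in every one of our three cost models $\scost(\sp)$ is nonnegative and is strictly positive as soon as some cycle is non-monochromatic, because then at least one agent has a neighbor of a different group, which contributes a strictly positive summand (at least $\delta$ for S-MDG/S-ADG, at least $1/2$ for S-CG on $2$-regular graphs). Hence the optimal social cost equals $0$ precisely when the \textsc{3-Partition} instance is a YES-instance, so an algorithm computing a social optimum on $2$-regular graphs would decide \textsc{3-Partition}, establishing NP-hardness for each of S-MDG, S-ADG, and S-CG.
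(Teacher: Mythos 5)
Your high-level plan is the one the paper intends -- reuse the \textsc{3-Partition} reduction of \citet{E+19}, give the groups distinct type-values with a minimum gap $\delta$, pick $\lambda<\delta$ for the S-CG so that ``cost $0$'' coincides with ``all neighbors of equal type'' in all three models -- but your concrete instantiation of that reduction is inverted, and this breaks the central equivalence. You build $m$ cycles of length $B$ and $3m$ type-groups of sizes $a_1,\ldots,a_{3m}$. Since every group has size $a_p<B/2$, each length-$B$ cycle must host agents from at least three groups, so going around the cycle the type-value changes at least three times and the cycle necessarily contains bichromatic edges. Hence a profile of social cost $0$ \emph{never} exists in your construction, even for YES-instances; your claimed forward direction (``a valid $3$-partition induces a zero-cost profile by simply filling each cycle with the three corresponding groups'') is false, because the three boundary edges between the arcs give six agents strictly positive cost. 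One could try to salvage your construction by replacing the threshold $0$ with the exact minimum cost attainable when every group forms a single arc, but that requires a genuinely different and more delicate counting argument, and it is not what you wrote.

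The reduction actually needed (and the one consistent with the paper's later remark that the construction admits $\maxedge(\sigma)=0$ if and only if the instance is a YES-instance) swaps the roles of the numbers and the target sum: the graph is the disjoint union of $3m$ cycles of lengths $a_1,\ldots,a_{3m}$ (each length exceeds $B/4$, so cycles are well defined for strongly NP-hard instances), and there are $m$ type-groups, each containing exactly $B$ agents, with pairwise distinct type-values. A profile in which every agent has cost $0$ forces every cycle to be monochromatic, so each type-class of size $B$ must exactly fill a set of cycles whose lengths sum to $B$; the constraint $B/4<a_p<B/2$ forces these sets to be triples, yielding a valid $3$-partition, and conversely a $3$-partition immediately gives a zero-cost profile. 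With that correction, the rest of your argument (the choice of $\delta$, the cutoff $\lambda<\delta$, and the observation that any bichromatic edge contributes a strictly positive amount to the social cost in all three models) goes through as you describe.
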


 Let the \emph{maximum edge cost} of $\sigma$ be defined as $\maxedge(\sigma) = \max_{i,j \in [n]: \{\sigma(i), \sigma(j)\} \in E} d(i,j)$. If there is no pair of adjacent agents under $\sigma$ $\maxedge(\sigma) = 1$. Note that for the MDG the maximum edge cost corresponds to the maximum cost of an agent.

	Note that the construction just mentioned has a strategy profile $\sigma$ with ${\maxedge(\sigma) = 0}$ if
	and only if the given instance is in \textsc{3-Partition}. Thus, we also get the
	result that finding a strategy profile with minimum maximum edge cost is NP-hard.

	\begin{corollary}
		Deciding if there is a strategy profile $\sigma$ with maximum edge cost $\maxedge(\sigma)
		= 0$ is NP-hard.
	\end{corollary}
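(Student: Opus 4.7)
The plan is to recycle the \textsc{3-Partition} reduction already invoked in the preceding corollary and simply verify that the strategy profiles it produces with monochromatic edges are precisely the ones with $\maxedge(\sigma)=0$. First I would recall the construction: given a \textsc{3-Partition} instance of $3m$ positive integers summing to $mB$ (each strictly between $B/4$ and $B/2$), build the $2$-regular graph consisting of $m$ vertex-disjoint cycles of length $B$, and for the $\ell$-th integer $a_\ell$ introduce a group of $a_\ell$ agents all receiving a dedicated type-value $t_\ell \in [0,1]$, with the $t_\ell$'s chosen pairwise distinct (for instance $t_\ell=\ell/(3m+1)$).

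The key step is the equivalence. In one direction, a \textsc{3-Partition} solution gives, for each triple $(a_p,a_q,a_r)$ with $a_p+a_q+a_r=B$, a way to fill one cycle consecutively with the three corresponding groups, making every edge monochromatic and so yielding $\maxedge(\sigma)=0$. In the other direction, since $d(i,j)=0$ is equivalent to $t(i)=t(j)$ whenever the $t_\ell$'s are pairwise distinct, any profile with $\maxedge(\sigma)=0$ must use only monochromatic edges; then the set of agents placed on each (connected) cycle is a union of entire type-groups whose sizes sum to $B$, and the cardinality constraint $B/4<a_\ell<B/2$ forces that union to contain exactly three groups, which is precisely a valid 3-partition.

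The reduction is evidently polynomial: the graph has $mB$ nodes, with $B$ encoded in unary by the strong NP-hardness of \textsc{3-Partition}, and the rational type values $\ell/(3m+1)$ have $O(\log m)$-bit encodings. The only subtlety — and the mildest obstacle — is ensuring that the $t_\ell$'s are pairwise distinct so that the equivalence $d(i,j)=0 \Leftrightarrow t(i)=t(j)$ holds on the nose; the explicit choice above handles this immediately. Putting the equivalence together with polynomial-time computability yields the claimed NP-hardness of deciding whether there exists a $\sigma$ with $\maxedge(\sigma)=0$.
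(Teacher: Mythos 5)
There is a genuine flaw in your reconstruction of the reduction: you have swapped the roles of the items and the bins of \textsc{3-Partition}, and with your arrangement the claimed equivalence fails in the forward direction. If a cycle of length $B$ is filled ``consecutively with the three corresponding groups'' of pairwise distinct types $t_p,t_q,t_r$, then the three boundary edges between consecutive groups join agents of different types, so $\maxedge(\sigma)\geq\min\{|t_p-t_q|,|t_q-t_r|,|t_r-t_p|\}>0$. More generally, $\maxedge(\sigma)=0$ forces every edge to join equal-type agents, and hence (by transitivity along a connected component) forces every connected component to be entirely monochromatic. In your construction each component has $B$ nodes while every type-group has size $a_\ell<B/2<B$, so no strategy profile ever achieves $\maxedge(\sigma)=0$ and the reduction maps every instance to ``no''. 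Your backward direction contains the same error: a union of several entire type-groups on one cycle is never monochromatic edge-by-edge.

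The construction the paper leans on (from the \textsc{3-Partition} reduction of the preceding corollary) must therefore be oriented the other way: the \emph{components} carry the item sizes and the \emph{types} carry the bin capacity. Concretely, take $3m$ disjoint cycles where the $\ell$-th cycle has length $a_\ell$, and introduce $m$ pairwise distinct type-values with exactly $B$ agents of each. Then $\maxedge(\sigma)=0$ holds iff every cycle is monochromatic, iff the $B$ agents of each type exactly cover a subcollection of cycles with lengths summing to $B$; the constraint $B/4<a_\ell<B/2$ forces each such subcollection to consist of exactly three cycles, which is precisely a valid $3$-partition. Your remarks on strong NP-hardness (unary $B$) and on choosing pairwise distinct rational type-values are correct and carry over unchanged, but as written your equivalence does not hold and the proof does not go through.
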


	But even if we restrict to minimizing the maximum edge cost in a swap game,
	the problem remains NP-hard.

	\begin{theorem}
		Given a graph $G$, a type function $t: [n] \to [0,1]$ with $|V| =
		n$ agents, and a real number $s \in \mathbb{R}$, deciding if a strategy profile $\sigma$ with
		maximum edge cost $\maxedge(\sigma) \le s$ exists is NP-hard.
	\end{theorem}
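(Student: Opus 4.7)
The plan is to reduce from the classical \textsc{Bandwidth} problem, shown NP-hard by Papadimitriou in 1976: given an undirected graph $H=(U,F)$ on $m$ vertices and a positive integer $B$, decide whether there exists a bijection $f:U\to [m]$ such that $|f(u)-f(v)|\le B$ for every edge $\{u,v\}\in F$. This is natural to pick because an instance of our decision problem is essentially a question about laying out agents of prescribed types on the nodes of $G$ so that every edge's endpoints are numerically close, which is exactly the flavour of \textsc{Bandwidth} once the types are chosen to be equally spaced.

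The reduction is as follows. Given a \textsc{Bandwidth} instance $(H,B)$ with $m$ vertices, set $G=H$, $n=m$, and define the type function by $t(i)=(i-1)/(n-1)$ for every $i\in[n]$; finally set the threshold $s=B/(n-1)$. A strategy profile $\sigma$ is a bijection between agents and nodes, so $\sigma^{-1}:V\to [n]$ is a labeling of $V(G)$ with the integers $1,\dots,n$. For every edge $\{u,v\}\in E$, the type-distance between the two agents assigned to its endpoints equals $|t(\sigma^{-1}(u))-t(\sigma^{-1}(v))|=|\sigma^{-1}(u)-\sigma^{-1}(v)|/(n-1)$. Therefore $\maxedge(\sigma)\le s$ holds if and only if $\max_{\{u,v\}\in E}|\sigma^{-1}(u)-\sigma^{-1}(v)|\le B$, i.e., if and only if $\sigma^{-1}$ is a bandwidth-$B$ labeling of $H$. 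Since the construction is polynomial and the equivalence is both ways, NP-hardness transfers.

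The proof is essentially a one-line correspondence once the right source problem is identified; the main conceptual step is recognising that, by choosing the type multiset to be the $n$ equally spaced values $0,\tfrac{1}{n-1},\dots,1$, the maximum type-difference across an edge is, up to the uniform scaling factor $1/(n-1)$, the same quantity that \textsc{Bandwidth} asks to minimise. Hence the only potential obstacle is cosmetic: the threshold $s$ is rational of polynomial bit-length, and the instance is encoded in size polynomial in that of $(H,B)$, so the reduction is genuinely polynomial-time. If one additionally wanted NP-hardness on restricted topologies, one could invoke Monien's result that \textsc{Bandwidth} remains NP-hard on caterpillars, yielding hardness of our problem even on very simple trees without any change to the reduction.
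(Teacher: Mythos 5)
Your proposal is correct and follows essentially the same route as the paper: a reduction from \textsc{Bandwidth} using equally spaced types $t(i)=\frac{i-1}{n-1}$ and threshold $s=\frac{B}{n-1}$, so that the maximum edge cost is the bandwidth of the labeling scaled by $\frac{1}{n-1}$. Your write-up is in fact slightly more explicit about the two-way equivalence and the polynomiality of the encoding than the paper's own proof.
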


	\begin{proof}
		We show this with a reduction from the \textsc{Bandwidth} problem shown to
		be NP-hard in \cite{bandwidth}, which is virtually the same problem. In the
		\textsc{Bandwidth} problem, an instance consists of a graph $G = (V,E)$ and
		a number $k \in \mathbb{N}$. The problem asks whether there is a bijection $\ell: V
		\to [ |V| ]$, such that $\max_{\{u,v\} \in E}|\ell(u) - \ell(v)| \le k$. This
		bijection directly translates to a strategy profile in our problem by assigning a
		node $v\in V$ to the agent with type $\frac{\ell(v) - 1}{|V| - 1}$ and the
		another way around. Thus, a \textsc{Bandwidth}-instance directly corresponds
		to an instance in our problem by setting $s = \frac{k}{|V| - 1}$, which
		completes the proof.
	\end{proof}

\section{Omitted Details from Section~\ref{sec:existence}}\label{apx:existence}

\propthree*
\begin{proof}
			We show this proposition using a more general version of the ordinal potential function
			introduced by \cite{CLM18}. Let $G$ be a $\Delta$-regular graph. We define our ordinal potential function as the sum of distances over all edges in the graph according to a given strategy profile
			$\sigma$: \[\Phi(\sigma) =
			\sum_{\{u, v\} \in E} d(\sigma^{-1}(u), \sigma^{-1}(v)) = 2 \cdot
			\scost(\sigma).\] We now
			show that this is an ordinal potential function. Let $i, j \in [n]$ be two
			agents performing an improving swap in $\sigma$. Therefore, both costs decrease and we have
			\begin{equation} \label{eq:costa}
				\sum_{k \in N_i(\sigma_{ij})} d(i,k) <
				\sum_{k \in N_i(\sigma)} d(i,k)
			\end{equation}
			and
			\begin{equation} \label{eq:costb}
				\sum_{k \in N_j(\sigma_{ij})} d(j,k) <
				\sum_{k \in N_j(\sigma)} d(j,k).
			\end{equation}
			Notice that this is just the cost function of $i$ and $j$ multiplied by $\Delta$. Because
			agents only influence their neighborhood and $\sigma_{ij}(i) = \sigma(j)$ and $\sigma_{ij}(j)
			= \sigma(i)$, we get
			\begin{align*}
				\Phi(\sigma_{ij}) = \Phi(\sigma) & - \left(\sum_{k \in N_i(\sigma)}
				d(i,k) + \sum_{k \in N_j(\sigma)} d(j,k)\right)           \\
				& + \sum_{k \in N_i(\sigma_{ij})}
				d(i,k) + \sum_{k \in N_j(\sigma_{ij})} d(j,k).
			\end{align*}

			\noindent By~\autoref{eq:costa} and~\autoref{eq:costb}, this change is negative and we get
			our desired result $\Phi(\sigma_{ij}) < \Phi(\sigma)$ which makes the S-ADG a potential game.
		\end{proof}

\propfour*
\begin{proof}
Let $G$ be an almost $\Delta$-regular graph.
Consider the function $\Phi(\sigma) = \lvert \{u,v\}\in E \colon \sigma^{-1}(u)\in N_{\sigma^{-1}(v)}^+(\sigma)\}\lvert$ which counts the number of \textit{monochromatic} edges, i.e., all edges whose endpoints are occupied by agents who are friends.
We prove the claim by showing that $\Phi$ is an ordinal potential function.
Let $i,j\in [n]$ be two agents performing an improving swap.

Observe that a swap between $i$ and $j$ only influences their direct neighborhood.
Therefore $\Phi(\sigma_{ij})-\Phi(\sigma) = \lvert N_i^+(\sigma_{ij})\rvert + \lvert N_j^+(\sigma_{ij})\rvert - \lvert N_i^+(\sigma)\rvert - \lvert N_i^+(\sigma)\rvert$ and we want to prove $\Phi(\sigma_{ij})-\Phi(\sigma)>0$ by showing that $\lvert N_i^+(\sigma_{ij})\rvert + \lvert N_j^+(\sigma_{ij})\rvert - \lvert N_i^+(\sigma)\rvert - \lvert N_i^+(\sigma)\rvert > 0$.

First, consider the case $\deg(\sigma(i))=\deg(\sigma(j))$.
Because the swap is improving for both agents, we have
\begin{align}
    \frac{\lvert N_i^+(\sigma)\rvert}{\Delta} < \frac{\lvert N_i^+(\sigma_{ij})\rvert}{\Delta},
\end{align} and
\begin{align}
    \frac{\lvert N_j^+(\sigma)\rvert}{\Delta} < \frac{\lvert N_j^+(\sigma_{ij})\rvert}{\Delta}.
\end{align}
This implies \[\lvert N_i^+(\sigma_{ij})\rvert + \lvert N_j^+(\sigma_{ij})\rvert - \lvert N_i^+(\sigma)\rvert - \lvert N_i^+(\sigma)\rvert >0\]
as desired.

Second, consider the case $\deg(\sigma(i))\neq \deg(\sigma(j))$. Assume, without loss of generality, $\deg(\sigma(i))=\Delta$ and $\deg(\sigma(j))=\Delta +1$.
Because the swap is improving for both agents, we have
\begin{align}\label{i-equation}
    \frac{\lvert N_i^+(\sigma)\rvert}{\Delta} < \frac{\lvert N_i^+(\sigma_{ij})\rvert}{\Delta + 1},
\end{align} and
\begin{align}\label{j-equation}
    \frac{\lvert N_j^+(\sigma)\rvert}{\Delta + 1} < \frac{\lvert N_j^+(\sigma_{ij})\rvert}{\Delta}.
\end{align}
From~\autoref{i-equation} we derive $\lvert N_i^+(\sigma)\rvert < \lvert N_i^+(\sigma_{ij})\rvert$.
We show that~\autoref{j-equation} implies $\lvert N_j^+(\sigma)\rvert \leq \lvert N_j^+(\sigma_{ij})\rvert$.
Let $x=\lvert N_j^+(\sigma)\rvert$.
Assume towards contradiction $\lvert N_j^+(\sigma)\rvert > \lvert N_j^+(\sigma_{ij})\rvert$, i.e., $x-y = \lvert N_j^+(\sigma_{ij})\rvert$ for some $y \in \mathbb{N}^+$. If we plug this into~\autoref{j-equation}, we get
\begin{alignat*}{2}
    &\frac{x}{\Delta + 1} &&< \frac{x-y}{\Delta} \\
    \Longleftrightarrow\quad &\Delta x &&< (x-y) (\Delta +1) \\
    \Longleftrightarrow\quad &\Delta x &&< \Delta x + x - y (\Delta +1) \\
    \Longleftrightarrow\quad & y (\Delta +1)&&<x,
\end{alignat*}
but $x=\lvert N_j^+(\sigma)\rvert\leq \deg(\sigma(j)) = \Delta + 1$. This is a contradiction.
Therefore we have
\[\lvert N_i^+(\sigma_{ij})\rvert + \lvert N_j^+(\sigma_{ij})\rvert -\lvert N_i^+(\sigma)\rvert - \lvert N_j^+(\sigma)\rvert  >0. \qedhere\]
\end{proof}

\theoremjumppathext*
\begin{proof}
First of all, observe that, by construction, the JE $\sigma$ for J-HIS-MDGs returned by the algorithm defined in the proof of~\autoref{jump-path} verifies the following property. Let $i$ be a non-isolated agent adjacent to an empty spot in $\sigma$ and let $j$ be her unique neighbor in $\sigma$. For any empty spot separating two agents $k$ and $k+1$, it holds that $d(i,j)\leq d(k,k+1)$.

Consider a J-HIS-ADG and fix an agent $i$. Since $\sigma$ is a JE for the related J-HIS-MDG, the maximum distance of $i$ towards her neighbors does not decrease after a jump. So, in order for the average distance to decrease after jumping, the minimum distance of $i$ towards her neighbors needs to decrease after jumping. Assume that $i$ jumps to an empty spot located at the right of her current position along the path. For the minimum distance of $i$ to decrease, as $\sigma$ locates the agents along the path in an ordered-based way, $i$ has to be at the left of the empty spot in $\sigma$ and $i$ must have two neighbors in $\sigma_i$ having consecutive indices, say $j$ and $j+1$. So, since $i-1$ is the unique neighbor of $i$ in $\sigma$, we have $d(i,i-1)>d(i,j)\geq d(i,i+1)$, which contradicts the property stated above. If~$i$ jumps to an empty spot left of her current position along the path, a symmetric argument applies.

Consider a J-HIS-CG and fix an agent $i$. In a CG on a path, an agent can only pay $0$, $1/2$, or $1$. In order for $i$ to perform an improving swap, she must be paying either $1/2$ or $1$ in $\sigma$. If $i$ is paying $1/2$, she must end up paying~$0$ in~$\sigma_i$, meaning that her maximum distance towards her neighbors has to decrease after jumping, thus contradicting the fact that $\sigma$ is a JE for the related J-HIS-MDG. If $i$ is paying $1$ and ends up paying $0$ in $\sigma_i$, again her maximum distance has to decrease after jumping which generates a contradiction. So, $i$ is paying $1$ and ends up paying $1/2$ in $\sigma_i$. This means that the minimum distance of $i$ towards her neighbors needs to decrease after jumping falling within the same case considered above for J-HIS-ADGs.
\end{proof}

\section{Quality of Equilibria}\label{apx:quality}
In this section, we provide an overview of our results for the $\hpoa$ and $\hpos$ of the considered games. For the former, in particular, we were able to provide a full characterization, while, for the latter, we give results for games played on specific topologies.  Our findings are summarized in~\autoref{tab:poa} and~\autoref{tab:pos}.

\subsection{Price of Anarchy}
\begin{table}[b]
    \centering
\resizebox{0.48\textwidth}{!}{\begin{tabular}{@{}lrrr}
       \toprule
       $\hpoa$ & MDG & ADG & CG \\
       \midrule
       Swap
       & $\Theta(n)$ (\cref{thm:Swap_MDG_PoA}) & $\Theta(\Delta n)$ (\cref{thm:Swap_ADG_PoA}) & $\infty$ (\cref{thm:Swap_CG_PoA}) \\
       Jump UIS
       & $\infty$ (\cref{thm:Jump_MDG_PoS}) & $\infty$ (\cref{thm:Jump_ADG_PoA}) & $\infty$ (\cref{thm:Swap_CG_PoA}) \\
       Jump HIS
       & $\infty$ (\cref{thm:Jump_HISMDG_PoA}) & $\infty$ (\cref{thm:Jump_ADG_PoA}) & $\infty$ (\cref{thm:Swap_CG_PoA}) \\
       \bottomrule
   \end{tabular}}
    \caption{Bounds on the Price of Anarchy. The maximum degree of the underlying graph topology is denoted by $\Delta$.}
   \label{tab:poa}
\end{table}
Under a worst-case perspective, we can generally prove the tremendously bad performance of equilibria, as there can be equilibria of positive social cost even when there exists a social optimum attaining a social cost equal to zero, which results in an unbounded $\hpoa$. The only exceptions are S-MDGs and S-ADGs for which the $\hpoa$ becomes $\Theta(n)$ and $\Theta(n\Delta)$, respectively, where $\Delta$ denotes the maximum degree of the underlying graph topology.

\subsubsection{Maximum Type-Distance Game}
    \begin{theorem} \label{thm:Swap_MDG_PoA}
    The $\hpoa$ for the S-MDG is in $\Theta(n)$. The lower bound holds also for games played on paths.
    \end{theorem}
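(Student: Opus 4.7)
I will prove matching upper and lower bounds of order $n$. The upper bound will hold for any connected graph, while the lower bound will already arise on paths.

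For the upper bound, the plan is a two-part estimate. First, note that every agent's cost in the S-MDG is bounded above by the range of the type function, i.e., $\cost{i}(\sigma)\le t(n)-t(1)$ for every agent $i$ and every profile $\sigma$. Hence any SE $\sigma$ satisfies $\scost(\sigma)\le n\,(t(n)-t(1))$. For the social optimum, I would fix a path in $G$ connecting $\sigma^*(1)$ and $\sigma^*(n)$, and if the sequence of agents along this path is $a_0=1,a_1,\ldots,a_k=n$, then for every $\ell\ge 1$ the agent $a_{\ell-1}$ is a neighbor of $a_\ell$ in $\sigma^*$, so $\cost{a_\ell}(\sigma^*)\ge d(a_\ell,a_{\ell-1})$. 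Summing these inequalities and applying the triangle inequality telescopes to $\scost(\sigma^*)\ge \sum_{\ell=1}^k d(a_\ell,a_{\ell-1}) \ge t(n)-t(1)$. Dividing the two estimates yields $\hpoa\le n$, and the bound is trivial when $t(1)=t(n)$ since both quantities are then zero.

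For the lower bound on paths, I would construct an SE of social cost $\Omega(n)$ whose social optimum has cost $O(1)$. Take the path with $n$ nodes, $n$ even, assign type $0$ to half of the agents and type $1$ to the other half, and place them in alternating blocks of size exactly two: $0,0,1,1,0,0,1,1,\ldots$. The sorted profile $0,\ldots,0,1,\ldots,1$ is a social optimum of cost exactly $2$. To prove that the block profile is an SE, I would argue as follows: each internal agent of a block sees only same-type neighbors and has cost $0$, so her cost cannot strictly decrease; each boundary agent has cost $1$, and after any swap with an opposite-type agent the destination position still has one same-type and one opposite-type neighbor (because blocks have size at least two), so the new cost is again $1$. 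Consequently no swap is profitable for both sides. A direct count gives social cost $2\cdot(n/2-1)=n-2$, yielding $\hpoa\ge (n-2)/2=\Omega(n)$.

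The main subtlety is the lower-bound construction: the seemingly natural fully alternating pattern $0,1,0,1,\ldots$ is actually \emph{not} an SE, because two appropriately chosen opposite-type boundary agents can swap and each land in a fully monochromatic neighborhood, strictly reducing their cost from $1$ to $0$. Blocks of size exactly two are the smallest patch that kills this escape route, since a swapped agent always retains at least one same-type neighbor at the destination, freezing her cost at $1$ instead of $0$; this is the crux that makes the construction work.
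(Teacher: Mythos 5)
Your proof is correct and follows the paper's overall structure: the same per-agent bound $\cost{i}(\sigma)\le t(n)-t(1)$ giving $\scost(\sigma)\le n(t(n)-t(1))$, the same key inequality $\scost(\sigma)\ge t(n)-t(1)$ for every profile, and the identical lower-bound construction (blocks of size two of types $0$ and $1$ on a path, SE of cost $n-2$ versus an optimum of cost $2$). The one place you genuinely diverge is in how the inequality $\scost(\sigma)\ge t(n)-t(1)$ is obtained: the paper iteratively expands the neighborhood of agent $1$, at each stage locating an agent who pays at least the incremental type gap, whereas you walk along a simple path in $G$ from $\sigma(1)$ to $\sigma(n)$, charge each agent on it the distance to her predecessor, and telescope via the triangle inequality. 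Your version is arguably cleaner and more self-contained (it only needs that the nodes of a simple path carry distinct agents, which holds since swap profiles are injective), while the paper's expansion argument avoids committing to a particular path; both yield the same bound.

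Two small points of care in your equilibrium check. First, you only discuss swaps between opposite-type agents; a swap between two same-type agents also changes both neighborhoods and must be ruled out (it is, since every interior destination keeps a neighbor of the opposite type, and the two path endpoints, the only nodes with monochromatic neighborhoods, are held by agents of cost $0$ who never participate in a profitable swap). The paper's phrasing handles all cases at once: in this instance costs are in $\{0,1\}$, so improving means reaching cost $0$, which is possible only at the two endpoints. Second, your closing sentence says the swapped agent ``retains at least one same-type neighbor at the destination, freezing her cost at 1''; what actually freezes her cost at $1$ is that the destination retains a neighbor of the \emph{vacating} agent's type, i.e., of the type \emph{opposite} to the arriving agent. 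The substance of your argument is right, but that sentence should be reworded.
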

    \begin{proof}

   For any strategy profile $\sigma$, let $i$ be the agent with the largest index among the ones which are adjacent to agent $1$ in $\sigma$, which yields $\cost 1(\sigma)=t(i)-t(1)$. Now, let $i'$ be the agent with the largest index among the ones which are adjacent to the subgraph induced by $\sigma(1)$ and all of her neighbors. There is an agent residing in this subgraph other than agent $1$ that has to pay at least $t(i')-t(i)$. By iterating this reasoning, we get $\scost(\sigma)\geq t(n)-t(1)$ for any strategy profile $\sigma$. Since it also holds that $\scost(\sigma)\leq n(t(n)-t(1))$ for any strategy profile $\sigma$, it follows that the $\hpoa$ is at most $n$.

    For the lower bound, let $G$ be an $n$-node path, with $n$ even, and consider $n$ agents half of which have type $0$, and the remaining ones have type $1$. The social optimum is attained by listing all type-$0$ agents followed by all type-$1$ agents for a social cost of $2$. A SE $\sigma$ of social cost $n-2$ can be obtained by replicating the sequence made of two type-$0$ agents followed by two type-$1$ agents. In $\sigma$, all agents except for the first and the last are paying a non-zero cost which amounts to $1$. To see that this is a SE, observe that the only way for an agent to improve is to end up paying a cost of zero which requires being surrounded by agents of the same type. But this may only happen when swapping with either the first or the last agent of the sequence who, being paying zero, are not interested in swapping.
    \end{proof}

    \begin{theorem} \label{thm:Jump_HISMDG_PoA}
    The $\hpoa$ for the J-HIS-MDG is unbounded.
    \end{theorem}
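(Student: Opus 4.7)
The plan is to exhibit a single J-HIS-MDG instance whose social optimum has cost $0$ while some JE has strictly positive social cost, which immediately gives $\hpoa=\infty$. The main design challenge is that, under HIS, an isolated agent pays $0$, so I need a graph that simultaneously admits an independent set large enough to isolate every agent (making OPT $=0$ possible) and a different configuration whose empty nodes are all ``trapped'' between agents with extremal types, so that no jumping agent can escape to isolation.

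I will reuse the $K_{2,e}$ gadget from~\autoref{thm:Jump_HISMDG_Comp_path}, augmented with a pendant vertex to accommodate a middle-type agent. Concretely, take $G$ on vertex set $\{u,v,w,s_1,s_2\}$ with edges $\{u,s_1\},\{u,s_2\},\{v,s_1\},\{v,s_2\},\{u,w\}$, and three agents with types $t(1)=0$, $t(2)=1/2$, $t(3)=1$. First, I will exhibit the social optimum: placing agent $2$ on $w$ and agents $1,3$ on $s_1,s_2$ (leaving $u$ and $v$ empty) makes every agent's only $G$-neighbors empty, so all three are isolated and the social cost is $0$. Next, I will describe the candidate JE: put agent $1$ on $u$, agent $3$ on $v$, and agent $2$ on $w$; the empty nodes are $s_1,s_2$. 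Here agents $1$ and $2$ each pay $1/2$ to one another, agent $3$ is isolated, and the social cost is $1$.

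To finish I need to verify that no profitable jump exists. The hard part is to see that the $K_{2,e}$ structure prevents any escape to isolation: every empty node in $S=\{s_1,s_2\}$ is adjacent in $G$ to both $u$ and $v$, so any agent jumping to $S$ remains adjacent to at least one of the extremal agents. I will argue case-by-case over the six possible jumps: if agent $1$ or $3$ jumps into $S$, she becomes adjacent to the opposite extremal and pays cost $1$, strictly worse than her current $1/2$ or $0$; if agent $2$ jumps into $S$, she becomes adjacent to both extremal agents and pays $\max(1/2,1/2)=1/2$, which is not a strict improvement. Hence the profile is a JE, and the ratio of its social cost $1$ to OPT $=0$ yields $\hpoa=\infty$.
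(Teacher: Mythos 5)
Your proof is correct and follows essentially the same approach as the paper's: exhibit a small explicit instance whose social optimum has cost $0$ while some JE has strictly positive social cost. The paper uses a different gadget (a $4$-clique with a pendant two-node path, five agents of two types, one empty node), whereas you use $K_{2,2}$ plus a pendant with three types and two empty nodes, exploiting the HIS isolation rule more directly; both constructions and the accompanying case analyses are valid.
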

    \begin{proof}
    Let $G_1$ be a clique of four nodes with a distinguished node denoted by $u$, and let $G$ be the graph obtained from $G_1$ by adding two nodes $v$ and $w$ such that $v$ is adjacent to $u$ and $w$ is adjacent to $v$. Consider the J-HIS-MDG defined over $G$ in which there are three agents of type $0$ and two agents of type $
    1$. The strategy profile $\sigma^*$, in which the three agents of type $0$ are located on the three nodes of $G_1$ other than $u$, has $\scost(\sigma^*)=0$. On the other hand, the strategy profile $\sigma$ in which the three agents of type $0$ are located on $u$, $v$, and $w$ is an equilibrium such that $\scost(\sigma)=3$. So, the claim follows.
    \end{proof}

     \subsubsection{Average Type-Distance Game}

\begin{theorem} \label{thm:Swap_ADG_PoA}
    			The $\hpoa$ for the S-ADG is in
    			$\Theta(\Delta n)$, where $\Delta$ is the maximum degree of the underlying graph.
    		\end{theorem}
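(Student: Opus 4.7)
The plan is to establish matching $O(n\Delta)$ and $\Omega(n\Delta)$ bounds.

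For the upper bound, observe first that in the ADG every agent's average cost is at most the global type-spread $t(n)-t(1)$, so $\scost(\sigma)\le n(t(n)-t(1))$ for every profile $\sigma$. To lower-bound the optimum I would rewrite the social cost as a sum over edges,
\[
\scost(\sigma) \;=\; \sum_{\{u,v\}\in E} d(\sigma^{-1}(u),\sigma^{-1}(v))\!\left(\tfrac{1}{\deg(u)}+\tfrac{1}{\deg(v)}\right) \;\ge\; \tfrac{2}{\Delta}\sum_{\{u,v\}\in E} d(\sigma^{-1}(u),\sigma^{-1}(v)),
\]
which holds for every profile. Applying this to $\sigma^*$ and using that $G$ is connected to pick any path from $(\sigma^*)^{-1}(1)$ to $(\sigma^*)^{-1}(n)$, the triangle inequality on the type-distances telescopes along the path to give $\sum_{e\in E} d(e)\ge t(n)-t(1)$, hence $\scost(\sigma^*)\ge 2(t(n)-t(1))/\Delta$. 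Dividing the two bounds yields $\hpoa\le n\Delta/2$.

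For the lower bound I need a $\Delta$-regular family achieving $\hpoa=\Omega(n\Delta)$, i.e.\ a mixed SE of social cost $\Theta(n)$ coexisting with a clustered optimum of social cost $\Theta(1/\Delta)$. The base construction I have in mind is two copies of $K_{\Delta+1}$ glued by an edge-swap: remove one edge from each clique and add two cross-edges between the freed endpoints, so that the resulting graph is $\Delta$-regular and connected on $n=2(\Delta+1)$ vertices. Placing all type-$0$ agents in one clique and all type-$1$ agents in the other leaves only the two cross-edges bichromatic, yielding social optimum $4/\Delta$. For the bad SE I would interleave the two types evenly inside each clique, so every agent sees roughly $\Delta/2$ different-type neighbors and pays $\Theta(1)$. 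For $\Delta=2$ this blueprint degenerates to the alternating-block pattern $(0,0,1,1,\ldots)$ on $C_n$, which is already a cost-$n/2$ SE against a cost-$2$ optimum, giving $\hpoa=\Omega(n)=\Omega(n\Delta)$ in that regime; for larger $\Delta$ the cliques' vertex-transitivity should preserve the same ratio of $\Theta(n\Delta)$.

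The main obstacle is verifying that the interleaved profile is genuinely an SE. Within-clique swaps of same-type agents are neutral because a clique is vertex-transitive, and across-clique swaps preserve the fifty-fifty type split on each side, but the four endpoints of the cross-edges are special: a careful case analysis is needed to rule out mutual improvements in which one or both swappers sit on a cross-edge. The upper bound, by contrast, is a clean two-line counting argument once the per-edge decomposition of the ADG social cost is in hand.
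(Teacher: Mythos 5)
Your upper bound is correct and in fact cleaner than the paper's. The paper lower-bounds $\scost(\sigma^*)$ by an iterative ``expanding neighborhood'' argument that peels off agents of successively larger index, obtaining $\scost(\sigma)\geq (t(n)-t(1))/\Delta$ for every profile; your per-edge decomposition $\scost(\sigma)=\sum_{\{u,v\}\in E}d(\sigma^{-1}(u),\sigma^{-1}(v))\bigl(\tfrac{1}{\deg(u)}+\tfrac{1}{\deg(v)}\bigr)$ followed by telescoping the triangle inequality along a simple path from agent $1$'s node to agent $n$'s node gives the same conclusion (with a factor $2$ to spare) in one line. That part stands.

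The lower bound, however, has a fatal gap: the interleaved profile on two glued cliques is not a swap equilibrium, and the obstacle you flag is not confined to the four cross-edge endpoints. Take a non-special type-$0$ agent $i$ in clique $A$ and a non-special type-$1$ agent $j$ in clique $B$, each with $(\Delta+1)/2$ opposite-type neighbors and cost $\frac{(\Delta+1)/2}{\Delta}$. Your claim that an across-clique swap ``preserves the fifty-fifty type split on each side'' is false: the swap moves one type-$0$ agent out of $A$ into $B$ and one type-$1$ agent out of $B$ into $A$, so after the swap $i$ sees only $(\Delta+1)/2-1$ type-$1$ neighbors (agent $j$ has left $B$), and symmetrically for $j$. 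Both costs strictly drop by $1/\Delta$, the swap is profitable, and iterating this cascade drives the configuration all the way to full segregation; the same counting shows no balanced or near-balanced split of two cliques is stable. This is exactly why the paper does not use a clique pair: its gadget attaches to a large clique $G_0$ a chain of small constant-degree subgraphs whose ``middle'' agents pay exactly $1/3$ with three neighbors, so the only improving move would have to reach cost $0$, which is unattainable -- an integrality trick that a clique cannot provide, since in a clique only the type counts matter and the count dynamics converge to segregation. Your $\Delta=2$ fallback (blocks of two on $C_n$) is a valid $\Omega(n)$ instance for that regime, precisely because there every agent pays $1/2$ and the only smaller cost is $0$, but it does not extend to growing $\Delta$, so the $\Omega(n\Delta)$ bound is not established.
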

      \begin{proof}
      For the upper bound, we can use the same argument exploited in the proof of \autoref{thm:Swap_MDG_PoA}. For any strategy profile $\sigma$, let $i$ be the agent with the largest index among the ones which are adjacent to agent $1$ in $\sigma$, which yields $\cost 1(\sigma)\leq\frac{t(i)-t(1)}{\Delta}$. Now, let $i'$ be the agent with the largest index among the ones which are adjacent to the subgraph induced by $\sigma(1)$ and all of her neighbors. There is an agent residing in this subgraph other than agent $1$ that has to pay at least $\frac{t(i')-t(i)}{\Delta}$. By iterating this reasoning, we get $\scost(\sigma)\geq \frac{t(n)-t(1)}{\Delta}$ for any strategy profile $\sigma$. Since it also holds that $\scost(\sigma)\leq n(t(n)-t(1))$ for any strategy profile $\sigma$, it follows that also the $\hpoa$ under the social function $\scost$ is at most $\Delta n$.

      \begin{figure}
          \centering
          \includegraphics[scale=0.15]{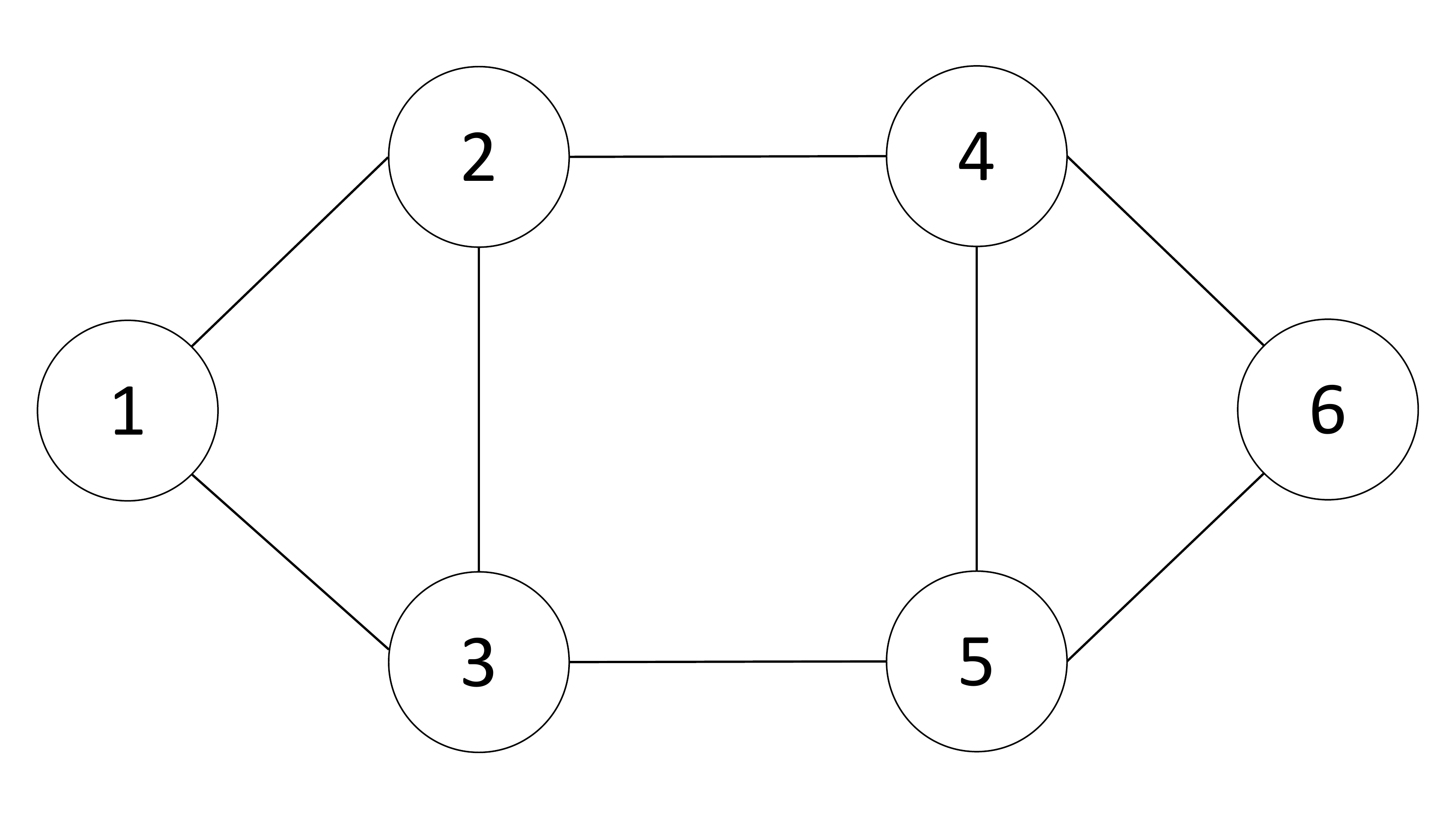}
          \caption{Graph $G_i$, $i\in [2h]$, used as a gadget in the proof of \autoref{thm:Swap_ADG_PoA}. Node $1$ is called the {\em leftmost node}, node $6$ is called the {\em rightmost node}, and all other nodes are called {\em middle nodes}.}
          \label{fig-graph}
      \end{figure}

      For the lower bound, consider a graph $G$ obtained by the union of $2h+1$ subgraphs $G_0,G_1,\ldots,G_{2h}$. $G_0$ is a clique of $k$ nodes with two distinguished nodes $x$ and $y$ and, for every $i\in [2h]$, $G_i$ is the graph depicted in~\autoref{fig-graph}. $G$ is obtained as follows: graphs $G_1,\ldots,G_h$ are joined by sharing their leftmost node which gets called $u$; similarly, graphs $G_{h+1},\ldots,G_{2h}$ are also joined by sharing their left node which gets called $v$; finally, $G$ is made connected by creating edges $\{u,x\}$ and $\{v,y\}$. So, $G$ has a total of $n=k+10h+2$ nodes and $\Delta=k$. We choose $k$ and $h$ such that $k$ is even and $k=10h+2$, so that $k=n/2\geq 4$ and $h=n/20-2$. The $n$ agents are such that half of them have type $0$ and the remaining half have type $1$.

      Consider the strategy profile $\sigma$ in which $G_0$ contains $k/2$ agents of type $0$ and $k/2$ agents of type $1$ in such a way that $x$ is assigned an agent of type $0$ and $y$ an agent of type $1$. Each graph $G_i$ with $i\in [h]$, has an agent of type $0$ on the common leftmost node, $2$ agents of type $0$ on nodes $2$ and $3$, and $3$ agents of type $1$ on the remaining $3$ ones. Each graph $G_i$ with $i\in [2h]\setminus [h]$, has an agent of type $1$ on the common leftmost node, $2$ agents of type $1$ on nodes $2$ and $3$, and $3$ agents of type $0$ on the remaining $3$ ones.

      We shall prove that $\sigma$ is a SE. Consider a swap between two agents $i$ and $j$ of types $0$ and $1$, respectively. These agents cannot both belong to $G_0$. In fact, if $\sigma(i)\neq x$ and $\sigma(j)\neq y$, $i$ and $j$ do not change their neighborhood by swapping since $G_0$ deprived of both $x$ and $y$ is a clique with no edges towards the other subgraphs. If $\sigma(i)=x$, then $\cost j(\sigma_{ij})=\frac{k+2}{2k}$. If $\sigma(j)=y$, then $\cost j(\sigma)=1/2$, while, if $\sigma(j)\neq y$, then $\cost j(\sigma)=\frac{k}{2k-2}$. In both cases, $\cost j(\sigma)\leq\cost j(\sigma_{ij})$, so $j$ is not interested in swapping. Outside $G_0$, only the four middle nodes of each graph $G_\ell$, $\ell\in [2h]$, are paying a non-zero cost and, more precisely, they are all paying a cost of $1/3$. So, they are the only ones who can be interested in swapping. These agents all have $3$ neighbors, so, if a swap occurs between them, the only way to improve on a cost of $1/3$ is to get a cost of $0$ in $\sigma_{ij}$, which is not possible. The only left possible case is a swap between an agent located in a middle node of a graph $G_\ell$, $\ell\in [2h]$, and an agent in $G_0$. However, whoever ends up in $G_0$ pays at least $\frac{k-2}{2(k-2)}$ which is never better than $1/3$. So, $\sigma$ is a SE. The social cost of $\sigma$ is at least $\frac{k}{2}+\frac{8h}{3}=\Omega(\Delta)$.

      In the strategy profile in which all agents of type $0$ are assigned to $G_0$, only the agents located at $x$, $y$, $u$ and $v$ pays a non-zero cost for a social cost of $\frac{2}{k}+\frac{2}{h+1}=O(1/n)$. Thus a lower bound of $\Omega(\Delta n)$ of the $\hpoa$ follows.
      \end{proof}

    \begin{theorem} \label{thm:Jump_ADG_PoA}
    			The $\hpoa$ for both the J-UIS-ADG and the J-HIS-ADG is unbounded even on paths.
    		\end{theorem}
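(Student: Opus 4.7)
The plan is to exhibit a single concrete path instance that simultaneously witnesses $\hpoa=\infty$ for both the J-UIS-ADG and the J-HIS-ADG. Consider the path on $7$ nodes with vertices $v_1,v_2,\ldots,v_7$ and six agents, three of type $0$ and three of type $1$.

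For the social optimum $\sigma^*$, I would place the type-$0$ agents at $v_1,v_2,v_3$ and the type-$1$ agents at $v_5,v_6,v_7$, leaving $v_4$ empty. Every agent then has only same-type (or empty) neighbors and no agent is isolated, so $\scost(\sigma^*)=0$ in both regimes. For the jump equilibrium $\sigma$, I propose the profile that places the type-$0$ agents at $v_1,v_2,v_7$ and the type-$1$ agents at $v_3,v_4,v_5$, with $v_6$ empty. A direct computation yields per-agent costs $0,\tfrac12,\tfrac12,0,0$ (for the agents at $v_1,v_2,v_3,v_4,v_5$) plus $c$ (for the agent isolated at $v_7$), where $c=0$ under HIS and $c=1$ under UIS. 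Hence $\scost(\sigma)=1$ in the HIS case and $\scost(\sigma)=2$ in the UIS case, both strictly positive.

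The main step is to verify that $\sigma$ is a JE in both regimes. Since $v_6$ is the only empty node, it suffices to rule out a profitable jump into $v_6$ by each of the six agents. The potential neighbors of $v_6$ are $v_5$ (type $1$) and $v_7$ (type $0$). Any jumper coming from $\{v_1,v_2,v_3,v_4\}$ keeps both these neighbors and pays $(1+0)/2=\tfrac12$: this is strictly worse for $v_1$ and $v_4$ (whose current cost is $0$) and merely ties the current cost $\tfrac12$ of $v_2$ and $v_3$. A jump originating at $v_5$ or $v_7$ empties that endpoint and leaves $v_6$ with a single remaining neighbor of the opposite type at distance $1$, hence cost $1$: this is strictly worse than the current $0$ of $v_5$, and it matches (rather than strictly improves upon) the UIS cost $1$ of $v_7$ while being strictly worse than her HIS cost $0$. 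No strict improvement exists, so $\sigma$ is a JE for both variants, and $\hpoa\geq\scost(\sigma)/\scost(\sigma^*)$ is unbounded, proving the claim.

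The only delicate part of the argument, and thus the main obstacle, is the case analysis for jumps originating at $v_5$ or $v_7$: one must carefully track the fact that the jumper's own former position becomes empty, so that the neighborhood available to her at $v_6$ shrinks from two nodes to one, which is precisely what prevents these two agents from finding a strictly improving deviation.
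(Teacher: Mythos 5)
Your proof is correct and takes essentially the same approach as the paper: a $7$-node path with six agents of two extreme types, a social optimum of cost zero, and a positive-cost profile where every jump to the single empty node fails to strictly improve. The only cosmetic difference is that the paper's equilibrium $(3,4,5,6,\text{\o{}},1,2)$ with four agents of one type leaves no agent isolated, so it handles UIS and HIS without the extra case analysis your isolated agent at $v_7$ requires.
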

      \begin{proof}
      Let $G$ be a $7$-node path and consider $6$ agents such that $t(1)=\ldots=t(4)$ and $t(5)=t(6)=1$; thus, $e=1$. Let $\sigma^*$ be the strategy profile realizing the sequence $(1,2,3,4,\text{\o},5,6)$. All nodes are paying a cost of zero, no matter which variant is used, as no agent is isolated.

      Now let $\sigma$ be the strategy profile realizing the sequence $(3,4,5,6,\text{\o},1,2)$. Also in this case, as no agent is isolated, the cost of each agent is not influenced by the chosen variant. The only agents not paying zero are agents $4$ and $5$, both paying $1/2$. By jumping to the empty spot, both agents end up paying $1/2$, so $\sigma$ is a JE. As its social cost is $1$ the claim follows.
      \end{proof}

\subsubsection{Cutoff Games}
\begin{theorem} \label{thm:Swap_CG_PoA}
    The $\hpoa$ for all CGs is unbounded
    even for games played on paths.
    \end{theorem}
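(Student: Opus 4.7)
The plan is to exhibit, for each of the three variants S-CG, J-UIS-CG and J-HIS-CG, a single small instance on a path admitting a social optimum of cost $0$ together with an equilibrium of strictly positive social cost. Because such an instance forces the corresponding ratio to be infinite, this is enough to make the PoA of the game class unbounded (even restricted to paths), matching the stated claim.

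For the S-CG, I would take a $3$-node path and three agents with types $0$, $1/2$ and $1$, and set the cutoff $\lambda=1/2$. Under this choice, the pairs $(0,1/2)$ and $(1/2,1)$ are friends while $(0,1)$ are enemies. The sorted profile has only friend pairs on its two edges and hence social cost $0$. For the bad equilibrium I would use the profile placing the type-$0$ agent, the type-$1$ agent and the type-$1/2$ agent on the three positions from left to right: the first two agents each have an enemy neighbor (paying $1$ and $1/2$, respectively), the last one pays $0$, for total social cost $3/2$. The key step is then to check the three possible swaps: the last agent is already paying $0$ and so cannot strictly improve in any swap, while the only exchange that could potentially help the other two (swapping the type-$0$ and type-$1$ agents) yields a mirror-image configuration in which neither of their costs decreases.

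For the two jump variants, I would reuse the construction from the proof of~\autoref{thm:Jump_ADG_PoA} essentially verbatim: a $7$-node path with six agents satisfying $t(1)=\ldots=t(4)=0$ and $t(5)=t(6)=1$, one empty node, and any $\lambda\in[0,1)$. The profile $\sigma^*$ with sequence $(1,2,3,4,\text{\o},5,6)$ separates the two type blocks by the empty spot, so every agent has only same-type neighbors and $\scost(\sigma^*)=0$. For the bad JE I would take the profile $\sigma$ with sequence $(3,4,5,6,\text{\o},1,2)$: the two agents at the type-boundary (positions $2$ and $3$) have one friend and one enemy each, paying $1/2$, while all other agents pay $0$. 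I would then verify that the only candidate improving moves are jumps by these two agents to the single empty position, and that each such jump still leaves the jumper with exactly one friend and one enemy, preserving cost $1/2$. Since no agent is isolated in $\sigma$, this is simultaneously a JE in the UIS and HIS variants.

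The main obstacle, such as it is, lies in the swap case: one must be careful when enumerating swaps and in particular notice that the type-$0$ and type-$1$ agents form a symmetric pair whose exchange is not a strict improvement for either of them, even though both are paying nonzero cost. Beyond that, all verifications are routine cost computations on tiny instances, and the unboundedness of the PoA then follows immediately from the social optima having cost exactly zero.
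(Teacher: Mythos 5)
Your proposal is correct and follows essentially the same approach as the paper: for each of the three CG variants you exhibit a small path instance with a zero-cost social optimum and a positive-cost equilibrium, which is exactly the paper's strategy (the paper uses a $6$-node path with six distinct types and $\lambda=2/5$ for the swap case and $7$-node extensions for the jump cases, but the idea is identical). One small inaccuracy in your swap verification: after exchanging the type-$0$ and type-$1$ agents in your $3$-node instance, the type-$0$ agent's cost actually drops from $1$ to $1/2$ (she gains the type-$1/2$ neighbor as a friend), so it is not true that ``neither of their costs decreases''; the swap is nonetheless blocked because the type-$1$ agent's cost rises from $1/2$ to $1$ and a profitable swap must strictly improve both parties, so your conclusion stands.
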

    \begin{proof}
    Let us start with the S-CG. Let $G$ be a $6$-node path and define $t(i)=\frac{i-1}{5}$ for each $i\in [6]$. Let the cutoff parameter be $\co=2/5$. The strategy profile placing all agents along the path in an ordered-base way has social cost of $0$ for both social functions. We claim that the strategy profile $\sigma$ creating the sequence $(1,3,4,5,6,2)$ is a SE. Indeed, the only agents not paying $0$ in $\sigma$ are agents $2$ and $6$ paying, respectively, a cost of $1$ and $1/2$. So they are the only two candidates to perform an improving swap in $\sigma$. However, after swapping, the agent $6$ ends up paying $1$. So, $\sigma$ is a SE. As $\scost(\sigma)=3/2$, the claim follows.

    To extend the instance to the J-UIS-CG, extend the path with another node and keep the same agents so that $e=1$. The social optimum does not change if one puts the empty spot after agent $6$. Now extend strategy profile $\sigma$ by adding an empty spot after agent $2$. The cost of all agents does not change, so the only candidates to perform an improving jump are still agents $2$ and $6$, but none of the improves by jumping.

    To cover the J-HIS-CG, modify the instance for the J-UIS-CG by redefining $t(4)$ as $\frac{3}{5}-\epsilon$, with $\epsilon>0$ arbitrarily small. This minor modification does not influence the social optimum. Now let $\sigma$ by the strategy profile realizing the sequence $(1,3,5,4,6,\text{\o},2)$. All agents pay $0$, except for agents $4$ and $6$ paying, respectively, $1/2$ and $1$. By jumping, $2$ ends up paying $1/2$ and $6$ ends up paying $1$, so no improvements are possible.
    \end{proof}

\subsection{Price of Stability}
For the characterization of the $\hpos$, i.e., best-case performance of equilibria, it is usually required either the FIP or the existence of (not necessarily efficient) algorithms computing equilibria with provable approximation guarantees. As we have seen in~\autoref{sec:existence}, this may be either impossible or require quite an effort; nevertheless, these difficulties are common also in previous models of the Schelling game.
\begin{table*}[t]
    \centering
    \resizebox{\textwidth}{!}{\begin{tabular}{@{}lrrrrrrrrr}
        \toprule
        \multirow{2}{4em}{$\hpos$}& \multicolumn{3}{c}{MDG} & \multicolumn{3}{c}{ADG} & \multicolumn{3}{c}{CG} \\
        \cmidrule(l){2-4} \cmidrule(l){5-7}\cmidrule(l){8-10}
        & path & regular & general & path & regular & general & path & regular & general \\
        \midrule
        Swap
        &  $\leq2$ (\cref{thm:Swap_MDG_PoS_path})&   & $\Theta(n)$ (\cref{thm:Swap_MDG_PoS_regular})
        & 1 (\cref{PoS-AVG-path}) & 1 (\cref{thm:Swap_ADG_PoS}) &
        & $\leq2$ (\cref{thm:Swap_CP_PoS_path}) & 1 (\cref{thm:Swap_CG_PoS}) &  \\
        Jump UIS
        &  &  & $\infty$ (\cref{thm:Jump_MDG_PoS})
        &  &  &
        &  &  &  \\
        Jump HIS
        &  $\leq2$ (\cref{thm:JumpHIS_MDG_PoS_path}) &  & $\Theta(n)$ (\cref{thm:jumpHIS_MDG_PoS_regular})
        &   &  &
        &   &  &  \\
        \bottomrule\\
    \end{tabular}}
    \caption{Bounds on the Price of Stability.}
    \label{tab:pos}
\end{table*}

\subsubsection{Maximum Type-Distance Game}

    \begin{theorem}\label{thm:Swap_MDG_PoS_regular} \label{thm:jumpHIS_MDG_PoS_regular}
    The $\hpos$ for both the S-MDG and the J-HIS-MDG is in $\Theta(n)$.
    \end{theorem}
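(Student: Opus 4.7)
The plan for the upper bound is to track a single invariant along the improving dynamics: the maximum individual cost. By~\autoref{lma:maximum} and its jump analogue established inside the proof of~\autoref{thm:Jump_HISMDG_Ex}, after any profitable swap or profitable (HIS-)jump no agent ends up with cost strictly larger than $\max_i \cost i$ before the move, so this maximum is monotonically non-increasing along the dynamics. Starting the dynamics from a social optimum $\sigma^*$ and using the FIP of~\autoref{thm:Swap_MDG_Ex} (swap) and~\autoref{thm:Jump_HISMDG_Ex} (jump), we reach some equilibrium $\sigma$ with $\max_i \cost i(\sigma) \leq \max_i \cost i(\sigma^*) \leq \scost(\sigma^*)$. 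Since every individual cost is bounded by the maximum, this yields $\scost(\sigma) \leq n \cdot \scost(\sigma^*)$, and therefore $\hpos \leq n$. The edge case $\scost(\sigma^*)=0$ is immediate: under HIS every agent is either isolated or has only same-type neighbours, so no profitable jump exists and $\sigma^*$ itself is a JE.

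For the matching $\Omega(n)$ lower bound I would design a single family of instances that serves both games simultaneously. The idea is to glue $\Theta(n)$ identical small gadgets along a short spine or through a single hub, and to distribute the $n$ types so that the vast majority cluster near a common middle value while a constant number of agents carry extreme types $0$ and $1$. In the global optimum the extreme-type agents are stashed in cheap corners of the graph, giving $\scost(\sigma^*)=O(1)$. In every equilibrium, however, each gadget independently traps at least one neighbour pair at type-distance $\Omega(1)$, because any single swap---or, in the HIS-jump variant, any single jump to an empty node---that would untrap it sends the partner into a strictly worse neighbourhood of her own. The $\Omega(1)$ per-gadget contribution then sums to $\Omega(n)$, matching the upper bound.

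The hard part will be ensuring that the bad configuration is truly rigid under \emph{every} single-agent deviation, not only local ones inside a single gadget. Cross-gadget swaps must be ruled out by choosing all gadgets to be symmetric copies so that swap partners in distant gadgets look equally bad; for the J-HIS-MDG, one must additionally ensure that a trapped extreme-type agent cannot improve by jumping to an empty node and becoming isolated, which typically forces every empty node in the bad configuration to be adjacent to at least one extreme-type agent. Tuning the type distribution (tight clustering of the middle values around a common point together with a handful of exact $0$s and $1$s) and the connectivity pattern of the spine to meet both requirements simultaneously is the delicate point; I would first validate the gadget on a small constant number of copies by brute-force case analysis, and then argue that local stability composes into global stability as the number of gadgets grows to $\Theta(n)$.
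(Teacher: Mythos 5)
Your upper bound is correct and is essentially the paper's argument: the paper likewise starts an improving sequence at a social optimum, uses the monotonicity of the maximum individual cost (from \cref{lma:maximum} for swaps and the argument inside \cref{thm:Jump_HISMDG_Ex} for HIS-jumps) together with the FIP to land at an equilibrium $\sigma$ with $\scost(\sigma)\leq n\max_i\cost{i}(\sigma^*)\leq n\,\scost(\sigma^*)$. (For the S-MDG the paper instead just cites $\hpos\leq\hpoa=O(n)$ from \cref{thm:Swap_MDG_PoA}, but your route works equally well for both variants.)

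The lower bound, however, has a genuine gap, and the construction you sketch cannot be repaired as stated. First, the parameters are internally inconsistent: if only a constant number of agents carry extreme types and all remaining types are tightly clustered around one common value, then every pair of agents at type-distance $\Omega(1)$ must involve one of those $O(1)$ extreme agents; since your gadgets are of constant size (hence constant degree), at most $O(1)$ agents can ever pay $\Omega(1)$, so the total cost of \emph{any} profile is $O(1)$ and no $\Omega(n)$ bound can emerge from ``$\Omega(1)$ per gadget summed over $\Theta(n)$ gadgets.'' Second, your justification---``any single swap that would untrap the pair sends the partner into a worse neighbourhood''---is an argument that a \emph{particular bad profile is stable}, which lower-bounds the $\hpoa$, not the $\hpos$. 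For a $\hpos$ lower bound you must show that \emph{every} low-cost profile admits a profitable deviation, so that no equilibrium is cheap. The paper achieves exactly this with the opposite parameter regime: four cliques of size $\Theta(n)$ whose agent counts exactly match the multiplicities of four pairwise $\Omega(1)$-separated types, so that the \emph{unique} profile of cost $o(n)$ is the perfectly sorted one; that profile is destroyed by a single profitable swap (the type-$1/2$ agent on $u_1$ with the type-$2/3$ agent on $u_4$), and any other profile mixes types inside some clique of $\Omega(n)$ agents, each then paying at least $1/6$. The essential ingredients you are missing are (i) large gadgets, so that one misplaced agent poisons $\Omega(n)$ neighbours at once, and (ii) an argument that the cheap profile is essentially unique and unstable, rather than that some expensive profile is stable.
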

    \begin{proof}
    From~\autoref{thm:Swap_MDG_PoA} and the fact that the $\hpos$ cannot be worse than the $\hpoa$, we have that the $\hpos$ is $O(n)$ for the S-MDG. For the J-HIS-MDG, by~\autoref{thm:Jump_HISMDG_Ex}, we know that there exists a sequence of improving jumps starting from $\sigma^*$ and ending at an equilibrium $\sigma$ such that $\max_{i\in [n]}\cost i(\sigma)\leq\max_{i\in [n]}\cost i(\sigma^*)$. So, we get that the $\hpos$ is $O(n)$ by exploiting the relations $\scost(\sigma)\leq n \max_{i\in [n]}\cost i(\sigma)\leq n \max_{i\in [n]}\cost i(\sigma^*)\leq n \scost(\sigma^*)$.

    To show the asymptotically matching lower bounds, we start with the S-MDG and then discuss how to extend the lower bound also to the J-HIS-MDG.

    Consider four cliques $K_1,\ldots, K_4$ such that $|K_i|=m+i$ for each $i\in [4]$ and let us denote with $u_i$ an arbitrary fixed node in $K_i$. Let $G$ be the connected graph with $n=\sum_{i\in [4]}|K_i|=4m+10$ nodes obtained by creating the path $\langle u_1,u_2,u_3,u_4\rangle$. Define a S-MDG by creating $|K_i|$ agents of type $t(i)$, for each $i\in [4]$, with $t(1)=1/2$, $t(2)=1$, $t(3)=\varepsilon$ and $t(4)=2/3$, where $\varepsilon>0$ is arbitrarily small. Observe that, for each $i\in [4]$, clique $K_i$ can precisely allocate all agents of type $i$. The strategy profile $\sigma^*$ satisfying this property has $\scost(\sigma^*)=\frac{19}{6}-3\varepsilon=\Theta(1)$. This profile is not an equilibrium as the agent of type $1$ assigned to node $u_1$ can profitably swap with the agent of type $4$ assigned to node $u_4$. In fact, in $\sigma^*$, the former agent is paying $1/2$, while the latter is paying $2/3-\varepsilon$. By swapping, they end up paying $1/2-\varepsilon$ and $1/3$, respectively. The $\Omega(n)$ lower bound follows by observing that an equilibrium always exists and that any strategy profile other than $\sigma^*$ needs to mix up agents of different types in at least one clique. When this happens, as $|K_i|\geq m+1$ for every $i\in [4]$ and $d(i,j)\geq 1/6$ for every two distinct $i,j\in [4]$, the social cost has to be at least $(m+1)/6=\Omega(n)$, for every non-constant value of $m$.

    To extend this lower bound to the J-HIS-MDG, it suffices to define $|K_4|=m+5$, i.e., adding a node to $K_4$. Again, the only strategy profile (up to a permutation of the nodes in $K_4$) guaranteeing a social cost $o(n)$ is $\sigma^*$ as defined above with an empty node left in $K_4$. This profile is not an equilibrium as the agent in $u_1$ is paying $1/2$ and, by jumping to the empty node in $K_4$ ends up paying at most $1/2-\varepsilon$.
    \end{proof}

\begin{theorem} \label{thm:Swap_MDG_PoS_path}
The $\hpos$ for the S-MDG on paths is at most 2.
\end{theorem}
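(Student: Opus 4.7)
The plan is to exhibit an explicit SE and compare its social cost against a universal lower bound on $\scost(\sigma^*)$. For the SE, I would invoke~\autoref{thm:Swap_MDG_Comp} with the root $\tau$ chosen to be an endpoint of the path. In this case the BFS tree is the path itself, and the numbering produced by the algorithm coincides with the natural traversal of the path from $\tau$ to the other endpoint. Hence the resulting SE $\sigma$ places the agents in sorted order of their types: agent $i$ is assigned to the $i$-th node of the path.

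Next I would upper-bound $\scost(\sigma)$. The two endpoint agents $1$ and $n$ have cost $t(2)-t(1)$ and $t(n)-t(n-1)$, respectively, while every interior agent $i$ pays $\max\{t(i)-t(i-1),t(i+1)-t(i)\}$. Using $\max\{a,b\}\leq a+b$ for nonnegative reals and summing, every consecutive gap $t(i+1)-t(i)$ is counted at most twice, so the sum telescopes to
\[
\scost(\sigma)\;\leq\;2\sum_{i=1}^{n-1}\bigl(t(i+1)-t(i)\bigr)\;=\;2\bigl(t(n)-t(1)\bigr).
\]

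For the lower bound on $\scost(\sigma^*)$, I would argue that \emph{every} strategy profile $\sigma'$ on the path satisfies $\scost(\sigma')\geq t(n)-t(1)$. Let $\pi$ be the unique simple subpath of $G$ connecting $\sigma'(1)$ and $\sigma'(n)$, and let $a_1=1,a_2,\ldots,a_k=n$ be the sequence of (distinct) agents occupying its nodes. By telescoping of real numbers and the MDG cost definition,
\[
t(n)-t(1)\;\leq\;\sum_{j=1}^{k-1}d(a_j,a_{j+1})\;\leq\;\sum_{j=1}^{k-1}\cost{a_j}(\sigma')\;\leq\;\scost(\sigma'),
\]
where the second inequality uses that $d(a_j,a_{j+1})$ is one of the neighbor-distances contributing to the maximum in $\cost{a_j}(\sigma')$. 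Combining this with the upper bound on $\scost(\sigma)$ yields $\scost(\sigma)/\scost(\sigma^*)\leq 2$, which proves the claim.

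There is no real obstacle: the argument is an immediate combination of~\autoref{thm:Swap_MDG_Comp} specialized to paths and a telescoping/triangle-inequality style bound. The only mild care needed is to verify that the BFS numbering from an endpoint of a path really does place agents in sorted order, and that the simple path used in the lower bound indeed traverses a sequence of \emph{distinct} agents (so that their individual costs can be summed and dominated by the social cost), both of which are obvious from the path topology and the injectivity of $\sigma'$.
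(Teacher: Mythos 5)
Your proof is correct and follows essentially the same route as the paper's: both exhibit the sorted-order profile as a SE, bound its social cost above by $2(t(n)-t(1))$ via telescoping, and show that every strategy profile costs at least $t(n)-t(1)$. The only cosmetic differences are that you justify stability by invoking \autoref{thm:Swap_MDG_Comp} rooted at an endpoint (the paper simply asserts that the sorted profile is a SE), and your lower bound telescopes along the subpath between the positions of agents $1$ and $n$ rather than over the whole path with the $\max\{a,b\}\geq\frac{1}{2}(a+b)$ estimate.
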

\begin{proof}
Let $\sigma$ be a strategy profile in which agent $i$ is placed on the $i$-th node of the path, where the nodes of the path are numbered from left to right.

One can easily observe that $\sigma$ is a SE. Let $\sigma^*$ be a strategy profile of minimum total cost. We show that $\scost(\sigma) \leq 2\cdot\scost(\sigma^*)$. We can upper bound the cost of $\sigma$ by
\begin{eqnarray*}
\scost(\sigma) & = & (t(2)-t(1)) + (t(n)-t(n-1))\\
& & \, + \sum_{i = 2}^{n-1} \max\big\{t(i)-t(i-1), t(i+1)-t(i)\big\}\\
& \leq & (t(2)-t(1)) + (t(n)-t(n-1))\\
& & \,  + \sum_{i = 2}^{n-1} \big((t(i)-t(i-1))+(t(i+1)-t(i))\big)\\
& = & 2(t(n)-t(1)).
\end{eqnarray*}
Let $t_i$ be the type of the agent that occupies node $i$ in $\sigma^*$, where nodes are numbered in left-to-right order from 1 to $n$. We lower bound the cost of $\sigma^*$ by
\begin{eqnarray*}
\scost(\sigma^*) & = & |t_2-t_1| + |t_{n}-t_{n-1}|\\
& & \ \ \ + \sum_{j=2}^{n-1} \max\{|t_j-t_{j-1}|,|t_{j+1}-t_j|\} \\
& \geq & |t_2-t_1| + |t_{n}-t_{n-1}|\\
& & \ \ \ + \frac{1}{2}\sum_{j=2}^{n-1} \big(|t_j-t_{j-1}|+|t_{j+1}-t_j|\big) \\
& \geq & \sum_{j=2}^{n}|t_j-t_{j-1}| \geq  t(n)-t(1).
\end{eqnarray*}
The claim follows by combining the upper bound on $\scost(\sigma)$ with the lower bound on $\scost(\sigma^*)$.
\end{proof}

    \begin{theorem} \label{thm:Jump_MDG_PoS}
    The $\hpos$ (and so also the $\hpoa$) for the J-UIS-MDG is unbounded.
    \end{theorem}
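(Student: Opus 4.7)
I would prove the result by adapting the clique-gadget construction from~\autoref{thm:Swap_MDG_PoS_regular} to the jump setting with one empty node, producing a family of J-UIS-MDG instances whose $\hpos$ grows linearly with $n$. For each $m \geq 1$, let $G_m$ consist of four cliques $K_1, K_2, K_3, K_4$ of sizes $m+1, m+2, m+3, m+5$, joined along a path through distinguished nodes $u_i \in K_i$. Create $|K_i|$ agents of type $t(i)$ for $i \in \{1,2,3\}$ and $m+4$ agents of type $t(4)$, where $(t(1),t(2),t(3),t(4)) = (1/2, 1, \varepsilon, 2/3)$ and $\varepsilon > 0$ is arbitrarily small. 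This gives $n = 4m + 10$ agents on $4m + 11$ nodes, so $e = 1$.

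The natural candidate optimum $\sigma^*$ places all type-$t(i)$ agents inside $K_i$, with the single empty node inside $K_4 \setminus \{u_4\}$. The same local computation as in~\autoref{thm:Swap_MDG_PoS_regular} gives $\scost(\sigma^*) = \tfrac{19}{6} - 3\varepsilon = O(1)$. However, $\sigma^*$ is not a JE: the type-$1/2$ agent at $u_1$ pays $1/2$ (from the edge to $u_2$), yet she can jump to the empty $K_4$ node to become adjacent only to type-$2/3$ agents, reducing her cost to $1/6$.

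The crux is then to show that every JE has social cost $\Omega(m)$. A counting argument on clique sizes and type counts shows that $\sigma^*$ is the unique profile (up to the position of the empty node within $K_4$) in which every clique is monochromatic: only type $t(4)$ has enough copies ($m+4$) to almost fill $K_4$, which forces the empty to lie in $K_4$; the remaining type counts then bijectively match $|K_1|, |K_2|, |K_3|$, uniquely forcing the assignment. Consequently, any JE differs from $\sigma^*$ and must contain at least one mixed clique $K_i$ holding agents of two distinct types. Because every two agents of $K_i$ are mutual neighbors, each of its $\Omega(m)$ agents has a cross-type neighbor and pays cost at least $\min_{j \neq k} |t(j) - t(k)| = 1/6$. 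Hence $\scost(\sigma) \geq \tfrac{1}{6}(|K_i|-1) = \Omega(m) = \Omega(n)$ for every JE $\sigma$ (and if the instance admits no JE at all, the bound is vacuously unbounded).

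Combining $\scost(\sigma^*) = O(1)$ with the $\Omega(n)$ lower bound on every JE cost yields $\hpos = \Omega(n) \to \infty$. The main technical obstacle is the uniqueness argument for the monochromatic profile $\sigma^*$: one has to verify that the tight matching between clique sizes and type counts rules out every alternative assignment of types to cliques, including ones obtained by relocating the empty node to a different clique.
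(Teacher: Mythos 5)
Your argument has a genuine gap: you never establish that your instance admits a jump equilibrium, and the fallback ``if the instance admits no JE at all, the bound is vacuously unbounded'' is not a valid way to prove a $\hpos$ lower bound --- such a bound must be witnessed by a game that actually possesses equilibria, all of which are expensive (otherwise \autoref{from_Oxford} and \autoref{thm:Jump_MDG_Ex_reg} would already make every such theorem trivial, and the paper is explicit about this, ending its proof by exhibiting a concrete equilibrium). In your instance the gap is fixable: you have $e=1$, which is smaller than the minimum degree $m$ (attained at a non-distinguished node of $K_1$), so \autoref{thm:Jump_HISMDG_Ex} gives the FIP for the J-UIS-MDG and hence a JE exists; you should invoke this. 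The rest of your counting argument is sound: the unique all-monochromatic profile is $\sigma^*$ (up to permutations and the position of the empty node inside $K_4$), it is destabilized by the jump of the type-$1/2$ agent from $u_1$ into $K_4$, and any JE therefore contains a mixed clique in which each of its at least $m$ agents pays at least $1/6$.

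Even after this repair, your route proves something strictly weaker than the paper's proof. Your family has optimum $\Theta(1)$ and equilibria of cost $\Theta(n)$, so each individual game has $\hpos = \Theta(n)$; the supremum over the class is infinite, which satisfies the literal statement under the paper's definition, but the ratio remains bounded by a function of $n$. The paper instead uses three $m$-cliques plus an apex $u$ adjacent to every other node, with $m$ agents of type $0$, $m$ of type $1$, one agent of type $1-\varepsilon$, and $e=m$ empty nodes: the optimum costs $O(\varepsilon)$ while every equilibrium costs $\Omega(1)$ (the UIS penalty for becoming isolated is precisely what stabilizes the expensive profile), so the ratio blows up as $\varepsilon\to 0$ on a graph of fixed size. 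That stronger, genuinely ``unbounded'' behaviour is what separates the $\infty$ entries from the $\Theta(n)$ entries in \autoref{tab:pos}, and your construction does not deliver it.
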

    \begin{proof}
    For an integer $m\geq 3$, let $G$ be a graph with $3m+1$ nodes organized as follows: there are $3$ disjoint cliques made of $m$ nodes each, that we denote by $G_1$, $G_2$, and $G_3$; then, there is a node $u$ which is adjacent to all other nodes of $G$. So, $G$ is connected.

    Now, consider the J-UIS-MDG defined by $G$ in which there are $m$ agents of type $0$, $m$ agents of type $1$, and $1$ agent of type $1-\varepsilon$, for an arbitrarily small $\varepsilon>0$.

    Let us first observe that there exists a strategy profile $\sigma^*$ such that $\scost(\sigma^*)$ is in $o(1)$, which is obtained by placing all agents of type $0$ in $G_1$, $m-1$ agents of type $1$ in $G_2$ and the two remaining agents in $G_3$. We shall prove the claim by showing that the game admits at least an equilibrium and that, for any equilibrium $\sigma$, $\scost(\sigma)$ is in $\Omega(1)$.

    Towards this end, observe that to have a social cost $o(1)$, no agent can be assigned to node $u$. Moreover, the agent of type $1-\varepsilon$ needs to be adjacent to agents of type $1$ only. So, we can assume, without loss of generality, that the agent of type $1-\varepsilon$ is assigned to a node of $G_1$ together with $x$ agents of type $1$, where $1\leq x <m$. As no agent can be assigned to $u$, there must be at least one agent of type $1$ assigned to a node of $G_i$, with $i\in\{2,3\}$. To guarantee a social cost $o(1)$, no agent of type $0$ can be assigned to a node of $G_i$. So, there are at least $x\geq 1$ empty nodes in $G_i$. Any of the $x\geq 1$ agents of type $1$ in $G_1$ is paying a cost of $\varepsilon$ and, by jumping to an empty node in $G_i$, it ends up paying $0$. So, no equilibria can guarantee a social cost of $o(1)$.

    Finally, it is easy to check that the strategy profile in which all agents of type $0$ are placed on $G_1$, all agents of type $1$ are placed on $G_2$ and the agent of type $1-\varepsilon$ is placed on $u$ is an equilibrium.
    \end{proof}

In the following, we analyze the PoS for J-HIS-MDG. Let us call a \emph{block} any maximal subpath whose nodes are all occupied by agents according to a given strategy profile. We first show how to lower bound the social value of a strategy profile.

\begin{lemma}\label{lemma:lower_bound_sum}
Let $\sigma$ be a strategy profile of a J-HIS-MDG played on a path. Let $B_1,\ldots, B_h$ be the blocks induced by $\sigma$. For each block $B_i$, let $m_i$ (resp., $M_i$) be the minimum (resp., maximum) type of an agent that occupies a node of $B_i$. We have that $\scost(\sigma) \geq \sum_{i=1}^h (M_i-m_i)$.
\end{lemma}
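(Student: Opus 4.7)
The plan is to prove the stronger per-block inequality $\sum_{a \in B_i} \cost{a}(\sigma) \geq M_i - m_i$ for each block $B_i$, and then sum over $i \in [h]$ to obtain the claim. Singleton blocks are immediate: the lone agent is isolated in $\sigma$, hence pays $0$ under the HIS variant, while $M_i = m_i$ by definition, so the inequality becomes $0 \geq 0$.

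Fix now a block $B_i$ containing $k \geq 2$ agents, and list them as $a_1, a_2, \ldots, a_k$ in the order their nodes appear along the path. Since $B_i$ is a maximal occupied subpath, consecutive agents in this list are adjacent in $G$ and therefore mutual neighbors under $\sigma$. Choose indices $p, q \in [k]$ with $t(a_p) = m_i$ and $t(a_q) = M_i$, and assume without loss of generality $p \leq q$ (otherwise reverse the labeling along the block). The MDG cost of $a_j$ dominates the distance to any single neighbor, so for each $j \in [k-1]$ we have $\cost{a_j}(\sigma) \geq |t(a_{j+1}) - t(a_j)| \geq t(a_{j+1}) - t(a_j)$. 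Dropping the nonnegative cost contributions of agents outside $\{a_p, \ldots, a_{q-1}\}$ and telescoping,
\[
\sum_{j=1}^{k} \cost{a_j}(\sigma) \;\geq\; \sum_{j=p}^{q-1} \bigl(t(a_{j+1}) - t(a_j)\bigr) \;=\; t(a_q) - t(a_p) \;=\; M_i - m_i.
\]
Summing these per-block bounds over $i \in [h]$ yields $\scost(\sigma) \geq \sum_{i=1}^h (M_i - m_i)$, as claimed.

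I do not anticipate any real obstacle here; the whole argument is essentially a one-line triangle inequality applied inside each block. The only subtleties worth noting are that one must invoke the HIS convention to handle singleton blocks cleanly, and that one should order the block's agents in \emph{path order} (so that consecutive agents in the list are truly neighbors in $\sigma$) rather than in type order, so that the sum of signed consecutive type-differences telescopes to the desired $t(a_q) - t(a_p)$.
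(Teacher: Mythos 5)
Your proof is correct. Both you and the paper reduce the lemma to the per-block inequality $\sum_{a\in B_i}\cost{a}(\sigma)\ge M_i-m_i$ and then sum over blocks, but you derive that inequality by a different route. The paper keeps all agents of the block, bounds each interior agent's cost $\max\{|t_j-t_{j-1}|,|t_{j+1}-t_j|\}$ from below by the average $\frac12\big(|t_j-t_{j-1}|+|t_{j+1}-t_j|\big)$, and thereby obtains the stronger intermediate bound $\scost_i(\sigma)\ge\sum_{j=2}^{n'}|t_j-t_{j-1}|$ (the full total variation along the block), which it then relates to $M_i-m_i$ via an interval-covering argument spelled out in a footnote. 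You instead locate positions $p\le q$ achieving the minimum and maximum type, discard the (nonnegative) costs of all other agents, and lower-bound each retained cost by the \emph{signed} difference $t(a_{j+1})-t(a_j)$ so that the sum telescopes directly to $M_i-m_i$. Your route is more elementary --- it needs neither the factor-$\frac12$ averaging nor the covering argument --- at the price of proving only the weaker per-block bound $M_i-m_i$ rather than the total variation; since the lemma asks only for $M_i-m_i$, nothing is lost. Your explicit treatment of singleton blocks via the HIS convention is a detail the paper leaves implicit, and it is indeed exactly where the happy-in-isolation assumption enters.
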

\begin{proof}
We prove the claim by showing that the overall sum of the costs of the agents in block $B_i$, that we denote by $\scost_i(\sigma)$, is at least $M_i - m_i$, for every $i\in [h]$. Assume that block $B_i$ contains $n'$ agents and the $j$-th node of $B_i$ has type $t_j$. The value $\scost_i(\sigma)$ can be lower bounded by the following
\begin{eqnarray*}
\scost_i(\sigma) & = & |t_2-t_1| + |t_{n'}-t_{n'-1}|\\
& & \ \ \ + \sum_{j=2}^{n'-1} \max\{|t_j-t_{j-1}|,|t_{j+1}-t_j|\} \\
& \geq & |t_2-t_1| + |t_{n'}-t_{n'-1}|\\
& & \ \ \ + \frac{1}{2}\sum_{j=2}^{n'-1} \big(|t_j-t_{j-1}|+|t_{j+1}-t_j|\big) \\
& \geq & \sum_{j=2}^{n'}|t_j-t_{j-1}| \geq  M_i-m_i,
\end{eqnarray*}
where the last inequality holds because, for any $t \in [m_i,M_i]$, there exists one interval $[a,b]$, with $a=\min\{t_{j-1}, t_j\}$ and $b=\max\{t_{j-1},t_j\}$, such that $t \in [a,b]$.\footnote{
To see this, consider the $n'$ types $t_1,\ldots,t_{n'}$ as points on a line and add all the edges $(t_j,t_{j+1})$, for every $j\in [n'-1]$. The obtained graph is connected and, clearly, each edge models an interval $[a,b]$, with $a=\min\{t_{j}, t_{j+1}\}$ and $b=\max\{t_{j},t_{j+1}\}$, for some $j\in [n'-1]$. Any $t \in (m_i,M_i)$ defines a cut $U=\{t_j \leq t \mid j\in [n']\}$ and $\bar U=\{t_j > t \mid j\in [n']\}$ which is clearly crossed by an edge.}
\end{proof}
Next, we upper bound the cost of any strategy profile which sorts the agent in non-decreasing order within each block.
\begin{lemma}\label{lemma:upper_bound_sum}
Let $\sigma$ be a strategy profile of a J-HIS-MDG played on a path. Let $B_1,\ldots, B_h$ be the blocks induced by $\sigma$. For each block $B_i$, let $m_i$ (resp., $M_i$) be the minimum (resp., maximum) type of an agent that occupies a node of $B_i$. Moreover, for every $i$, assume that agents which occupy the nodes of $B_i$ are placed in non-decreasing order in $\sigma$. We have that $\scost(\sigma) \leq 2\sum_{i=1}^h (M_i-m_i)$.
\end{lemma}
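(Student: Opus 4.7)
The plan is to mirror the upper-bound computation already carried out in the proof of \autoref{thm:Swap_MDG_PoS_path}, applied block by block, and then to sum the resulting per-block bounds. The key observation is that inside each block the agents are sorted, so the cost of every agent is a maximum of one or two consecutive type-gaps, and by replacing each maximum with the sum of the two quantities we obtain a telescoping-style bound in which every gap is charged at most twice.

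More precisely, I would first isolate the contribution $\scost_i(\sigma)$ of each block $B_i$ and handle singleton blocks separately: if $B_i$ consists of a single node, the agent there is isolated and pays $0$ under HIS, while $M_i-m_i=0$, so the desired inequality holds trivially. For a block $B_i$ with $n'\geq 2$ nodes, let the types of the agents occupying its nodes in left-to-right order be $t_1\leq t_2\leq\cdots\leq t_{n'}$, so $m_i=t_1$ and $M_i=t_{n'}$. The two endpoints have a single neighbor in the block, hence contribute $t_2-t_1$ and $t_{n'}-t_{n'-1}$ respectively; each internal agent at position $j$ contributes $\max\{t_j-t_{j-1},\,t_{j+1}-t_j\}$. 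I would then use $\max\{a,b\}\leq a+b$ for non-negative $a,b$, exactly as in the proof of \autoref{thm:Swap_MDG_PoS_path}, to obtain
\begin{align*}
\scost_i(\sigma)
&\leq (t_2-t_1)+(t_{n'}-t_{n'-1})\\
&\quad +\sum_{j=2}^{n'-1}\bigl((t_j-t_{j-1})+(t_{j+1}-t_j)\bigr).
\end{align*}

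Reindexing the inner sums shows that every consecutive gap $t_{j+1}-t_j$, for $j\in[n'-1]$, is counted at most twice on the right-hand side, so the whole expression telescopes to at most $2(t_{n'}-t_1)=2(M_i-m_i)$. Summing over all blocks yields $\scost(\sigma)=\sum_{i=1}^h\scost_i(\sigma)\leq 2\sum_{i=1}^h(M_i-m_i)$, as claimed.

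No real obstacle is expected: the argument is essentially a block-wise replay of the telescoping bound already established for paths, and the HIS assumption only matters to dispose of singleton blocks cleanly. The one point to be careful about is that ordering the agents non-decreasingly inside a block is exactly what makes each endpoint cost equal to a single consecutive gap (rather than a larger quantity), so that the reindexing step gives the factor $2$ and not something worse.
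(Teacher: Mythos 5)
Your proof is correct and follows essentially the same route as the paper's: bound each block's cost by writing the endpoint costs as single consecutive gaps and each internal cost as a maximum of two gaps, apply $\max\{a,b\}\leq a+b$, telescope to $2(M_i-m_i)$, and sum over blocks. Your explicit handling of singleton blocks (isolated agents paying $0$ under HIS) is a small tidy addition the paper leaves implicit, but it does not change the argument.
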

\begin{proof}
We prove the claim by showing that the overall sum of the costs of the agents in block $B_i$, again denoted by $\scost_i(\sigma)$, is at most $2(M_i - m_i)$, for any $i\in [h]$. Let $n'$ be the number of agents that occupy the nodes on $B_i$. Denote by $t_j$ the type of the agent that occupies the $j$-th node of $B_i$. Clearly, $t_1=m_i$ and $t_{n'}=M_i$. We bound the value $\psi$ as follows
\begin{eqnarray*}
\scost_i(\sigma) &=& (t_2-t_1) + (t_{n}-t_{n-1})\\
& & \ \ \ + \sum_{j=2}^{n-1} \max\{t_j-t_{j-1},t_{j+1}-t_j\} \\
& \leq & (t_2-t_1) + (t_{n'}-t_{n'-1})\\
& & \ \ \ + \sum_{j=2}^{n'-1} \big((t_j-t_{j-1})+(t_{j+1}-t_j)\big) \\
& = & 2\sum_{j=2}^{n'}(t_j-t_{j-1}) = 2(M_i-m_i). \hspace*{0.5cm}\qedhere
\end{eqnarray*}
\end{proof}

By leveraging~\autoref{lemma:lower_bound_sum} and~\autoref{lemma:upper_bound_sum}, we can prove the following result.

\begin{theorem}\label{thm:JumpHIS_MDG_PoS_path}
The $\hpos$ for the J-HIS-MDG on paths is at most $2$.
\end{theorem}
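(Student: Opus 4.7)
The plan is to exhibit a single JE $\sigma$ whose social cost is within a factor of $2$ of any social optimum $\sigma^*$; since $\sigma$ is a JE, this immediately yields $\hpos \leq 2$. For $\sigma$ I take the equilibrium produced by the algorithm in the proof of \autoref{jump-path}: the agents are placed along the path in sorted order of type, and the $e$ empty nodes are inserted at the $e$ positions corresponding to the $e$ largest gaps $g_j := t(j+1) - t(j)$ between consecutive sorted types. By construction, the blocks of $\sigma$ are exactly the $e+1$ maximal runs of consecutive sorted indices between the inserted empty nodes, and each such block is already sorted. Writing $S_e$ for the sum of the $e$ largest gaps, a direct telescoping gives $\sum_k (M_k - m_k) = (t(n) - t(1)) - S_e$ (where the sum ranges over the blocks of $\sigma$), and \autoref{lemma:upper_bound_sum} then yields $\scost(\sigma) \leq 2\bigl((t(n) - t(1)) - S_e\bigr)$.

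Next, I would derive a matching lower bound on $\scost(\sigma^*)$. Let $h^*$ denote the number of blocks of $\sigma^*$; removing the $e$ empty nodes from the path leaves at most $e+1$ components, so $h^* \leq e+1$. By \autoref{lemma:lower_bound_sum} it suffices to establish the covering claim
\[
\sum_i (M_i^* - m_i^*) \;\geq\; (t(n) - t(1)) - S_{h^*-1},
\]
where $S_{h^*-1}$ denotes the sum of the $h^*-1$ largest gaps, since then $h^*-1\leq e$ yields $S_{h^*-1}\leq S_e$ and hence $\scost(\sigma^*)\geq (t(n)-t(1))-S_e$. To prove the covering claim, let $U := \bigcup_i [m_i^*, M_i^*] \subseteq [t(1), t(n)]$. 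Since $U$ is the union of $h^*$ closed intervals that between them contain every type value $t(1),\ldots,t(n)$, the complement $[t(1), t(n)] \setminus U$ has at most $h^* - 1$ connected components, and each such component must lie strictly between two consecutive types $t(j)<t(j+1)$ (otherwise it would contain some $t(j')\in U$). Since distinct components lie in distinct gaps, $\lvert [t(1), t(n)] \setminus U \rvert \leq S_{h^*-1}$, and the claim follows from $\sum_i (M_i^* - m_i^*) \geq \lvert U \rvert \geq (t(n) - t(1)) - S_{h^*-1}$.

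Combining the two bounds gives $\scost(\sigma) \leq 2\bigl((t(n) - t(1)) - S_e\bigr) \leq 2\,\scost(\sigma^*)$, proving $\hpos \leq 2$. The trivial case $e \geq n-1$ (where all agents can be isolated in the HIS variant, so that $\scost(\sigma^*) = 0$ and the algorithm already outputs a JE of social cost $0$) is handled at the start of the proof of \autoref{jump-path}. The main obstacle is the covering claim above: one must notice that the crude inequality $\sum_i(M_i^* - m_i^*)\geq|U|$ is already enough even though the intervals $[m_i^*, M_i^*]$ may overlap, and combine it with the pigeonhole observation that the at most $h^*-1$ connected components of $[t(1),t(n)]\setminus U$ each lie inside a distinct gap of the sorted type sequence, which is exactly what lets the $h^*-1$ largest gaps dominate them in total length.
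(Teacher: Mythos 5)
Your proposal is correct and follows essentially the same route as the paper's proof: the same equilibrium from \autoref{jump-path}, the same two bounding lemmas (\autoref{lemma:upper_bound_sum} and \autoref{lemma:lower_bound_sum}), and the same interval-covering argument showing that the optimum's blocks can leave uncovered at most $e$ gaps, hence at most the total length of the $e$ largest gaps. Your write-up of the covering step is in fact slightly more careful than the paper's (you explicitly bound the number of complement components by $h^*-1$ and argue that distinct components occupy distinct gaps), but the underlying idea is identical.
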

\begin{proof}
Clearly, if the number $e$ of empty nodes satisfies $e\geq n-1$, a strategy profile of social cost equal to  zero can be obtained by making every agent isolated. Such a profile is trivially a JE and a social optimum. So, assume that $e<n-1$.

Let $S$ be a set of $e$ pairs of consecutive agents yielding the largest intervals occurring between the types of two consecutive agents, and let $\sigma$ be the strategy profile obtained by placing the agents in increasing order along the path and leaving an empty spot between any two agents in $S$. We already proved in~\autoref{jump-path} that $\sigma$ is a JE.

Let $\sigma^*$ be a social optimum. We first prove that $\scost(\sigma) \leq 2 \scost(\sigma^*)$.
Let $B_1,\ldots, B_{k+1}$ be the $k+1$ blocks induced by $\sigma$. For each block $B_i$, let $m_i$ (resp., $M_i$) denote the minimum (resp., maximum) type of an agent that occupies a node of block $B_i$. By~\autoref{lemma:upper_bound_sum}, we have that
\begin{equation}\label{eq:upper_bound_p_G}
\scost(\sigma) \leq 2\sum_{i=1}^{k+1}(M_i-m_i).
\end{equation}

Let $B^*_1,\ldots, B^*_h$, be the $h$ blocks of $\sigma^*$. For each block $B_i^*$, let $m_i^*$ (resp., $M_i^*$) denote the minimum (resp., maximum) type of an agent that occupies a node of block $B_i^*$. By~\autoref{lemma:lower_bound_sum} we have that
\begin{equation}\label{eq:upper_bound_p_prime_G}
\sum_{i=1}^{h}(M^*_i-m^*_i) \leq \scost(\sigma^*).
\end{equation}
We conclude the proof by showing that
\begin{equation}\label{eq:upper_bound_intervals}
\sum_{i=1}^{k+1}(M_i-m_i) \leq \sum_{i=1}^h(M^*_i-m^*_i).
\end{equation}
In fact, by combining~\autoref{eq:upper_bound_p_G} with~\autoref{eq:upper_bound_p_prime_G} and~\autoref{eq:upper_bound_intervals}, we obtain
$\scost(\sigma) \leq 2\scost(\sigma^*)$
as desired.

To prove~\autoref{eq:upper_bound_intervals}, it is convenient to see the blocks as intervals on the line, where the $i$-th block of $\sigma$ (resp., $\sigma^*$) is associated with the interval $[m_i,M_i]$ (resp., $[m^*_i,M^*_i]$). Observe that for each real value $t \in [m_i,M_i]$ (resp., $t \in [m^*_i,M^*_i]$) there are two consecutive agents with types $t'$ and $t''$ in $B_i$ (resp., $B^*_i$) such that $t$ is \emph{covered} by the interval defined by $t'$ and $t''$, i.e., $\min\{t',t''\} \leq t \leq \max\{t',t''\}$. The $k+1$ intervals $\{[m_i,M_i] \mid i \in[k+1]\}$ are pairwise disjoint. Moreover, they leave uncovered all the values of the largest $k$ intervals of the form $(t(j),t(j+1))$, where $j \in [n-1]$. Similarly, the $h$ intervals $\{[m^*_i,M^*_i] \mid i \in [h]\}$ leave uncovered at most $h \leq k$ intervals of the form $(t(j),t(j+1))$, where $j \in [n-1]$. Therefore, $\sum_{i=1}^{k+1}(M_i-m_i) \leq \sum_{i=1}^h (M^*_i - m^*_i)$, i.e.,~\autoref{eq:upper_bound_intervals} holds.
\end{proof}

    \subsubsection{Average Type-Distance Game}
		\begin{theorem} \label{thm:Swap_ADG_PoS}
			The $\hpos$ for the S-ADG on $\Delta$-reg. graphs is $1$.
		\end{theorem}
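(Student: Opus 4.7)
The plan is to exploit the fact, established in the proof of \autoref{thm:Swap_ADG_Ex}, that on $\Delta$-regular graphs the function
\[\Phi(\sigma) = \sum_{\{u,v\}\in E} d(\sigma^{-1}(u),\sigma^{-1}(v)) = 2 \cdot \scost(\sigma)\]
is an (exact, up to a factor of $2$) ordinal potential for the S-ADG. In particular, the social cost itself decreases strictly along every improving swap.

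Given this, I would take any social optimum $\sigma^*$ and argue that it must already be a swap equilibrium. Indeed, suppose for contradiction that two agents $i,j$ could perform a profitable swap from $\sigma^*$ to $\sigma^*_{ij}$. By the argument in the proof of \autoref{thm:Swap_ADG_Ex}, this would yield $\Phi(\sigma^*_{ij}) < \Phi(\sigma^*)$, and hence $\scost(\sigma^*_{ij}) < \scost(\sigma^*)$, contradicting the optimality of $\sigma^*$. Therefore $\sigma^*$ admits no profitable swap and is itself a SE. Since there exists a SE whose social cost equals that of the social optimum, $\hpos = 1$.

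There is essentially no obstacle: the result is an immediate corollary of the potential function identity $\Phi = 2\cdot\scost$ on $\Delta$-regular graphs, since a social optimum cannot be destabilized without violating its own optimality.
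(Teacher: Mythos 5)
Your argument is correct and coincides with the paper's own proof: both rely on the identity $\Phi = c\cdot\scost$ on $\Delta$-regular graphs (so that a profitable swap would strictly decrease the social cost) to conclude that the social optimum is itself a swap equilibrium. The only difference is presentational — you phrase it as a proof by contradiction while the paper states directly that a global minimum of $\Phi$ must be an equilibrium.
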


		\begin{proof}
			Let $\sigma$ be a strategy profile in a $\Delta$-regular graph $G$. We have already
			shown in~\autoref{thm:Swap_ADG_Ex} that \[\Phi(\sigma) = \sum_{\{u,v\}\in E}
			d(\sigma^{-1}(u), \sigma^{-1}(v)) = 2 \cdot \scost(\sigma)\] is an ordinal
			potential function for the S-ADG. Thus, the social optimum $\sigma^*$ is also a global optimum for $\Phi$. It follows that $\sigma^*$ is an equilibrium and so the $\hpos$ is $1$.
		\end{proof}
The next lemma is used to show the existence of a SE for the S-ADG that is also a social optimum.

\begin{lemma}\label{lm:S_ADG_path_lower_bound}
Let $\sigma$ be a strategy profile for an S-ADG played on a path. Then $\scost(\sigma) \geq \frac{1}{2}(t(2)-t(1))+\frac{1}{2}(t(n)-t(n-1))+(t(n)-t(1))$.
\end{lemma}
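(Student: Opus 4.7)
The plan is to rewrite $\scost(\sigma)$ in a convenient closed form and then perform a short case analysis on the positions of the two extreme-type agents. First, I denote by $t_j$ the type of the agent at the $j$-th node of the path (from left to right) and by $d_j = |t_{j+1} - t_j|$ the type-difference across the $j$-th edge. The cost of each endpoint equals $d_1$ (respectively $d_{n-1}$), while an internal agent at node $j$ pays $(d_{j-1}+d_j)/2$. Collecting like terms yields the identity
\[
\scost(\sigma) = \tfrac{1}{2}d_1 + \tfrac{1}{2}d_{n-1} + S, \quad \text{where } S := \sum_{j=1}^{n-1} d_j.
\]
Note that $S$ is the total variation of the walk $t_1, t_2, \ldots, t_n$ that visits each value $t(i)$ exactly once.

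Next, let $L$ be an agent of type $t(1)$ and $R$ an agent of type $t(n)$, placed at positions $p$ and $q$; I may assume WLOG $p < q$ (the opposite case follows by reversing the path, which swaps $d_1$ and $d_{n-1}$ but leaves both $S$ and the claimed bound invariant). Splitting the walk at positions $p$ and $q$ and applying the triangle inequality to each of the three resulting segments gives
\[
S \geq (t_1 - t(1)) + (t(n) - t(1)) + (t(n) - t_n).
\]

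I then run a four-case analysis on whether $p = 1$ and whether $q = n$. Two observations drive it: (i) if $p > 1$ then the agent at position $1$ is not $L$, so $t_1 \geq t(2)$ and hence $t_1 - t(1) \geq t(2) - t(1)$, and symmetrically $t(n) - t_n \geq t(n) - t(n-1)$ when $q < n$; (ii) if $p = 1$ then the unique neighbor of $L$ has type at least $t(2)$, so $d_1 \geq t(2) - t(1)$, and symmetrically $d_{n-1} \geq t(n) - t(n-1)$ when $q = n$. In each of the four resulting cases, combining the bounds on $\tfrac{1}{2}d_1$, $\tfrac{1}{2}d_{n-1}$ and $S$ yields at least the claimed target.

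I expect the main obstacle to be the bookkeeping in the case split: depending on whether each extremum sits at a path endpoint, its contribution $\tfrac{1}{2}(t(2)-t(1))$ or $\tfrac{1}{2}(t(n)-t(n-1))$ must be supplied either from a boundary term $\tfrac{1}{2}d_1$ or $\tfrac{1}{2}d_{n-1}$, or from the extra variation forced into $S$. The tightest case is when neither extremum sits at an endpoint: then both extra halves have to come entirely from the triangle-inequality bound on $S$, and checking that this bound is just strong enough is the crux of the argument.
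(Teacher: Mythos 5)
Your proof is correct, and it reaches the bound by a genuinely different decomposition than the paper's. The paper restricts attention to the subpath between the positions of agents $1$ and $n$, discards all other agents (their costs are nonnegative), lower-bounds the cost of each of the two extreme agents by $\tfrac12(\text{edge to its inner neighbor})+\tfrac12(t(2)-t(1))$ (resp.\ $+\tfrac12(t(n)-t(n-1))$) --- using that \emph{every} neighbor of agent $1$ is at type-distance at least $t(2)-t(1)$ --- and telescopes the inner agents' costs to extract $t(n)-t(1)$; no case distinction is needed. You instead establish the exact global identity $\scost(\sigma)=\tfrac12 d_1+\tfrac12 d_{n-1}+S$ with $S=\sum_{j=1}^{n-1} d_j$, bound the total variation $S$ by splitting the walk at the two extreme positions, and then source the two half-terms from either the boundary costs or the surplus variation via a four-way case split. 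Both arguments rest on the same two facts (telescoping between the extremes yields $t(n)-t(1)$; any agent adjacent to, or displacing, an extreme agent supplies the corresponding half-gap), but your exact identity makes the accounting fully transparent and is a reusable observation (the paper implicitly re-derives the same telescoping in its other path arguments), at the price of the case analysis. One small point to tighten: in observation (i), take $L$ to be agent $1$ itself rather than ``an agent of type $t(1)$''; otherwise, when several agents share the minimum type, the agent at position $1$ could be a different type-$t(1)$ agent --- harmless, since then $t(2)=t(1)$ and $t_1\geq t(2)$ holds trivially, but worth a word.
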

\begin{proof}
Let $i$ and $j$ be the indices of the nodes where agents 1 and $n$ have been placed in $\sigma$, respectively. W.l.o.g., we assume that $i < j$.
Let $\ell_1,\ldots,\ell_k$ denote the indices of the agents that are placed, in order, in the nodes of the path, starting from node $i$ and ending in node $j$. So, agent $\ell_s$, with $s \in [k]$, occupies node $i+s-1$. We denote by $t_s=t(\ell_s)$ the type of agent $\ell_s$.
As the distance of agent 1 from any other agent is at least $t(2)-t(1)$, we can lower bound the cost of agent 1 by the following.
\begin{equation}\label{eq:cost_1}
\scost_1(\sigma)\geq \frac{1}{2}(t_2-t_1)+\frac{1}{2}(t(2)-t(1)).
\end{equation}
Similarly, as the distance of agent 1 from any other agent is at least $t(2)-t(1)$, we can lower bound the cost of agent 1 by the following.
\begin{equation}\label{eq:cost_n}
\scost_n(\sigma) \geq \frac{1}{2}(t_n-t_{n-1})+\frac{1}{2}(t(n)-t(n-1)).
\end{equation}
Thanks to~\autoref{eq:cost_1} and~\autoref{eq:cost_n}, we have
\begin{eqnarray*}
\sum_{s=1}^k \scost_s(\sigma) & \geq & \frac{1}{2}(t_2-t_1)+\frac{1}{2}(t(2)-t(1))\\
& & \, +\frac{1}{2}(t_n-t_{n-1})+\frac{1}{2}(t(n)-t(n-1)) \\
& & \, + \frac{1}{2}\sum_{s=2}^{k-1}\big((t_{s}-t_{s-1})+(t_{s+1}-t_{s})\big)\\
& = & \frac{1}{2}(t_2-t_1)+\frac{1}{2}(t(2)-t(1)) + \\
& & \, +\frac{1}{2}(t_n-t_{n-1})+\frac{1}{2}(t(n)-t(n-1)) \\
& & \, +\sum_{s=1}^{k-1}(t_{s+1}-t_{s}) \\
& & \, - \frac{1}{2}(t_2-t_1)-\frac{1}{2}(t_k-t_{k-1})\\
& = & \frac{1}{2}(t(2)-t(1))+\frac{1}{2}(t(n)-t(n-1))\\
& & \, + (t(n)-t(1)). \hspace*{3cm}\qedhere
\end{eqnarray*}
\end{proof}

\begin{theorem}\label{PoS-AVG-path}
The $\hpos$ for the S-ADG on paths is 1.
\end{theorem}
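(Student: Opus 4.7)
The plan is to exhibit the sorted strategy profile $\sigma$ in which agent $i$ is placed on the $i$-th node of the path for every $i\in[n]$, and show that $\sigma$ is simultaneously a social optimum and a SE.

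For the social-optimum claim, I compute $\scost(\sigma)$ directly: agents $1$ and $n$ (the path endpoints) contribute $t(2)-t(1)$ and $t(n)-t(n-1)$ respectively, while every interior agent $i$ contributes $\frac{1}{2}(t(i+1)-t(i-1))$. Telescoping the interior sum gives $\sum_{i=2}^{n-1}\frac{1}{2}(t(i+1)-t(i-1)) = \frac{1}{2}\left((t(n)-t(1))+(t(n-1)-t(2))\right)$, and a brief simplification shows that $\scost(\sigma)$ equals the lower bound $\frac{1}{2}(t(2)-t(1))+\frac{1}{2}(t(n)-t(n-1))+(t(n)-t(1))$ provided by \autoref{lm:S_ADG_path_lower_bound}. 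Hence $\sigma$ is a social optimum.

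To argue that $\sigma$ is also a SE, I assume towards contradiction that some pair $i<j$ performs a profitable swap, and distinguish two cases. In the non-adjacent case $j\ge i+2$, nodes $i\pm 1$ and $j\pm 1$ are all distinct and still occupied by their original agents, so a direct expansion yields $(\cost{i}(\sigma_{ij}) + \cost{j}(\sigma_{ij})) - (\cost{i}(\sigma) + \cost{j}(\sigma)) = (t(j-1)+t(j)) - (t(i)+t(i+1))$ when both swappers are interior, and analogous non-negative expressions at the endpoints (e.g.\ $t(j-1)+t(j)-2t(2)$ when $i=1$, or $2t(n-1)-2t(2)$ when additionally $j=n$). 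By monotonicity of $t$ these quantities are all $\ge 0$, so the two costs cannot both strictly decrease, contradicting profitability.

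In the adjacent case $j=i+1$, writing out the profitability conditions explicitly yields $t(i+2)-t(i)<t(i)-t(i-1)$ for agent $i$ and $t(i+1)-t(i-1)<t(i+2)-t(i+1)$ for agent $j$. Rewriting the first as $t(i+2)<2t(i)-t(i-1)$ and the second as $t(i+2)>2t(i+1)-t(i-1)$, and chaining them gives $2t(i+1)<2t(i)$, contradicting the sorted order; boundary subcases ($i=1$ or $j=n$) reduce analogously to impossibilities like $t(3)<t(2)$. The main obstacle is the endpoint bookkeeping in the non-adjacent case, where the degree-$1$ nodes break the clean symmetry of the cost-sum identity, but in each subcase monotonicity of $t$ salvages the required non-negativity.
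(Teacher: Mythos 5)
Your proof is correct and follows essentially the same route as the paper: exhibit the sorted profile, verify it attains the lower bound of \autoref{lm:S_ADG_path_lower_bound} (hence is a social optimum), and check it is a SE. The only difference is that you spell out the swap case analysis (adjacent vs.\ non-adjacent, with endpoint subcases) that the paper dismisses as ``easy to observe''; your computations there check out.
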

\begin{proof}
Let $\sigma$ be the strategy profile in which agent $i$ is placed on the $i$-th node of the path, where the nodes of the path are numbered from 1 to $n$ according to their left-to-right ordering.

We prove that $\sigma$ is a SE. Let $i$ and $j$, with $i < j$ be any two agents. Observe that the cost of agent 1 (resp., $n$) in any strategy profile $\sigma'$ is lower bounded by $t(2)-t(1)$ (resp., $t(n)-t(n-1)$). Hence neither $1$ nor $n$ wants to swap in $\sigma$. Therefore, we only need to consider the case in which $i > 1$ and $j < n$. It is easy to observe that at least one of the two agents $i$ and $j$ does not want to swap.

Let $\sigma'$ be any strategy profile.
By~\autoref{lm:S_ADG_path_lower_bound}, we have
\begin{eqnarray*}
\scost(\sigma) & = & (t(2)-t(1))+(t(n)-t(n-1))\\
& & \,  +\frac{1}{2}\sum_{i=2}^{n-1} \big((t(i)-t(i-1))+(t(i+1)-t(i))\big)\\
& = & \frac{1}{2}(t(2)-t(1))+\frac{1}{2}(t(n)-t(n-1))\\
& & \,+ (t(n)-t(1))\\
& \geq & \scost(\sigma'). \hspace*{4.5cm}\qedhere
\end{eqnarray*}
\end{proof}

    \subsubsection{Cutoff Game}

	\begin{theorem} \label{thm:Swap_CG_PoS}
		The $\hpos$ for the S-CG on $\Delta$-reg. graphs is~$1$.
	\end{theorem}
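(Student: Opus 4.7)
The plan is to reuse the ordinal potential function from Proposition \ref{thm:Swap_CG_Ex_almostreg} and exploit the fact that on a $\Delta$-regular graph the social cost and this potential are monotonically related, so any social optimum is automatically a swap equilibrium.

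Specifically, I would first observe that on a $\Delta$-regular graph $G = (V,E)$, every agent has exactly $\Delta$ neighbors in every strategy profile $\sigma$, so the social cost admits the closed form
\[
\scost(\sigma) \;=\; \sum_{i \in [n]} \frac{|N_i^-(\sigma)|}{\Delta} \;=\; \frac{2}{\Delta}\,\bigl|\{\{u,v\}\in E : \sigma^{-1}(u)\in N^-_{\sigma^{-1}(v)}(\sigma)\}\bigr|,
\]
since each non-monochromatic edge contributes $1$ to exactly two agents' enemy counts. Combining this with the potential function $\Phi(\sigma) = |\{\{u,v\}\in E : \sigma^{-1}(u)\in N^+_{\sigma^{-1}(v)}(\sigma)\}|$ from Proposition \ref{thm:Swap_CG_Ex_almostreg} yields the identity
\[
\Phi(\sigma) \;=\; |E| - \frac{\Delta}{2}\,\scost(\sigma).
\]

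Next, I would conclude by the same argument as in Theorem \ref{thm:Swap_ADG_PoS}: let $\sigma^*$ be a social optimum. Since $\scost$ is minimized by $\sigma^*$ and $\Phi$ is an affine decreasing function of $\scost$ on $\Delta$-regular graphs, $\sigma^*$ is a global maximizer of $\Phi$. Because $\Phi$ is an ordinal potential function (as proved in Proposition \ref{thm:Swap_CG_Ex_almostreg}, whose proof covers the $\Delta$-regular case as a subcase of almost $\Delta$-regular graphs), any profile admitting an improving swap would strictly increase $\Phi$, contradicting global maximality. Hence $\sigma^*$ is itself a SE, giving $\hpos = 1$.

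The main (and essentially only) obstacle is noticing the affine correspondence between $\Phi$ and $\scost$ in the regular case — once this is in place, no further calculation is needed because all the heavy lifting (that $\Phi$ is an ordinal potential) is already done in Proposition \ref{thm:Swap_CG_Ex_almostreg}. Note that this argument genuinely uses regularity and does not extend to the almost $\Delta$-regular case, where the per-agent normalization varies between $\Delta$ and $\Delta+1$ and so $\Phi$ is no longer an affine function of $\scost$.
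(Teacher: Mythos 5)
Your proposal is correct and follows essentially the same route as the paper: both establish the affine identity between the social cost and the monochromatic-edge potential $\Phi$ on $\Delta$-regular graphs (your $\Phi(\sigma)=|E|-\frac{\Delta}{2}\scost(\sigma)$ is the paper's $\scost(\sigma)=n-\frac{2}{\Delta}\Phi(\sigma)$ rewritten), and both conclude that a social optimum globally maximizes the ordinal potential and is therefore a swap equilibrium. No gaps.
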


	\begin{proof}
		By~\autoref{thm:Swap_CG_Ex_almostreg}, we have shown that \[\Phi(\sigma) = \lvert \{u,v\}\in E \colon \sigma^{-1}(u)\in N_{\sigma^{-1}(v)}^+(\sigma)\}\lvert\] is an ordinal potential function in S-CGs.

        On the other hand, we have
        \begin{eqnarray*}
        \scost(\sigma) & = & \sum_{i\in [n]}\cost i(\sigma)\\
        & = & \sum_{i\in [n]}\frac{|N_i(\sigma)|-|N^+_i(\sigma)|}{|N_i(\sigma)|}\\
        & = & n-\frac{1}{\Delta}\sum_{i\in [n]}|N^+_i(\sigma)|\\
        & = & n-\frac{2}{\Delta}\Phi(\sigma).
        \end{eqnarray*}
        So, any global maximum of  $\Phi$, which corresponds to a SE, is also a global minimum of the social cost.
	\end{proof}

\begin{theorem}\label{thm:Swap_CP_PoS_path}
The $\hpos$ for the S-CG on paths is at most 2.
\end{theorem}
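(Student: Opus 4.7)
The plan is to consider the sorted strategy profile $\sigma$ in which agent $i$ occupies the $i$-th node of the path, to show that $\sigma$ is a SE, and then to argue that $\scost(\sigma)\le 2\scost(\sigma^*)$ for any social optimum $\sigma^*$.

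First I would verify that $\sigma$ is a SE by ruling out profitable swaps. Fix a candidate swap of $i<j$. After the swap, agent $i$ occupies node $j$ and so has neighbors among $\{j-1,j+1\}\cap[n]$, whose types are at least $t(i+1)$, i.e., at least as far in type from $t(i)$ as her former right neighbor; by symmetry, agent $j$'s new neighbors have types at most $t(j-1)$. A short case analysis on the friend/enemy status of the old neighbors shows that for agent $i$ to strictly benefit, $i-1$ must be an enemy, $i+1$ must be a friend, and both new neighbors must be friends, which forces $t(j{\pm}1)-t(i)\le\lambda$. A symmetric analysis on $j$ forces $t(j)-t(i{-}1)\le\lambda$. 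These are incompatible: $t(i)-t(i-1)>\lambda$ together with $t(j)\ge t(i)$ gives $t(j)-t(i-1)>\lambda$, so $j$ and $i-1$ remain enemies after the swap, contradicting $j$'s improvement. The boundary subcases $i=1$ or $j=n$ are easier: an endpoint with old cost $0$ cannot strictly improve, while an endpoint with old cost $1$ only acquires post-swap neighbors further in type from itself.

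Next I would express the social cost of $\sigma$ in closed form in terms of $K=|\{j\in[n-1]\colon t(j+1)-t(j)>\lambda\}|$, the number of \emph{large} gaps in the sorted type sequence. Writing $e_j\in\{0,1\}$ for the indicator that the $j$-th sorted gap exceeds $\lambda$, each agent's cost depends on its two adjacent indicators, and a direct summation yields
\[
\scost(\sigma) \;=\; K + \tfrac{1}{2}(e_1+e_{n-1}) \;\le\; K+1.
\]
For the lower bound on $\scost(\sigma^*)$, I would use a clustering argument. The $K$ large sorted gaps partition $[n]$ into $K+1$ \emph{clusters} of consecutive-by-type agents, and any two agents from different clusters have type-distance strictly greater than $\lambda$, so they are always an enemy pair. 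Reading the path of $\sigma^*$ left to right yields a sequence of $n$ cluster labels that uses all $K+1$ clusters; hence the number of positions where consecutive labels differ is at least $K$. The same per-edge accounting as before then gives $\scost(\sigma^*)\ge K$. Combining, $\scost(\sigma)/\scost(\sigma^*)\le (K+1)/K\le 2$ when $K\ge 1$, and both costs vanish when $K=0$.

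The main obstacle will be the SE verification: once the monotonicity $t(j{\pm}1)\ge t(i+1)$ and its dual are set up, the case analysis is short, but one has to identify the single productive contradiction (namely that $i-1$ cannot simultaneously be an enemy of $i$ and a friend of $j$) and separately handle the boundary subcases. After that, the closed-form accounting of $\scost(\sigma)$ together with the cluster-based lower bound on $\scost(\sigma^*)$ immediately yields the factor-$2$ bound.
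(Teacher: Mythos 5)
Your proof is correct and takes essentially the same route as the paper: place the agents in sorted type order, verify that this profile is a SE, and compare the number of above-cutoff edges in the sorted path against the number forced in any other profile. The paper obtains the lower bound by viewing the sorted path as a minimum spanning tree of the complete type-distance graph, whereas you count cluster-label changes along the path (the same underlying fact); your sharper accounting $\scost(\sigma)\le K+1$ (versus the paper's $k+2$) lets you avoid the paper's separate treatment of the case $k=1$, and you additionally spell out the SE verification that the paper dismisses as an easy observation.
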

\begin{proof}
Let $\sigma$ be the strategy profile in which agent $i$ is placed on the $i$-th node of the path, where the nodes of the path are numbered from 1 to $n$ according to their left-to-right ordering. One can easily observe that $\sigma$ is a SE.

Consider the complete graph on the agents, where the weight of the edge between agent $i$ and agent $j$ is equal to $d(i,j)=|t_i-t_j|$. We observe that the path associated with any strategy profile $\sigma'$ is a Hamiltonian path, say $H'$, of the weighted graph. Moreover, the path associated with the strategy profile $\sigma$ is also a minimum spanning tree, say $H$, of our weighted graph. Let $k$ be the number of edges of $H$ whose weights are above the cutoff threshold of $\lambda$. By the property of minimum spanning trees, we have that $H'$ contains at least $k$ edges whose weights are above the cutoff threshold of $\lambda$.

This implies that the cost of $\sigma'$ can be lower bounded by $k$ as each of the $k$ expensive edges contributes by at least $1$ in $\scost(\sigma')$ (more precisely, there are two occurrences of the edge, one for each of its end-nodes, and each occurrence contributes to the cost of the end-node with at least $1/2$).

In the following, we only consider the case in which $k\geq 2$. In fact, when $k=0$, we have that $\scost(\sigma)=0$, so $\sigma$ is an optimal strategy profile. Furthermore, when $k=1$, we have that $3/2$ is an upper bound on the cost of $\sigma$, unless $n=2$, in which case $\sigma$ would be again an optimal strategy profile.

The cost of $\sigma$ can be upper bounded by $k+2$, where the bound is met when the two end-nodes of the path both have a cost of $1$. As $\sigma'$ is any strategy profile and $k\geq 2$, the $\hpos$ can be upper bounded by the following
\[
\frac{\scost(\sigma)}{\scost(\sigma')} \leq \frac{k+2}{k} \leq 2.
\qedhere\]
\end{proof}

\section{Omitted Details from~\Cref{sec:experiments}}\label{apx:experiments}
For the swap variant of the CG, we additionally investigated the influence of the density of the underlying $\Delta$-regular graph on the obtained equilibria. For this,~\autoref{fig:equicutDelta} shows equilibria for different values of $\Delta$.
\begin{figure}[h]
    \centering
    \begin{subfigure}{0.48\linewidth}
        \centering
        \includegraphics[width=0.7\textwidth]{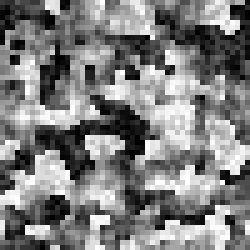}
        \caption{4-regular toroidal grid\label{fig:4reg}}
    \end{subfigure}
    \hfil
    \begin{subfigure}{0.48\linewidth}
        \centering
        \includegraphics[width=0.7\textwidth]{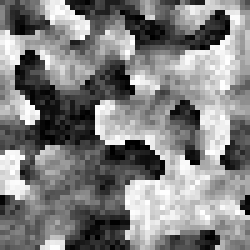}
        \caption{8-regular toroidal grid\label{fig:8reg}}
    \end{subfigure}

    \medskip

    \begin{subfigure}{0.48\linewidth}
        \centering
        \includegraphics[width=0.7\textwidth]{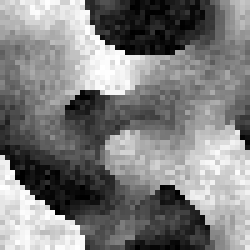}
        \caption{24-regular toroidal grid\label{fig:24reg}}
    \end{subfigure}
    \hfil
    \begin{subfigure}{0.48\linewidth}
        \centering
        \includegraphics[width=0.7\textwidth]{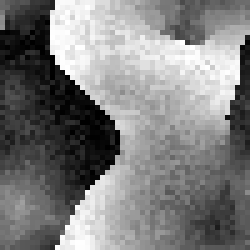}
        \caption{48-regular toroidal grid\label{fig:48reg}}
    \end{subfigure}
    \caption{SE in the S-CG with cutoff $\lambda = 0.2$ in regular toroidal grid
    graphs with a different degree. In \textbf{(a)}, every pixel is connected to
other pixels sharing an edge. For graphs \textbf{(b)}, \textbf{(c)}, and
\textbf{(d)}, every pixel is connected to every other pixel within a square with
side length 3, 5, and 7 around itself. All graphs live on a torus.\label{fig:equicutDelta}}
\end{figure}
We find that with growing $\Delta$ also the segregation of the equilibria increases.

The influence on the cutoff value $\lambda$ on the obtained swap equilibria is depicted in~\autoref{fig:lambda}.
\begin{figure}[h]
    \centering
    \begin{subfigure}{0.48\linewidth}
        \centering
        \includegraphics[width=0.7\textwidth]{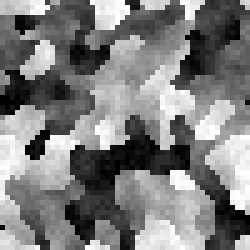}
        \caption{$\lambda = 0.1$\label{fig:0.1}}
    \end{subfigure}
    \hfil
    \begin{subfigure}{0.48\linewidth}
        \centering
        \includegraphics[width=0.7\textwidth]{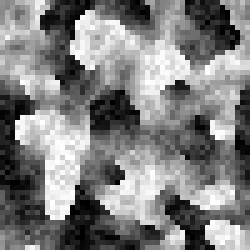}
        \caption{$\lambda = 0.2$\label{fig:0.2}}
    \end{subfigure}

    \medskip

    \begin{subfigure}{0.48\linewidth}
        \centering
        \includegraphics[width=0.7\textwidth]{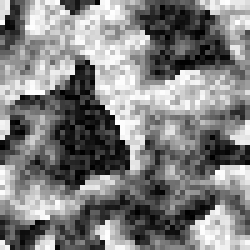}
        \caption{$\lambda = 0.3$\label{fig:0.3}}
    \end{subfigure}
    \hfil
    \begin{subfigure}{0.48\linewidth}
        \centering
        \includegraphics[width=0.7\textwidth]{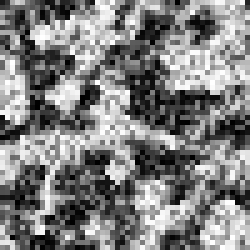}
        \caption{$\lambda = 0.5$\label{fig:0.5}}
    \end{subfigure}
    \caption{\label{fig:lambda}Swap Equilibria for different $\lambda$-values in the
    CG.}
\end{figure}
For this, we see the expected result that with increasing cutoff value $\lambda$, i.e., if agents get more and more tolerant of who they accept as friends, the obtained segregation decreases.

\end{document}